\documentclass[a4paper,11pt]{article}
\usepackage[
  top=1in,
  bottom=1in,
  left=1in,
  right=1in
]{geometry}
\usepackage[utf8]{inputenc}
  
\usepackage{graphicx}
\usepackage{pstricks}
\usepackage{appendix}
\usepackage{dsfont}
\usepackage{amsmath}
\usepackage{amssymb}
\usepackage{amsthm}
\usepackage{mathtools}
\usepackage{comment}
\usepackage[shortlabels]{enumitem}
\usepackage{tikz,lipsum,lmodern}
\usepackage[most]{tcolorbox}
\usepackage{bbm}
\usepackage{algorithm}
\usepackage{algpseudocode} 
\usepackage{float}  
\usepackage{enumitem}
\usepackage{amsmath}
\usepackage{listings}
\usepackage{verbatim}
\usepackage{soul}
\usepackage{subcaption}

 \usepackage{natbib}   
\newcommand{\explain}{\nn&\quad\blacktriangleright}
\newcommand{\nn}{\nonumber}

\usepackage{xcolor}
\usepackage{blindtext}

\usepackage[nottoc]{tocbibind}

\usepackage[dvipsnames]{xcolor}
\usepackage[
    linktoc=page,
    colorlinks=true,
    linkcolor=RoyalBlue,
    citecolor=BrickRed,
    urlcolor=ForestGreen
]{hyperref}
\usepackage{amsthm}

\newtheorem{theorem}{Theorem}[section]

\newtheorem{corollary}{Corollary}[section]
\newtheorem{lem}{Lemma}[section]

\newtheorem{rem}{Remark}[section]
\newtheorem{assumption}{Assumption}[section]

 \numberwithin{equation}{section}

\DeclareMathOperator{\diag}{diag}

\newcommand{\sign}{\mathrm{sign}}
\allowdisplaybreaks

\title{One-Bit Clustering for Two Component Sub-Gaussian Mixture Models}

\begin{document}
\author{Junren Chen\thanks{University of Maryland, College Park.} \and Yun Yang\thanks{University of Maryland, College Park.}}

\maketitle

\begin{abstract}
  Clustering is a fundamental problem in statistics and machine learning. We propose the first one-bit clustering method for two-component sub-Gaussian mixture models. The method uses only one bit per entry of each sample obtained via a dithered quantizer. Under a mild non-spikiness condition on the cluster centers, we show that a variant of Lloyd’s algorithm achieves a misclassification rate that decays exponentially with a signal-to-noise ratio comparable to that in the unquantized setting. This result further implies exact recovery under an explicit separation condition, which exceeds the optimal threshold for unquantized data by only a logarithmic factor. When the dimension $p$ is sufficiently large, the non-spikiness condition can be enforced by applying a random rotation using a Haar distributed matrix prior to quantization. In particular, it holds with high probability when $p \gtrsim 1$ for partial recovery and $p \gtrsim \log n \log\log n$ for exact recovery, where $n$ is the sample size. We also establish a minimax lower bound, showing that the misclassification rate and separation condition exhibit sharp constants in general. Numerical results are provided to corroborate the theory and demonstrate the efficacy of the proposed method.  
\end{abstract}

\section{Introduction}

 Clustering is arguably a fundamental problem in statistics, signal processing, and machine learning. 
Suppose that the samples $\{X_i\}_{i=1}^n \in \mathbb{R}^p$ come from a few (unknown) centers $\{\theta_i\}_{i=1}^K\subset \mathbb{R}^p$, the goal of clustering is to learn the labels and group the samples from the same center together. The classical setting requires access to the samples $X_i \in \mathbb{R}^p$, which may be unrealistic in some modern applications such as distributed learning where the communication cost can be prohibitive and low-bit data may be used instead. This consideration raises a fundamental question:
\[
\textit{Is accurate or even exact clustering possible under one-bit quantization?}
\]
Specifically, in this paper we ask whether accurate clustering is possible using one bit per entry. While similar questions have been studied in compressed sensing \cite{jacques2013robust,plan2012robust,dirksen2021non}, matrix completion \cite{davenport20141,chen2022high}, mean estimation \cite{kipnis2022mean,cai2022distributed}, covariance estimation \cite{dirksen2022covariance,chen2025parameter}, and phase retrieval \cite{domel2022phase,chen2024one}, among others, to the best of our knowledge, this question remains unexplored in the context of clustering.


Our paper provides an affirmative answer to this question in the setting of a two-component symmetric sub-Gaussian mixture model centered at $\pm \theta$ for some unknown $\theta \in \mathbb{R}^p$:
\begin{align}\label{model2mixture}
    X_i = \eta_i \theta + \varepsilon_i,\quad i = 1,2,\cdots, n,
\end{align}
where $\eta = (\eta_1,\cdots,\eta_n)^T \in \{-1,1\}^n$ are the labels to be estimated, and $\varepsilon_i$ are sub-Gaussian noise vectors (see Assumption \ref{subgaussian}). We develop an estimator $\hat{\eta}$ that uses only one bit per entry from $\{X_i\}_{i=1}^n$ and achieves low, or even zero, misclassification rate
\begin{align*}
    \ell(\hat{\eta},\eta) = \frac{1}{n}\min\bigg\{\sum_{i=1}^n \mathbf{1}(\hat{\eta}_i \ne \eta_i),\;\sum_{i=1}^n \mathbf{1}(\hat{\eta}_i \ne -\eta_i)\bigg\},
\end{align*}
referred to as \emph{partial recovery} and \emph{exact recovery} \cite{giraud2019partial,fei2018hidden,chen2021cutoff,ndaoud2022sharp}, respectively.

Our quantization scheme uses uniform dithering, that is, we add uniform ``noise'' distributed as ${\rm Unif}[-\lambda,\lambda]$ to the samples before applying the one-bit sign quantizer. We refer to $\lambda$ as the dithering level. The benefits of dithering were observed early \cite{roberts1962picture} and have recently received significant attention in the statistical estimation literature \cite{dirksen2022covariance,chen2022high,xu2020quantized}. We will further elucidate the intuition and trade-offs underlying this quantization scheme. We also argue that direct quantization without dithering does not retain sufficient information for clustering in sub-Gaussian mixtures (cf. Remark \ref{rem:dither}).


The work most relevant to our paper is \cite{ndaoud2022sharp}, which establishes the minimax-optimal misclassification rate for two symmetric Gaussian mixtures, that is, model~\eqref{model2mixture} with $\varepsilon_i \sim N(0,\sigma^2 I_p)$:
\begin{align}\label{sharppari}
      \exp\Big(-\frac{1+o(1)}{2}r_{n,\sigma}^2\Big)\le \inf_{\hat{\eta}}\sup_{\eta \in\{-1,1\}^n}\mathbb{E}[\ell(\hat{\eta},\eta)] \le \exp\Big(-\frac{1-o(1)}{2}r_{n,\sigma}^2\Big)
\end{align}
under $r_{n,\sigma}\gtrsim 1$, where $r_{n,\sigma}:=\frac{\|\theta\|_2^2/\sigma^2}{\sqrt{\|\theta\|_2^2/\sigma^2+p/n}}$ is the signal-to-noise ratio (SNR). Note that the sharp constant in the exponent is characterized, and the upper bound is achieved by a variant of Lloyd's algorithm \cite{peng2007approximating,lu2016statistical}. 
The bound~\eqref{sharppari} further implies exact recovery under the sharp separation condition
\begin{align}
    \|\theta\|_2^2 \ge (1+\epsilon)\sigma^2\bigg(1+\sqrt{1+\frac{2p}{n\log n}}\bigg)\log n.\label{existsepara}
\end{align}
Some of these sharp results have been extended to mixtures with more than two components using semidefinite programming \cite{chen2021cutoff} and spectral clustering \cite{loffler2021optimality,abbe2022lp}.

We establish similar results under dithered one-bit quantization. In a nutshell, our estimator $\hat{\eta}_{1b}$, computed via a variant of Lloyd's algorithm with a small number of iterations, uses one bit per entry of $\{X_i\}_{i=1}^n$ and attains the misclassification rate
$
\ell(\hat{\eta}_{1b},\eta)\le \exp\Big(-\frac{(1-o(1))r_{n,\lambda}^2}{2}\Big)
$
under $r_{n,\lambda}\gtrsim 1$, where $r_{n,\lambda}:=\frac{\|\theta\|_2^2/\lambda^2}{\sqrt{\|\theta\|_2^2/\lambda^2+p/n}}$ is our new signal-to-noise ratio with $\sigma$ in $r_{n,\sigma}$ replaced by the dithering level $\lambda$. Under $\varepsilon_i\sim N(0,\sigma^2)$, $\lambda$ can be chosen on the order of $\sigma\sqrt{\log (np)}$, only a logarithmic factor larger than $\sigma$. Hence, our rate is only slightly worse than \eqref{sharppari}, indicating that the quantization incurs little information loss.
The misclassification bound immediately implies exact recovery under an appropriate separation condition, in contrast to most existing one-bit estimation results that only guarantee approximate recovery (cf. Remark \ref{rem:exact}). We also establish a lower bound showing that our misclassification rate exhibits a sharp constant in general. While the developments appear parallel to \cite{ndaoud2022sharp}, substantially different and new techniques are required to establish these results. Moreover, there is an additional subtle interaction among the quantization, dimension, and the spikiness of the center (cf. Remarks \ref{rem:spiky1}, \ref{rem:spiky2}), which appears to be a novel phenomenon in the area of one-bit learning and motivates a Haar matrix preprocessing step.


\noindent
\textbf{Notation.} Let $[m]=\{1,2,\cdots,m\}$ for an integer $m$. 
Let $\sign(a)=1$ for $a\ge 0$ and $\sign(a)=-1$ otherwise; this is applied elementwise to vectors. Let $\|v\|_\infty=\max_{i}|v_i|$ and $\|v\|_q=(\sum_i |v_i|^q)^{1/q}$ denote the max norm and $\ell_q$ norm of a vector $v\in \mathbb{R}^n$, respectively, and let $\|M\|_{op}$ denote the operator norm of a matrix $M$. 
We use $C,c,C_0,C_1,C',\cdots$ to denote universal constants whose values may vary from line to line. We write $T_1=O(T_2)$ (or $T_1\lesssim T_2$) to denote $T_1\le C T_2$, and write $T_1=\Omega(T_2)$ (or $T_1\gtrsim T_2$) to denote $T_1\ge C T_2$. We also write $T_1=\Theta(T_2)$ (or $T_1\asymp T_2$) if both $T_1=O(T_2)$ and $T_1=\Omega(T_2)$ hold.
The sub-Gaussian norm of a random variable $X$ is defined as $\|X\|_{\psi_2}=\inf\{K>0:\mathbb{E}[\exp(X^2/K^2)]\le 2\}$. More notation will be introduced as needed.

\noindent
\textbf{Overview.} In Section \ref{sec:main}, we first introduce our quantization procedure and clustering algorithm, and then present our one-bit clustering recovery guarantees, along with an overview of the technical proofs. In Section \ref{sec:lower}, we establish a minimax lower bound to demonstrate the sharpness of our results. Section \ref{sec:experi} provides numerical examples to corroborate our theory, and Section \ref{sec:conclu} concludes the paper with several remarks. The complete proofs and most technical lemmas are deferred to the appendix.

\section{Main Results}\label{sec:main}
Throughout the paper, we adopt the sub-Gaussian mixture model~\eqref{model2mixture} and impose the following assumption.
\begin{assumption}\label{subgaussian}
    $\{\varepsilon_i\}_{i=1}^n$ are i.i.d.\ noise vectors with independent, symmetric, zero-mean entries satisfying the $\sigma$-sub-Gaussian condition
$\mathbb{E}[\exp(t\varepsilon_{ij})] \le \exp\!\big(\tfrac{t^2\sigma^2}{2}\big)$ for all $t\ge 0$ and some $\sigma>0$.
\end{assumption}

We collect the unquantized samples in $Y:=[X_1,\cdots,X_n]$. Independent of $\{\varepsilon_i\}_{i=1}^n$, we draw dithers $\{\tau_i\}_{i=1}^n$ with entries i.i.d.\ uniformly distributed over $[-\lambda,\lambda]$, that is, $\tau_i\sim{\rm Unif}[-\lambda,\lambda]^p$, and quantize $X_i$ to $\dot{X}_i = \sign(X_i+\tau_i)\in \{-1,1\}^{p}$. This dithered quantization scheme has been used to address mean and covariance estimation under coarse quantization \cite{abdalla2026robust,dirksen2022covariance,chen2025parameter}. The intuition is that the expectation
\begin{align}
    \mathbb{E}_{\tau\sim {\rm Unif}[-\lambda,\lambda]}[\lambda \sign(a+\tau)]
    =T_{\lambda}(a)
    :=
    \begin{cases}
    a\,,\quad&\textrm{if }|a|\le \lambda,\\
    \lambda\sign(a)\,,\quad& \textrm{if }|a|>\lambda,
    \end{cases}
    \label{bvtrade}
\end{align}
is a truncation of the input $a$ (cf.\ Lemma~\ref{lem:expect}). Therefore, for sufficiently large $\lambda$, $\lambda \dot{X}_i$ can serve as a good surrogate for $X_i$ and retain sufficient information for clustering. We collect all the one-bit samples in a matrix $\dot{Y}=[\dot{X}_1,\cdots,\dot{X}_n]$.


To motivate our algorithm, we note that the minimax optimal procedure in \cite{ndaoud2022sharp} consists of two steps: (i) compute the top eigenvector $\hat{v}$ of $H(Y^TY)$ and set $\eta^0=\sign(\hat{v})$, where the hollowing operator $H:\mathbb{R}^{n\times n}\to\mathbb{R}^{n\times n}$ removes the diagonal of a square matrix, that is, $H(M)=M-\diag(M)$; (ii) generate a sequence of estimates via a projected power iteration $\eta^{k+1} = \sign(H(Y^TY)\eta^k)$ for $k\ge 0$.\footnote{The $\sign$ here is a retraction that maps an estimate back to the parameter space $\{-1,1\}^n$.} Our observation is that the hollowed Gram matrix $H(Y^TY)$ serves as a sufficient statistic for both steps. Combining this with the intuition that $\lambda \dot{X}_i$ acts as a surrogate for $X_i$, we propose to use $H(\lambda^2 \dot{Y}^T \dot{Y})$ to replace the unavailable $H(Y^TY)$. Since $\lambda^2$ is absorbed into the retraction $\sign(\cdot)$, we arrive at Algorithm~\ref{alg:2lloyd}.

\begin{rem}\label{rem:bias-variance}
   While Algorithm~\ref{alg:2lloyd} does not require the knowledge of $\lambda$, it is a key parameter that affects performance. To reduce bias, $\lambda$ should be large enough to dominate most entries of $\{X_i\}_{i=1}^n$ (cf.\ Equation \eqref{bvtrade}). However, a larger $\lambda$ also induces higher variance in $\lambda \dot{Y}$ and thus slower concentration. From another perspective,  as $\lambda \to \infty$, $\dot{Y}$ converges  to the non-informative ${\rm Unif}\{-1,1\}$ in distribution. Therefore, choosing $\lambda$ involves a bias--variance trade-off. 
\end{rem}
 

\begin{algorithm}[t]   
\caption{One-bit Lloyd's Algorithm for 2-Component Symmetric Sub-Gaussian Mixtures}\label{alg:2lloyd}
\begin{algorithmic}[1]   
  \Require Quantized samples $\dot{Y}=[\dot{X}_1,\cdots,\dot{X}_n]$, iteration number $T_0$  

  \State \textbf{Step 1: Initialization}
  \State Let $\hat{v}$ be the leading eigenvector of $H(\dot{Y}^T\dot{Y})$ and let $\hat{\eta}^0 = \sign(\hat{v})$

  \State \textbf{Step 2: Refinement}
  \For{$k = 0,1,2,\dots,T_0-1$} 
   \begin{align}
          \hat{\eta}^{k+1}=\sign(H(\dot{Y}^T\dot{Y})\hat{\eta}^k) \label{lloydequa}
    \end{align}
  \EndFor

  \Ensure $\hat{\eta}^{T_0}$
\end{algorithmic}
\end{algorithm}

\subsection{Recovery Guarantees}

 Theorem~\ref{maintheorem} below shows that Algorithm~\ref{alg:2lloyd} attains a misclassification rate that decays exponentially with the signal-to-noise ratio $r_{n,\lambda}:= \frac{\|\theta\|_2^2/\lambda^2}{\sqrt{\|\theta\|_2^2/\lambda^2+p/n}}$. An interesting feature is the appearance of the \emph{spikiness} of the center, defined by $\mu(\theta)=\frac{\|\theta\|_{\infty}}{\|\theta\|_2}$. This quantity has previously appeared in the matrix completion literature \cite{negahban2012restricted,davenport20141,davenport2016overview} in connection with incoherence conditions.

\begin{theorem}[Partial recovery]\label{maintheorem}
   For every small $\nu>0$, we assume     
    $ \mu(\theta) =\frac{\|\theta\|_{\infty}}{\|\theta\|_2} \lesssim \frac{1}{1+(p/n)^{1/4}}$. If  
    \begin{align} \label{lambdacon1}
         \lambda \ge \|\theta\|_\infty +\sigma\sqrt{2(1+\nu)\log(np)}
    \end{align} 
    and $1\lesssim r_{n,\lambda}\lesssim \min\{(np)^{2\nu/5},n^{2/5}\}$, then for $t\ge \lceil 3\log_4n\rceil$,    we have 
    \begin{gather} 
    \label{highpbound}
       \mathbb{P}\bigg(\ell(\hat{\eta}^t,\eta)\le \exp\Big(-\frac{(1-\epsilon_{n,\lambda})r_{n,\lambda}^2}{2}\Big)\bigg)\ge 1-r_{n,\lambda}^{-1/2}-2(np)^{-\nu}\,,
    \end{gather}
    for some small enough $\epsilon_{n,\lambda}=O\!\left(\frac{1}{r_{n,\lambda}}+\left(\frac{\log n}{n}\right)^{1/4}\right)$.
\end{theorem}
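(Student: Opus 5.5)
The plan is to reduce the one-bit problem to a "signal plus noise" Gram-matrix analysis in the spirit of \cite{ndaoud2022sharp}. The surrogate $\lambda\dot{X}_i$ plays the role of $X_i$. Write $\lambda\dot{X}_i=\mathbb{E}[\lambda\dot{X}_i\mid \varepsilon_i]+\xi_i$, where the dither noise $\xi_i$ has independent, bounded entries, $|\xi_{ij}|\le 2\lambda$. By \eqref{bvtrade}, $\mathbb{E}[\lambda\dot{X}_{ij}\mid\varepsilon_{ij}]=T_\lambda(\eta_i\theta_j+\varepsilon_{ij})$. Under \eqref{lambdacon1}, a union bound over the $np$ entries with the sub-Gaussian tail gives $\max_{ij}|\varepsilon_{ij}|\le\sigma\sqrt{2(1+\nu)\log(np)}$ with probability at least $1-2(np)^{-\nu}$. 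On that event no truncation occurs, so $\lambda\dot{X}_i=\eta_i\theta+\varepsilon_i+\xi_i=:\eta_i\theta+w_i$. This is the source of the $2(np)^{-\nu}$ term.

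The effective noise $w_i$ has independent, symmetric, zero-mean coordinates. Conditionally on no truncation, $w_{ij}$ is bounded by roughly $2\lambda$, hence $O(\lambda)$-sub-Gaussian. The symmetry of $\varepsilon_{ij}$ and the dither make it centered even though its law depends on $\eta_i\theta_j$. Its variance is at most $\lambda^2$ up to constants: in fact $\mathrm{Var}(\lambda\dot{X}_{ij})=\lambda^2-(\eta_i\theta_j+\varepsilon_{ij})^2\le\lambda^2$. Removing the event conditioning needs a small decoupling argument, for instance comparing with a version where $T_\lambda$ is applied regardless.

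Next, the one-bit samples become a mixture $\lambda\dot{X}_i=\eta_i\theta+w_i$ whose noise proxy is $\lambda$, so $r_{n,\lambda}$ replaces $r_{n,\sigma}$. One caveat is that the $w_i$ are not identically distributed: their covariance depends on $\eta_i$ and on $\theta$ entrywise. I would then follow the two-stage analysis of \cite{ndaoud2022sharp}.

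\textbf{Initialization.} Apply Davis--Kahan to $H(\lambda^2\dot{Y}^T\dot{Y})=\|\theta\|_2^2H(\eta\eta^T)+H(E)$. Bounding $\|H(W^TW)\|_{op}\lesssim\lambda^2(\sqrt{np}+n)$ and the cross terms via matrix Bernstein or Hanson--Wright for bounded independent entries shows that $\ell(\hat\eta^0,\eta)\le c$ for a small constant $c$ once $r_{n,\lambda}\gtrsim1$.

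\textbf{Refinement.} For an iterate $\hat\eta^k$, write
\[
(H(\lambda^2\dot{Y}^T\dot{Y})\hat\eta^k)_i=\eta_i\|\theta\|_2^2\langle\eta,\hat\eta^k\rangle+\langle w_i,\theta\rangle\langle\eta,\hat\eta^k\rangle+\sum_{j\ne i}\eta_j\langle\theta,w_j\rangle\hat\eta^k_j+\sum_{j\ne i}\langle w_i,w_j\rangle\hat\eta^k_j,
\]
with $\langle\eta,\hat\eta^k\rangle\approx n$. The plan is to prove a one-step contraction of the form
\[
\ell(\hat\eta^{k+1},\eta)\le \exp\big(-(1-\epsilon)r_{n,\lambda}^2/2\big)+\tfrac14\ell(\hat\eta^k,\eta).
\]
The first term comes from the oracle statistic $\eta_i\langle w_i,\theta+\tfrac1n\sum_j\eta_jw_j\rangle$. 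The error proportional to $\ell(\hat\eta^k,\eta)$ is controlled uniformly over $\hat\eta^k$ via operator-norm bounds on $W$. Iterating $\lceil3\log_4n\rceil$ times kills the initial error down to $n^{-3}$, below the target rate.

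\textbf{Main obstacle.} The oracle term needs a sharp constant, not just $\exp(-cr^2)$. This requires a Chernoff bound for $\langle w_i,\theta\rangle+\langle w_i,\bar w\rangle$ whose exponent is $(1-o(1))\|\theta\|_2^4/(2\lambda^2(\|\theta\|_2^2+p\lambda^2/n))$. Two ingredients are needed here.
\begin{itemize}
\item A near-Gaussian MGF bound for $\sum_j\theta_jw_{ij}$ with variance proxy exactly $\lambda^2\|\theta\|_2^2$, not $4\lambda^2\|\theta\|_2^2$. I would use a Berry--Esseen or cumulant expansion. This is valid only in the moderate-deviation range $r\lesssim$ a small power of $\|\theta\|_2/\|\theta\|_\infty$, which is where $\mu(\theta)\lesssim(1+(p/n)^{1/4})^{-1}$ and the constraints $r\lesssim (np)^{2\nu/5},n^{2/5}$ enter.
\item Handling the chi-square-like term $\langle w_i,\bar w\rangle$ by conditioning on $\bar w$ built from the other samples.
\end{itemize}
Finally, converting $\mathbb{E}\ell$ to the high-probability bound uses Markov's inequality with the slack $r^{-1/2}$, which is absorbed into $\epsilon_{n,\lambda}$.
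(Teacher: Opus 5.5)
\textbf{There is a genuine gap at the step you call the main obstacle.} Your reduction and your two-stage skeleton are sound and close to the paper's. Conditioning on the product event $G=\bigcap_{i,j}\{|\varepsilon_{ij}|\le\sigma\sqrt{2(1+\nu)\log(np)}\}$ already keeps the entries of $\lambda\dot{Y}-\theta\eta^T$ independent and centered, so no extra decoupling is needed. The paper instead passes through the uniform quantizer $Q_{2\lambda}$, which agrees with $\lambda\sign$ on $[-2\lambda,2\lambda]$ and gives zero-mean errors without conditioning.

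The problem is your plan for the sharp constant. A Berry--Esseen bound controls tails only up to an additive error of order at most $\sum_j|\theta_j|^3/\|\theta\|_2^3\le\mu(\theta)$. That is useless for probabilities of size $e^{-r_{n,\lambda}^2/2}$. A Cram\'er-type cumulant expansion is valid only when $r_{n,\lambda}$ is small relative to a power of the inverse Lyapunov ratio of the direction being tested. The hypotheses do not give this: they only yield $r_{n,\lambda}\le\|\theta\|_2/\lambda\le\mu(\theta)^{-1}$. Moreover, after conditioning on $\bar w$ the relevant direction is $\hat\xi\propto\theta+\bar w$, whose $\ell_\infty/\ell_2$ ratio you would have to control separately.

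The missing observation is that no expansion is needed. On the no-truncation event each coordinate is $\lambda\sign(\cdot)\in\{-\lambda,\lambda\}$. So your $w_{ij}$ takes the two values $\pm\lambda-\eta_i\theta_j$: it is centered but not symmetric, and it lies in an interval of length exactly $2\lambda$. Hoeffding's lemma then gives $\log\mathbb{E}e^{s w_{ij}}\le s^2\lambda^2/2$ for every $s$. Hoeffding's inequality along any fixed unit vector $\hat\xi$ then yields $\mathbb{P}\big(\lambda^{-1}\langle\tilde\delta_1,\hat\xi\rangle<-(1-\epsilon)r_{n,\lambda}\big)\le\exp\big(-(1-\epsilon)^2r_{n,\lambda}^2/2\big)$ at every deviation level. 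Your factor $4\lambda^2$ came from using $|w_{ij}|\le2\lambda$ instead of the range.

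This is exactly the paper's Step 4, in three parts:
\begin{itemize}
\item Replace $\delta_1$ by $\tilde\delta_1=\lambda\sign(\theta+\varepsilon_1+\tau_1)-\mathbb{E}[\lambda\sign(\theta+\varepsilon_1+\tau_1)]$, at the cost of a truncation probability and a negligible bias.
\item Freeze $\hat\xi$, using that sample $1$ is independent of the others.
\item Combine with the sharp Bernstein bound $\|\theta+\bar w\|_2^2\le(1+o(1))(\|\theta\|_2^2+p\lambda^2/n)$. This rests on $\mathbb{E}w_{ij}^2\le\lambda^2$, which you already noted.
\end{itemize}

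\textbf{You also misplace where the hypotheses enter.} The spikiness bound only makes \eqref{lambdacon1} compatible with $r_{n,\lambda}\gtrsim1$, since the latter is equivalent to $\lambda\lesssim\|\theta\|_2/(1+(p/n)^{1/4})$. The cap $r_{n,\lambda}\lesssim\min\{(np)^{2\nu/5},n^{2/5}\}$ is used mainly at the end. After Markov's inequality the bound reads $Cr_{n,\lambda}^{5/2}e^{-(1-\epsilon)r_{n,\lambda}^2/2}$ plus additive terms of order $n^{-3/2}$, $r_{n,\lambda}^{5/2}n^{-2}$ and $r_{n,\lambda}^{5/2}n^{-1-\nu}p^{-\nu}$. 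The cap makes these at most $1/(5n)$. Since $\ell$ takes values in $\{0,1/n,2/n,\dots\}$, either the exponential term dominates or $\ell=0$. Your claim that the leftover $n^{-3}$ is ``below the target rate'' fails once $r_{n,\lambda}^2\gg\log n$; this discreteness argument is what removes such polynomial remainders.
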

Our result resembles the sharp misclassification rate in \cite{ndaoud2022sharp}, with the noise level $\sigma$ in \eqref{sharppari} replaced by $\lambda$. 
In addition, we require $\mu(\theta)\lesssim \frac{1}{1+(p/n)^{1/4}}$ and $r_{n,\lambda}\lesssim \min\{(np)^{2\nu/5},n^{2/5}\}$. The latter appears to be a removable technical condition and is quite mild, since $r_{n,\lambda}\asymp \sqrt{\log n}$ already leads to exact recovery (see Corollary \ref{cor:exact} below). The following remark further discusses the spikiness condition.
\begin{rem}\label{rem:spiky1}
To see why the spikiness bound is necessary, observe that
\begin{align}
\textrm{$r_{n,\lambda} \gtrsim 1$\, is equivalent to\, $\lambda \lesssim \frac{\|\theta\|_2}{1+(p/n)^{1/4}}$\,,}
\label{snrgtr1}
\end{align}
thus $\mu(\theta)\lesssim \frac{1}{1+(p/n)^{1/4}}$ is required to ensure the existence of $\lambda$ satisfying \eqref{lambdacon1}. In light of $\mu(\theta)\ge \frac{1}{\sqrt{p}}$, a necessary (but not sufficient) condition for our result to hold is $p\gtrsim 1$. We develop a Haar matrix preprocessing step in Section \ref{sec:haar} to ensure that $\mu(\theta)\lesssim \frac{1}{1+(p/n)^{1/4}}$ holds after a random rotation of the data when $p\gtrsim 1$. 
\end{rem}


The next remark shows that dithering is necessary to achieve Theorem \ref{maintheorem}. 
\begin{rem}\label{rem:dither} 
We show that direct quantization without dithering does not yield guarantees comparable to those in existing works \cite{ndaoud2022sharp,fei2018hidden,lu2016statistical,abbe2022lp}. Consider $\theta = \frac{\Delta}{\sqrt{p}}\mathbf{1}_p$ (so that $\|\theta\|_2=\Delta$) and let $\varepsilon_i$ have i.i.d.~Rademacher entries (i.e., $\mathbb{P}(\varepsilon_{ij}=1)=\mathbb{P}(\varepsilon_{ij}=-1)=\frac{1}{2}$, which is encompassed by our setting with $\sigma\asymp 1$). Then, when $\Delta \le \sqrt{p}/2$, we have $\dot{X}_i=\sign(X_i) = \sign(\eta_i\theta+\varepsilon_i) = \sign(\varepsilon_i) = \varepsilon_i$, which are pure noise and contain no information. Thus, in this example, accurate clustering can only occur when $\|\theta\|_2=\Delta \ge \sqrt{p}/2$. In contrast, Theorem \ref{maintheorem} only requires $\|\theta\|_2\gtrsim (1+(p/n)^{1/4})\lambda$ which is substantially weaker in high dimensions.  
\end{rem}

The proof of Theorem \ref{maintheorem} builds on \cite{ndaoud2022sharp} but requires additional work to handle the quantization and introduces several new ideas to obtain a sharp constant in the exponent. The main difficulty is that the quantization breaks the rotational invariance of the Gaussian noise exploited in \cite{ndaoud2022sharp}. In Section \ref{sec:outlineproof}, we provide an overview of the proof with an emphasis on these additional technical aspects. The complete proof is given in Appendix \ref{app:proofthm1}.

Since $\ell(\hat{\eta},\eta)\in\{\frac{i}{n}:i\in[n]\}$, $\ell(\hat{\eta},\eta)<\frac{1}{n}$ implies $\hat{\eta}=\pm\eta$, namely exact recovery of the labels. Therefore, the partial recovery rate in Theorem \ref{maintheorem} yields the following result. Its proof is given in Appendix \ref{app:proveexact1}.

\begin{corollary}
    [Exact recovery]\label{cor:exact} In the setting of   Theorem \ref{maintheorem} with a stronger spikiness condition
 $ \mu^2(\theta) \le \frac{0.99}{\log n+\sqrt{\log^2n+\frac{2p\log n}{n}}}$, assume that the separation condition 
    \begin{align}\label{sepaexact}
        \|\theta\|_2^2 \ge (1+\epsilon)\lambda^2 \bigg(1+\sqrt{1+\frac{2p}{n\log n}}\bigg)\log n
    \end{align}
    holds for some $\epsilon=O(\frac{1}{\sqrt{\log n}})$, then  for any $t\ge \lceil3\log_4n\rceil$, $\hat{\eta}^{t}=\pm \eta$ holds with probability at least $1-O(\frac{1}{\sqrt{\log n}})-2(np)^{-\nu}$. 
\end{corollary}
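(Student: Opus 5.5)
The plan is to deduce Corollary~\ref{cor:exact} directly from Theorem~\ref{maintheorem}. The bridge is the observation that $\ell(\hat{\eta}^t,\eta)<\frac1n$ forces $\hat{\eta}^t=\pm\eta$. It therefore suffices to check three things: (i) the hypotheses of Theorem~\ref{maintheorem} hold; (ii) the separation condition \eqref{sepaexact} gives $\frac{(1-\epsilon_{n,\lambda})r_{n,\lambda}^2}{2}>\log n$; and (iii) the failure probability $r_{n,\lambda}^{-1/2}+2(np)^{-\nu}$ is $O(1/\sqrt{\log n})+2(np)^{-\nu}$.

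For step (ii), I write $s=\|\theta\|_2^2/\lambda^2$ and $q=p/n$, so that $r_{n,\lambda}^2=\frac{s^2}{s+q}$. The inequality $\frac{s^2}{s+q}\ge 2a\log n$ is equivalent to $s^2-2a\log n\, s-2a q\log n\ge 0$. This in turn holds iff
\[
s\ge a\log n+\sqrt{a^2\log^2 n+2aq\log n}.
\]
I will take $a=1+\epsilon'$ with $\epsilon'$ a suitable multiple of $\epsilon$. Then \eqref{sepaexact}, which reads $s\ge(1+\epsilon)(\log n+\sqrt{\log^2n+2q\log n})$, implies the displayed inequality, since the right-hand side scales at most linearly in $a\ge1$. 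Consequently $r_{n,\lambda}^2\ge 2(1+\epsilon')\log n$, and in particular $r_{n,\lambda}\gtrsim\sqrt{\log n}$. Theorem~\ref{maintheorem} then gives, on its high-probability event,
\[
\ell\le \exp\bigl(-(1-\epsilon_{n,\lambda})(1+\epsilon')\log n\bigr),
\]
with $\epsilon_{n,\lambda}=O(1/\sqrt{\log n}+(\log n/n)^{1/4})$. Choosing the implicit constant in $\epsilon=O(1/\sqrt{\log n})$ large enough relative to the constant in $\epsilon_{n,\lambda}$ makes $(1-\epsilon_{n,\lambda})(1+\epsilon')>1$. Hence $\ell<1/n$.

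Here lies the main subtlety. The corollary states $\epsilon=O(1/\sqrt{\log n})$, yet a product strictly exceeding $1$ requires $\epsilon\ge C\epsilon_{n,\lambda}$. I would read the statement as allowing $\epsilon$ to be chosen as $C/\sqrt{\log n}$ with $C$ sufficiently large; note that $(\log n/n)^{1/4}\le 1/\sqrt{\log n}$ for large $n$, so this choice also absorbs the second term of $\epsilon_{n,\lambda}$. If \eqref{sepaexact} holds with a larger $\epsilon$, the argument only becomes easier. At that point I would reduce $s$ (equivalently, work with the borderline value) to keep $r_{n,\lambda}$ within the theorem's upper range $r_{n,\lambda}\lesssim\min\{(np)^{2\nu/5},n^{2/5}\}$. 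Alternatively, I would note that the conclusion is monotone in the signal strength; without such a reduction the upper constraint on $r$ must be checked separately.

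For step (i), the condition $r_{n,\lambda}\gtrsim1$ is immediate from step (ii). I expect the key remaining point to be compatibility of $\lambda$ with \eqref{lambdacon1}, and this is exactly where the stronger spikiness condition enters. Since $\lambda\ge\|\theta\|_\infty$, we have
\[
\|\theta\|_2^2/\lambda^2\le 1/\mu^2(\theta).
\]
The assumption $\mu^2(\theta)\le\frac{0.99}{\log n+\sqrt{\log^2 n+2p\log n/n}}$ guarantees that $1/\mu^2$ exceeds the threshold $\log n+\sqrt{\log^2n+2q\log n}$ by the factor $1/0.99$. Therefore a $\lambda$ satisfying both \eqref{lambdacon1} and \eqref{sepaexact} with $1+\epsilon\le 1/0.99$ can exist, and the condition is non-vacuous. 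The same assumption also implies the weaker spikiness requirement of Theorem~\ref{maintheorem}: $\log n+\sqrt{\log^2n+2q\log n}\gtrsim 1+\sqrt{q}$, so $\mu\lesssim 1/(1+q^{1/4})$.

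For step (iii), since $r_{n,\lambda}\gtrsim\sqrt{\log n}$, we get $r_{n,\lambda}^{-1/2}$ of order $(\log n)^{-1/4}$. This is weaker than the claimed $O(1/\sqrt{\log n})$. To close the gap, I would revisit the probability term of Theorem~\ref{maintheorem}, which presumably comes from a Markov-type step of the form $r^{-1/2}$. Applying that same Markov step with threshold $\frac1n$ rather than $\exp(-(1-\epsilon)r^2/2)$ should give failure probability $O(1/r)=O(1/\sqrt{\log n})$; this is where I would use the appendix argument rather than the theorem as a black box.
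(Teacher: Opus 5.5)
Your step (ii) is the paper's argument. The paper also reduces $\exp(-\frac{1-\epsilon_0}{2}r_{n,\lambda}^2)<\frac1n$ to a quadratic inequality in $\|\theta\|_2^2/\lambda^2$ (display \eqref{quadraticcalculate}). It then absorbs $\frac{1}{1-\epsilon_0}=1+O(\epsilon_0)$ into $1+\epsilon$, which matches your reading of $\epsilon$ as a sufficiently large multiple of $1/\sqrt{\log n}$.

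You depart from the paper in step (iii), and there you are right. The paper asserts that Theorem \ref{maintheorem} holds with probability $1-O(1/\sqrt{\log n})-2(np)^{-\nu}$. The theorem only supplies $1-r_{n,\lambda}^{-1/2}-2(np)^{-\nu}=1-O(\log^{-1/4}n)-2(np)^{-\nu}$, which is the rate Corollary \ref{coro:haar} states. So the black-box route falls short of the stated probability, and opening up the appendix, as you propose, is what delivers it.

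What comes out is not ``$O(1/r)$'', though. Apply Markov's inequality at the exact-recovery threshold to $\mathbb{E}\frac1n\|\tilde\eta^t-\eta\|_1\le\frac{2}{n^2}+C'r_{n,\lambda}^2P^*$, and insert the Step 5 bound on $P^*$ and the $2(np)^{-\nu}$ transfer cost from Step 3. This gives
\[
\mathbb{P}(\hat\eta^t\ne\pm\eta)\lesssim \frac1n+\frac{r_{n,\lambda}^2}{n}+\frac{r_{n,\lambda}^2}{(np)^{\nu}}+n\,r_{n,\lambda}^2\,e^{-(1-\epsilon_{n,\lambda})^2r_{n,\lambda}^2/2}+2(np)^{-\nu}.
\]
The second and third terms are $O(n^{-1/5}+(np)^{-\nu/5})$ only because of the cap $r_{n,\lambda}\lesssim\min\{n^{2/5},(np)^{2\nu/5}\}$. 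The fourth term is $O(\log n\,e^{-c\sqrt{\log n}})$ only if $[(1-\epsilon_{n,\lambda})^2(1+\epsilon')-1]\log n\gg\log\log n$. That needs a margin of order $1/\sqrt{\log n}$, not merely $(1-\epsilon_{n,\lambda})(1+\epsilon')>1$ as in step (ii). Your choice $\epsilon=C/\sqrt{\log n}$ with $C$ large does provide this margin, and the total is then $O(1/\sqrt{\log n})+2(np)^{-\nu}$ with $\nu$-dependent constants.

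Two smaller points. First, the cap on $r_{n,\lambda}$ and the spikiness bound of Theorem \ref{maintheorem} are part of ``the setting of Theorem \ref{maintheorem}'', so you need not verify them. Your fallback of shrinking $s$ to the borderline value would not be legitimate anyway: $\theta$ is fixed by the data, and nothing shows the algorithm's error is monotone in $\|\theta\|_2$. Second, the stronger spikiness bound is not used in the proof at all. As you note, it only makes \eqref{lambdacon1} and \eqref{sepaexact} simultaneously satisfiable (Remark \ref{rem:spiky2}).
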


In two-component Gaussian mixture models without quantization, it was established that Lloyd's algorithm \cite{ndaoud2022sharp} and semidefinite programming \cite{chen2021cutoff} attain exact recovery under the separation condition \eqref{existsepara}. The separation condition in Corollary \ref{cor:exact} takes a similar form, with $\sigma$ replaced by a slightly larger $\lambda$ (see \eqref{lambdacon1}).

\begin{rem} 
\label{rem:spiky2}
We impose the spikiness bound $\mu^2(\theta) \le \frac{0.99}{\log n+\sqrt{\log^2 n+\frac{2p\log n}{n}}}$ to ensure that \eqref{lambdacon1} and \eqref{sepaexact} hold simultaneously under $\lambda = \|\theta\|_{\infty}+\sigma\sqrt{2(1+\nu)\log(np)}$ and $\|\theta\|_2^2\gtrsim \sigma^2 \log(np)(\log n + \sqrt{\frac{p\log n}{n}})$. Since $\mu^2(\theta)\ge \frac{1}{p}$, the condition $p\gtrsim \log n$ is necessary, though not sufficient, for the spikiness condition to hold. We will see that an additional Haar matrix preprocessing step can help bypass the spikiness condition whenever $p\gtrsim \log n \log\log n$. 
\end{rem}

\begin{rem}
\label{rem:exact}
Estimation under one-bit quantization has been an active research area, but existing results typically provide only approximate recovery guarantees \cite{jacques2013robust,davenport20141,dirksen2022covariance,cai2022distributed}. In contrast, exact recovery is achievable in clustering under one-bit quantization, due to the fact that the desired label vector $\eta$ takes discrete values in $\{-1,1\}^n$. Technically, while existing works control the impact of quantization on responses \cite{dirksen2021non,thrampoulidis2020generalized} and covariance matrices \cite{dirksen2021non,chen2025parameter}, our analysis examines the interplay  between the quantization and the hollowed Gram matrix.    
\end{rem}

\subsection{Recovery Guarantees under Haar Matrix Preprocessing}\label{sec:haar}
A limitation of Theorem \ref{maintheorem} and Corollary \ref{cor:exact} is the spikiness condition on $\theta$, which holds only when $p\gtrsim 1$ for partial recovery and $p\gtrsim \log n$ for exact recovery. The issue is that, in clustering, the center $\theta$ is unknown (indeed, estimating $\theta$ is itself an important problem \cite{wu2021randomly}), so one cannot verify the spikiness condition before applying our one-bit clustering method. Note that even when $p$ is sufficiently large, $\mu(\theta)$ can take any value in $[\frac{1}{\sqrt{p}},1]$.



We address this issue via a simple Haar matrix preprocessing step prior to one-bit quantization: we draw a Haar matrix $R\sim {\rm Haar}(\mathbb{O}(p))$\footnote{This means that $R$ is uniformly distributed over the group of $p\times p$ orthonormal matrices.} and transform the original samples $\{X_i\}_{i=1}^n$ to $\{RX_i\}_{i=1}^n$. The subsequent quantization and algorithm remain unchanged. In practice, the generation and communication of $R$ can be controlled by a random seed and incur negligible cost (e.g., \cite{shrivastava2024sketching}).

The Haar matrix resolves this issue by \emph{whitening the center}: in light of 
$$RX_i = \eta_i(R\theta)+R\varepsilon_i\,,$$
the original deterministic center $\theta$ is transformed into $R\theta$, which is uniformly distributed over $\|\theta\|_2 S^{p-1}$. A standard argument shows that $\mu(R\theta)\le \sqrt{3\log(p)/p}$ holds with high probability (cf.\ Lemma \ref{Rthetamu}), and therefore the spikiness conditions in Theorem 
\ref{maintheorem} and Corollary \ref{cor:exact} 
are automatically satisfied in sufficiently high dimension:
\begin{itemize}[leftmargin=5ex,topsep=0.25ex]
    \setlength\itemsep{-0.3em}
    \item Partial recovery is attained when $p\gtrsim 1$, since this guarantees $\sqrt{\frac{\log p}{p}}\lesssim \frac{1}{1+(p/n)^{1/4}}$ (see Theorem \ref{thm:spifree});
    \item Exact recovery is attained when $p \gtrsim \log n\log\log n$, since this guarantees $\sqrt{\frac{\log p}{p}}\lesssim\frac{1}{\sqrt{\log n+\frac{p\log n}{n}}}$ (see Corollary \ref{coro:haar}).  
\end{itemize}
The following theorem concerns the misclassification rate under the Haar matrix preprocessing. 


\begin{theorem}[Partial recovery without spikiness condition] \label{thm:spifree}
Suppose that $R\sim {\rm Haar}(\mathbb{O}(p))$ is independent of everything else, and in our model we quantize  $X_i$  to   $\dot{X}_i=\sign(RX_i+\tau_i)$ with $\tau_i\sim {\rm Unif}[-\lambda,\lambda]^p$, then we compute $\{\hat{\eta}^k\}_{k\ge 0}$ by Algorithm \ref{alg:2lloyd}.  Given small $\nu>0$, assume $\min\{n,p,r_{n,\lambda}\}\gtrsim 1$ and $r_{n,\lambda}\lesssim \min\{(np)^{2\nu/5},n^{2/5}\}$, where recall $r_{n,\lambda}= \frac{\|\theta\|_2^2/\lambda^2}{\sqrt{\|\theta\|_2^2/\lambda^2+p/n}}$. If 
\begin{align}\label{lambdahaar}
    \lambda\ge \sqrt{\frac{3\log p}{p}}\|\theta\|_2 + \sigma\sqrt{2(1+\nu)\log(np)}\,,
\end{align}
then for any $t\ge \lceil3\log_4n\rceil$ we have 
\begin{align*}
    \mathbb{P}\bigg(\ell(\hat{\eta}^t,\eta)\le \exp\bigg(-\frac{(1-\epsilon_{n,\lambda})r_{n,\lambda}^2}{2}\bigg)\bigg)\ge 1-r_{n,\lambda}^{-1/2}-2(np)^{-\nu} - 3p^{-1/4} 
\end{align*}
for some $\epsilon_{n,\lambda}=O(r_{n,\lambda}^{-1}+(\frac{\log n}{n})^{1/4}+\log^{-1/4}(np))$.  
\end{theorem}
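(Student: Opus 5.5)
We reduce Theorem \ref{thm:spifree} to Theorem \ref{maintheorem} by conditioning on the Haar matrix $R$. Since $R$ is orthogonal, $RX_i=\eta_i(R\theta)+R\varepsilon_i$ is a symmetric two-component mixture with center $\theta':=R\theta$ and $\|\theta'\|_2=\|\theta\|_2$. Hence $r_{n,\lambda}$ is unchanged, and the output of Algorithm \ref{alg:2lloyd} is the same as if it were run on the rotated data. The difficulty is that the rotated noise $R\varepsilon_i$ no longer has independent entries unless $\varepsilon_i$ is Gaussian. So Assumption \ref{subgaussian} is not automatically inherited, and we must check which noise properties the proof of Theorem \ref{maintheorem} actually uses.

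\textbf{Step 1 (whitening the center).} By Lemma \ref{Rthetamu}, with probability at least $1-p^{-1/4}$ (after adjusting constants) we have $\|R\theta\|_\infty\le\sqrt{3\log p/p}\,\|\theta\|_2$. On this event, \eqref{lambdahaar} gives $\lambda\ge\|\theta'\|_\infty+\sigma\sqrt{2(1+\nu)\log(np)}$, which is \eqref{lambdacon1} for $\theta'$. Also, $r_{n,\lambda}\gtrsim1$ is equivalent to $\lambda\lesssim\|\theta\|_2/(1+(p/n)^{1/4})$ by \eqref{snrgtr1}. Combined with $\lambda\ge\sqrt{3\log p/p}\,\|\theta\|_2$, this yields $\mu(\theta')\lesssim 1/(1+(p/n)^{1/4})$, so the spikiness hypothesis holds automatically.

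\textbf{Step 2 (noise after rotation).} For fixed $R$ and each coordinate $j$, the entry $(R\varepsilon_i)_j=\sum_k R_{jk}\varepsilon_{ik}$ is symmetric, zero-mean and $\sigma$-sub-Gaussian, because the rows of $R$ have unit norm. What is lost is independence across coordinates within a sample. Independence across samples $i$ is preserved, as is independence from the dithers.

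I would revisit the proof of Theorem \ref{maintheorem} and isolate its uses of the noise.
\begin{enumerate}[(a)]
\item The bias term $\mathbb{E}[\lambda\dot X_i]=T_\lambda(\eta_i\theta'+R\varepsilon_i)$ requires only a marginal tail bound. A union bound over the $np$ entries at level $\sigma\sqrt{2(1+\nu)\log(np)}$ gives the $2(np)^{-\nu}$ term, and this bound needs no independence.
\item Concentration of $H(\dot Y^T\dot Y)$ and its leave-one-out quantities depends on $\dot X_i$ having entries that are conditionally independent given $X_i$. This holds because the dithers are independent, regardless of the noise structure.
\item The remaining terms involve the truncated signals $T_\lambda(RX_i)$ and their inner products. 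Here I would exploit the Haar randomness itself. Conditional on $R\theta$, the remaining part of $R$ acts on $\varepsilon_i$, and the vectors $R\varepsilon_i$ are exchangeable and nearly isotropic across coordinates. The bias $\theta'-\mathbb{E}T_\lambda(\theta'+R\varepsilon_i)$ only needs to be small in $\ell_2$ relative to $\|\theta\|_2$. Because every entry of $\theta'$ is below $\lambda-\sigma\sqrt{2(1+\nu)\log(np)}$, truncation rarely activates, so this bias is negligible.
\end{enumerate}

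\textbf{Step 3 (the main obstacle).} The hardest point is controlling the cross-coordinate dependence in the sub-Gaussian concentration of sums such as $\sum_j T_\lambda(\cdot)_{j}$, which Theorem \ref{maintheorem} presumably bounds with independent-entry Hoeffding/Bernstein arguments. I would first try to use that $R\varepsilon_i$ is a Lipschitz function of $\varepsilon_i$ with i.i.d.\ entries, so that linear and quadratic forms in $R\varepsilon_i$ remain sub-Gaussian or Hanson--Wright type with the same $\sigma$. The coordinatewise truncation $T_\lambda$ is $1$-Lipschitz, so the relevant quantities stay Lipschitz in $\varepsilon_i$. If this only holds up to small losses, those losses are where the extra $\log^{-1/4}(np)$ in $\epsilon_{n,\lambda}$ and the remaining $2p^{-1/4}$ in the probability come from. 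Examples are a $(1+O(\log^{-1/4}(np)))$ inflation of effective variances, or excluding a bad event of $R$ on which the rotated noise is too anisotropic.

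\textbf{Step 4 (assembly).} Apply the conditional version of Theorem \ref{maintheorem} on the good event for $R$, then integrate over $R$. This gives the stated rate with failure probability $r_{n,\lambda}^{-1/2}+2(np)^{-\nu}+3p^{-1/4}$, for all $t\ge\lceil3\log_4n\rceil$.
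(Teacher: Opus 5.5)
Your whitening step and the union-bound reduction from $\lambda\dot{X}_i$ to the uniform quantizer are correct and match the paper, which also conditions on $R$. The gap is in what comes after. There is no ``conditional version of Theorem \ref{maintheorem}'' to invoke in Step 4, because that proof uses coordinatewise independence of the noise exactly where its sharp constants are produced.

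Two estimates need constant $1$, not $O(1)$: (i) $\|\frac{1}{n-1}\sum_{j\ge 2}\delta_j\|_2^2\le(1+o(1))\frac{p\lambda^2}{n}+o(\|\theta\|_2^2)$, and (ii) $\mathbb{P}(\langle\delta_1,\hat{\xi}\rangle/\lambda<-(1-\epsilon)r_{n,\lambda})\lesssim\exp(-(1-\epsilon)^2r_{n,\lambda}^2/2)$. Your proposed substitute, concentration of Lipschitz functions of $\varepsilon_i$ such as $T_\lambda$ composed with $R$, is not available. Dimension-free Lipschitz concentration fails for general vectors with independent sub-Gaussian coordinates, already for Rademacher ones; it requires Gaussian or log-Sobolev laws, or convexity plus boundedness. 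Even where such bounds or Hanson--Wright do apply, they carry unspecified constants. Those constants destroy the $\frac{1}{2}$ in the exponent and the $1$ in front of $p\lambda^2/n$, so hoping the losses are ``small'' does not close the argument.

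The missing idea is to split $\delta_j=\delta_j'+R\varepsilon_j$, with $\delta_j'=Q_{2\lambda}(\theta_R+R\varepsilon_j+\tau_j)-(\theta_R+R\varepsilon_j)$. Conditionally on $\varepsilon_j$, this dither error has independent, zero-mean coordinates, each confined to a window of width $2\lambda$. The rotated noise then enters only through linear or rotation-invariant quantities, which lose no constants:
\begin{itemize}
\item $(RW)^T(RW)=W^TW$, with the cross term handled by conditioning (Lemma \ref{lem:genutu});
\item $\|\sum_jR\varepsilon_j\|_2=\|\sum_j\varepsilon_j\|_2$;
\item $\langle R\varepsilon_1,\hat{\xi}\rangle=\langle\varepsilon_1,R^T\hat{\xi}\rangle$, which is $\sigma$-sub-Gaussian because $\hat{\xi}$ is independent of $\varepsilon_1$.
\end{itemize}

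For (i), $\mathbb{E}(\delta_{ji}')^2=\mathbb{E}[Q_{2\lambda}(\cdot)^2]-\theta_{R,i}^2-\mathbb{E}(R_i^T\varepsilon_j)^2$. The resulting $-\mathbb{E}\|\varepsilon_j\|_2^2/(n-1)$ cancels the contribution of $\|\frac{1}{n-1}\sum_j\varepsilon_j\|_2^2$, leaving $(1+o(1))p\lambda^2/n$ (Lemma \ref{lem:sumnormlem}).

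For (ii), split the event into two parts. Apply Hoeffding, conditionally on $\varepsilon_1$, to the dither part at level $(1-\epsilon-\log^{-1/4}(np))r_{n,\lambda}$; the paper routes this through the one-bit surrogate $\tilde{\delta}_1'$. Bound the noise part by $\exp(-\sqrt{\log(np)}\,r_{n,\lambda}^2)$, using $\sigma\le\lambda/\sqrt{2(1+\nu)\log(np)}$. This split, not an unexplained variance inflation, is where the $\log^{-1/4}(np)$ in $\epsilon_{n,\lambda}$ comes from.

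Conversely, no Haar randomness is needed beyond whitening the center, and there is no anisotropy event for $R$. The argument works for every fixed orthogonal $R$ with $\|R\theta\|_\infty\le\sqrt{3\log p/p}\,\|\theta\|_2$. All of the $3p^{-1/4}$ is the failure probability $2p^{-1/4}+2e^{-cp}$ from Lemma \ref{Rthetamu}.
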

Overall, the result is proved by revisiting the arguments for Theorem \ref{maintheorem}. One aspect that requires adaptation is that, when extending to general sub-Gaussian noise, $R\varepsilon_i$ may have correlated entries. We address this issue using a noise decomposition and conditioning argument (cf. Remark \ref{rem:proofthm2} in the appendix). The complete proof appears in Appendix \ref{app:proofthm2}.

Enforcing $\ell(\hat{\eta}^t,\eta)<\frac{1}{n}$ yields the following exact recovery guarantee, where we choose the minimal $\lambda$ in \eqref{lambdahaar} for simplicity. See Appendix \ref{app:provecoro2} for the proof.

\begin{corollary}[Exact recovery without spikiness condition] \label{coro:haar}
In the setting of Theorem \ref{thm:spifree} with
\begin{align}
    \lambda = \sqrt{\frac{3\log p}{p}}\|\theta\|_2+\sigma\sqrt{2(1+\nu)\log(np)},\label{minimallambda}
\end{align}
if for some $\epsilon=O(\frac{1}{\log^{1/4}n})$ the separation condition
\begin{align} \label{separation111}
    \|\theta\|_2^2 \ge (1+\epsilon)\bigg(\sqrt{\frac{3\log p}{p}}\|\theta\|_2+\sigma\sqrt{2(1+\nu)\log(np)}\bigg)^2\bigg(1+\sqrt{1+\frac{2p}{n\log n}}\bigg)\log n
\end{align}
holds, then for any $t\ge \lceil 3\log_4 n\rceil$, $\hat{\eta}^t=\pm \eta$ with probability at least $1-O(\frac{1}{\log^{1/4}n}+(np)^{-\nu}+p^{-1/4})$. More specifically, if $\min\{n,p,r_{n,\lambda}\}\ge C_1(\nu)$ and $p\ge C_2(\nu)\cdot \log n \log\log n$ for sufficiently large constants $C_1(\nu),C_2(\nu)$ depending only on $\nu$, then the separation condition in \eqref{separation111} can be ensured by the more explicit condition
\begin{align}\label{oursepara}
    \|\theta\|_2^2 \ge 2 (1+2\nu)\sigma^2\left(1+\sqrt{1+\frac{2p}{n\log n}}\right)\log(np)\log n.
\end{align}
\end{corollary}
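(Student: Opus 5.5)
The plan is to derive Corollary~\ref{coro:haar} from Theorem~\ref{thm:spifree} in the same way Corollary~\ref{cor:exact} follows from Theorem~\ref{maintheorem}. Since $\ell(\hat\eta^t,\eta)\in\{i/n\}$, it suffices to show that
\[
\exp\!\Big(-\tfrac{(1-\epsilon_{n,\lambda})r_{n,\lambda}^2}{2}\Big)<\tfrac1n,
\]
i.e.\ $(1-\epsilon_{n,\lambda})r_{n,\lambda}^2>2\log n$, on the event of Theorem~\ref{thm:spifree}.

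\textbf{Step 1: translating the separation condition into a bound on $r_{n,\lambda}$.} Write $s=\|\theta\|_2^2/\lambda^2$. Then $r_{n,\lambda}^2=s^2/(s+p/n)$. The condition $r^2\ge 2(1+\delta)\log n$ is the quadratic inequality
\[
s^2-2(1+\delta)\log n\, s-2(1+\delta)\log n\,\frac{p}{n}\ge 0 .
\]
Its positive root is $(1+\delta)\log n\bigl(1+\sqrt{1+\tfrac{2p}{(1+\delta)n\log n}}\bigr)$, which is at most $(1+\delta)\log n\bigl(1+\sqrt{1+\tfrac{2p}{n\log n}}\bigr)$. Hence \eqref{separation111}, with $\lambda$ given by \eqref{minimallambda}, yields $r_{n,\lambda}^2\ge 2(1+\epsilon)\log n$.

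We then need $(1+\epsilon)(1-\epsilon_{n,\lambda})>1$. Here
\[
\epsilon_{n,\lambda}=O\big(r^{-1}+(\log n/n)^{1/4}+\log^{-1/4}(np)\big)=O(\log^{-1/4}n),
\]
because $r\gtrsim\sqrt{\log n}$. So it suffices to take $\epsilon=C\log^{-1/4}n$ with $C$ large enough.

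The remaining hypotheses of Theorem~\ref{thm:spifree} must also hold. The lower bound $r\gtrsim1$ is immediate. For the upper bound $r\lesssim\min\{(np)^{2\nu/5},n^{2/5}\}$, note that $\ell$ only decreases as the SNR grows, so I would handle a larger $\|\theta\|_2$ by monotonicity. A cleaner route is to observe that $r$ is bounded above via $\lambda\ge\sqrt{3\log p/p}\,\|\theta\|_2$: this gives $s\le p/(3\log p)$, so $r\le\sqrt{s}\lesssim\sqrt{p/\log p}$. Where that is not enough, one can argue that the proof of Theorem~\ref{thm:spifree} only uses the upper bound on $r$ to control lower-order terms, which is a minor caveat. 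The failure probability is $r^{-1/2}+2(np)^{-\nu}+3p^{-1/4}=O(\log^{-1/4}n+(np)^{-\nu}+p^{-1/4})$.

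\textbf{Step 2: the explicit condition \eqref{oursepara}.} Set $a=\sqrt{3\log p/p}$, $b^2=2(1+\nu)\sigma^2\log(np)$ and $L=(1+\sqrt{1+2p/(n\log n)})\log n$, so that \eqref{separation111} reads $\|\theta\|_2^2\ge(1+\epsilon)(a\|\theta\|_2+b)^2L$. Taking square roots, it is implied by
\[
\|\theta\|_2\bigl(1-a\sqrt{(1+\epsilon)L}\bigr)\ge b\sqrt{(1+\epsilon)L}.
\]

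\textbf{Key estimate and main obstacle.} The crux is showing $a^2L=o(1)$, with $a\sqrt{L}$ small enough that $(1+\epsilon)/(1-a\sqrt{(1+\epsilon)L})^2\le(1+2\nu)/(1+\nu)$. We have
\[
a^2L\lesssim\frac{\log p}{p}\Big(\log n+\sqrt{\tfrac{p\log n}{n}}\Big).
\]
The first term, $\log p\log n/p$, is small exactly when $p\ge C_2\log n\log\log n$: if $p\approx C_2\log n\log\log n$ then $\log p\approx\log\log n$, and for larger $p$ the ratio only decreases since $\log p/p$ is decreasing. The second term is $\log p\sqrt{\log n}/\sqrt{pn}$, which is small when $p,n\ge C_1$ together with $p\gtrsim\log n$. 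Both are made at most a small constant depending on $\nu$ by choosing $C_1(\nu),C_2(\nu)$ large, and $\epsilon=O(\log^{-1/4}n)$ is absorbed once $n\ge C_1$.

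Granting this, \eqref{oursepara} gives $\|\theta\|_2^2\ge\frac{(1+\epsilon)b^2L}{(1-a\sqrt{(1+\epsilon)L})^2}$, which implies \eqref{separation111}. This careful calibration of constants is where I expect the bookkeeping to be most delicate.
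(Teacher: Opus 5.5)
Your proposal is correct and follows essentially the same route as the paper. Both arguments apply Theorem~\ref{thm:spifree} and use the separation condition to get $r_{n,\lambda}\gtrsim\sqrt{\log n}$, so that $\epsilon_{n,\lambda}$ and $r_{n,\lambda}^{-1/2}$ are $O(\log^{-1/4}n)$. Both solve the quadratic in $\|\theta\|_2^2/\lambda^2$ to force $\ell<1/n$, and both show that $\frac{\log p}{p}\big(\log n+\sqrt{p\log n/n}\big)$ is small. Your square-root rearrangement of \eqref{separation111} is only a cleaner substitute for the paper's splitting $(A_1+A_2)^2\le(1+C_0)A_1^2+(1+C_0^{-1})A_2^2$.

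Drop your discussion of the upper bound $r_{n,\lambda}\lesssim\min\{(np)^{2\nu/5},n^{2/5}\}$. It is part of ``the setting of Theorem~\ref{thm:spifree}'' and is simply assumed, as the paper does. The monotonicity-in-SNR argument you sketch for it is not justified by anything in the paper.
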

The above result states that exact recovery is achieved under \eqref{oursepara} provided that $p\gtrsim_{\nu} \log n\log\log n$. The separation condition \eqref{oursepara} is explicit and differs from the sharp separation condition \eqref{existsepara} for two-component Gaussian mixture models without quantization by only an additional factor of $2\log(np)$. 
 \begin{rem}\label{rem:tuning}
   The tuning of $\lambda$ appears to require knowledge of $(\|\theta\|_2,\sigma)$ in view of \eqref{minimallambda}, but we note that in many regimes it suffices to have an estimate on the noise level $\sigma$. In particular, if we assume $\|\theta\|_2\le c\sigma\sqrt{\frac{p\log(np)}{\log p}}$ for sufficiently small $c$,\footnote{This is mild because under $p\gtrsim \log n\log\log n$, the upper bound $\sigma\sqrt{\frac{p\log(np)}{\log p}}$ is much larger than $\|\theta\|_2 \gtrsim \sigma (\log^2(np)\log^2 n+\frac{p\log^2(np)\log n}{n})^{1/4}$ required in \eqref{oursepara}.} then the term $\sqrt{\frac{3\log p}{p}}\|\theta\|_2$ is negligible compared to $\sigma\sqrt{2\log(np)}$, and hence the theory suggests $\lambda\approx \sigma\sqrt{2\log(np)}$. In practice, we recommend $\lambda = s\sigma\sqrt{2\log(np)}$ for some shrinkage parameter $s\in(0,1)$; see, e.g., the second experiment in Section \ref{sec:experi}. 
   We leave a more thorough investigation of the tuning of $\lambda$ for future work. 
 \end{rem}
\begin{rem}
To extend our method to $p\lesssim 1$, a naive approach is to append additional zeros to the samples. That said, numerical results suggest that our method already performs well and achieves exact recovery in a dimension as low as $p=5$; see Figure \ref{fig:sub1} in Section \ref{sec:experi} for instance. 
\label{rem:pO1}
\end{rem}
 
\subsection{Technical Overview (Theorem \ref{maintheorem})}\label{sec:outlineproof}


{\bf (A) Handing the quantization.} Since  in our one-bit setting $\lambda\dot{X}_i$ serves as a surrogate for $X_i$, a useful perspective is to view it as a clustering problem for $\{\lambda\dot{X}_i\}_{i=1}^n$, with the 
 overall noise on the $i$-th sample being \[u_i=\lambda\dot{X}_i-\eta_i\theta= \lambda\sign(X_i+\tau_i)-\eta_i\theta=\lambda\sign(\eta_i\theta+\varepsilon_i+\tau_i)-\eta_i\theta.\] One difficulty here is that $u_i$ is not mean-zero (a property that existing analyses \cite{lu2016statistical,ndaoud2022sharp,giraud2019partial}  heavily rely on). 
 We bypass this hurdle via a proof scheme \cite{chen2025parameter} that draws connection between the one-bit quantizer and the uniform quantizer, consisting of three steps:

 (i) Define the uniform quantizer with resolution $\delta>0$ as $Q_\delta(a)=\delta(\lfloor\frac{a}{\delta}\rfloor+\frac{1}{2})$, and note that the (rescaled) one-bit quantizer and $Q_{2\lambda}$ are connected by (cf. Figure \ref{fig:Q2_sign} in the appendix)
 \begin{align}\nonumber
      &Q_{2\lambda}(a) = \lambda\sign(a),\quad \forall|a|\le 2\lambda\\\nonumber
      \Longrightarrow~& Q_{2\lambda}(a+\tau)=\lambda\sign(a+\tau)\,\textrm{~under $\tau\sim {\rm Unif}[-\lambda,\lambda]$}, \quad \forall |a|\le \lambda\\
         \Longrightarrow~&   \lambda\dot{X}_i = \underbrace{Q_{2\lambda}(X_i+\tau_i)}_{\tilde{X}_i},\quad \forall i\in[n], \quad\text{if}\,~\|Y\|_{\max} = \max_{i\in[n]}\|X_i\|_{\infty} \le \lambda. \label{goack}
 \end{align}
(ii) We then analyze the clustering of $\{\tilde{X}_i:=Q_{2\lambda}(X_i+\tau_i)\}_{i=1}^n$ with  overall noise $\tilde{u}_i=\tilde{X}_i-\eta_i\theta$ having independent, \emph{zero-mean}, sub-Gaussian entries (cf. Lemma \ref{lem:quannoi});

(iii) By the sub-Gaussian tail bound   $\lambda\ge \|\theta\|_\infty + L\sqrt{2(1+\nu)\log(np)}$ ensures that $\|Y\|_\infty\le \lambda$ holds with high probability (w.h.p.), which together with (\ref{goack}) allows us to carry over the conclusion for $\{\tilde{X}_i\}_{i=1}^n$ to $\{\lambda\dot{X}_i\}_{i=1}^n$.


{\bf (B) Extending the argument in \cite{ndaoud2022sharp}.} We focus on Step (ii). We analyze the performance of Algorithm \ref{alg:2lloyd} with $\tilde{Y}=[\tilde{X}_1,\cdots,\tilde{X}_n]$ rather than $\dot{Y}$. Note that the zero-mean $\tilde{u}_i=\tilde{X}_i-\eta_i\theta$ allows us to closely follow the analysis in \cite{ndaoud2022sharp} which (a) first shows a crude bound on the iterates $\ell(\hat{\eta}^k,\eta)=O(n^{-2}+r_{n,\lambda}^{-2})$ and (b) then tightens it to  
\begin{align*}
    \ell(\hat{\eta}^k,\eta) \lesssim \frac{1}{n^{3/2}} + r_{n,\lambda}^{5/2}P^*~~{\rm w.h.p.},\quad (\forall k\ge \lceil 3\log_4n\rceil),
\end{align*}
where, with $\delta_j=Q_{2\lambda}(\theta+\varepsilon_j+\tau_j)-\theta$ for $j\in[n]$, $P^*$ is defined as 
\begin{align}\label{definePstar11}
    P^*:= \mathbb{P} \bigg(\Big\langle \theta+\delta_1,\theta+\frac{1}{n-1}\sum_{j=2}^n \delta_j\Big\rangle < \frac{C\|\theta\|_2^2}{r_{n,\lambda}}\bigg).
\end{align}
 This part of the  argument is presented in Step 2 in Appendix \ref{app:proofthm1} and only involves some straightforward extensions of the concentration bounds in \cite{ndaoud2022sharp} (cf. Lemmas \ref{lem:sgvbound}--\ref{utubound}).

 {\bf (C) Establishing sharp bound on $P^*$.}  
It remains to establish a sharp bound on $P^*$ (accomplished in  Step 4 of  Appendix \ref{app:proofthm1}). While the corresponding analysis in \cite[Thm. 5]{ndaoud2022sharp} deals with  standard Gaussian vector and hence   rotational invariance readily yields sharp tail bound for the marginals, the $\delta_j$'s in Equation (\ref{definePstar11}) are high-dimensional \emph{non-Gaussian} random vectors and in turn the analysis here is much more entangled.
We begin with \begin{align}
        \bigg\langle\theta+\delta_1,\theta+\frac{1}{n-1}\sum_{j=2}^n\delta_j\bigg\rangle=\|\theta\|_2^2+ \bigg\langle\theta,\frac{1}{n-1}\sum_{j=2}^n \delta_j\bigg\rangle + \bigg\langle\delta_1,\theta+\frac{1}{n-1}\sum_{j=2}^n\delta_j\bigg\rangle.
\end{align}
In view of $\|\theta\|_2^2$, the term $\frac{C\|\theta\|_2^2}{r_{n,\lambda}}$ in (\ref{definePstar11}) has minimal impact due to $r_{n,\lambda}\gtrsim 1$. By $\|\delta_j\|_{\psi_2}\lesssim \lambda$, the standard sub-Gaussian tail bound shows that $\langle \theta,\frac{1}{n-1}\sum_{j=2}^n\delta_j\rangle$ is negligible. The main bulk of techniques lies in treating $\langle\delta_1,\theta+\frac{1}{n-1}\sum_{j=2}^n\delta_j\rangle$:

{\it (C.1) Establishing sharp bound on $\|\theta+\frac{1}{n-1}\sum_{j=2}^n\delta_j\|_2$.} We start with $
     \big\|\theta+\frac{1}{n-1}\sum_{j=2}^n\delta_j\big\|_2^2  =\|\theta\|_2^2+\big\|\frac{1}{n-1}\sum_{j=2}^n\delta_j\big\|_2^2 + 2\big\langle \theta,\frac{1}{n-1}\sum_{j=2}^n\delta_j\big\rangle.$ A sub-Gaussian tail bound renders $\langle\theta,\frac{1}{n-1}\sum_{j=2}^n\delta_j\rangle$   
negligible. To   bound $\|\frac{1}{n-1}\sum_{j=2}^n\delta_j\|_2^2$, we first use a Bernstein's inequality to show that it sharply concentrates about $\mathbb{E}\|\frac{1}{n-1}\sum_{j=2}^n\delta_j\|_2^2=\frac{1}{n-1}\mathbb{E}\|\delta_2\|_2^2$, and then leverage $Q_{2\lambda}(\theta+\varepsilon_i+\tau_i)\approx  \lambda\sign(\theta+\varepsilon_i+\tau_i)$ to compute the expectation with sharp constant; see Lemma \ref{lem:boundsumvnorm}. We then reach $\|\frac{1}{n-1}\sum_{j=2}^n\delta_j\|_2^2\le(1+o(1))(\|\theta\|_2^2+\frac{p\lambda^2}{n})$. 

{\it (C.2) Passing to the margins of $\delta_1$.} The sharp bound on $\|\theta+\frac{1}{n-1}\sum_{j=2}^n\delta_j\|_2$ readily yields $P^*\le P^{**}+\textrm{negligible terms}$, where  \begin{align*}
    P^{**} = \mathbb{P}\bigg(\langle \delta_1,\hat{\xi}\rangle<-(1-o(1))\frac{\|\theta\|_2^2}{\sqrt{\|\theta\|_2^2+p\lambda^2/n}}\bigg),~\textrm{for}~~\hat{\xi}:=\frac{\theta+(n-1)^{-1}\sum_{j=2}^n\delta_j}{\|\theta+(n-1)^{-1}\sum_{j=2}^n\delta_j\|_2}\in\mathbb{S}^{p-1}.
\end{align*}

 {\it (C.3) Bounding $P^{**}$ sharply.} We    reduce the random vector $\delta_1=Q_{2\lambda}(\theta+\varepsilon_j+\tau_j)-\theta$ to the bounded $\tilde{\delta}_1:=\lambda\sign(\theta+\varepsilon_1+\tau_1)-\mathbb{E}[\lambda\sign(\theta+\varepsilon_1+\tau_1)]$ in Lemma \ref{lem:Pstarstar} and then invoke Hoeffding's inequality, which happens to yield sharp bound in this regime.

\section{Minimax Lower Bound}\label{sec:lower}
In this section, we establish a minimax lower bound for clustering based on the one-bit samples $(\dot{X}_i)_{i=1}^n$, showing that our results are sharp in general. Note that existing lower bounds (see \cite{ndaoud2022sharp,chen2021cutoff,lu2016statistical}) are derived under isotropic Gaussian noise $\varepsilon_i\sim N(0,\sigma^2I_p)$, and since the quantization breaks rotational invariance, our proof requires several different ideas. In particular, the main lemma is the anti-concentration bound for a binomial variable (cf. Lemma \ref{lem:anti}).


\begin{theorem}[Minimax lower bound]\label{thmlower}
    Given $\|\theta\|_2=\Delta$, assume that $(\varepsilon_i)_{i=1}^n$ are i.i.d. $N(0,\sigma^2I_p)$ vectors, $\lambda\gtrsim\frac{\Delta}{\sqrt{p}}+\sigma$, and that a mild scaling condition $\log(\frac{2\lambda\sqrt{p}}{\Delta})\le \frac{\lambda^2}{8\sigma^2}$ holds. Then let $\hat{\eta}=\hat{\eta}(\{\dot{X}_i\}_{i=1}^n)$ be a measurable function of $\{\dot{X}_i=\sign(X_i+\tau_i)\}_{i=1}^n$ (where $\tau_i\stackrel{iid}{\sim}{\rm Unif}[-\lambda,\lambda]^p$),
    we have  
    \begin{align*}
        \inf_{\hat{\eta}}\sup_{\theta,\eta}\, \mathbb{E} \big[\ell(\hat{\eta},\eta)\big]\ge \frac{1}{\sqrt{2p}}\exp\Big(-\frac{(1+\epsilon)\Delta^2}{2\lambda^2}\Big)
    \end{align*}
  where the constant $\epsilon\to0$ when $\frac{\lambda}{p^{-1/2}\|\theta\|_2 +\sigma}\to \infty$. 
\end{theorem}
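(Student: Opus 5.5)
The plan is to fix a flat center, reduce the minimax risk to a single-label testing problem with an oracle that knows all other labels, and then compute the Bayes error of that test exactly through a binomial statistic.

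\textbf{Choice of center and reduction to one label.} I take the flat center $\theta=\frac{\Delta}{\sqrt p}\mathbf{1}_p$, which is the least spiky choice. Write $a:=\Delta/\sqrt p$, so that $pa^2=\Delta^2$. For the labels I follow the standard reduction in \cite{ndaoud2022sharp}. I put a prior on $\eta$: the labels are i.i.d.\ Rademacher, except that a small block is fixed to break the global sign ambiguity in $\ell$. The supremum over $\eta$ is at least the Bayes risk, and the $\min$ in $\ell$ can then be removed up to a negligible loss. Hence $\inf_{\hat\eta}\sup_{\theta,\eta}\mathbb{E}\,\ell(\hat\eta,\eta)$ is bounded below, up to a $1-o(1)$ factor, by the Bayes error of deciding $\eta_1$. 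I give the decision maker $\theta$ and $\{\eta_i\}_{i\ge2}$ for free. Since the samples are independent, the other samples then carry no information about $\eta_1$, and the problem becomes a simple-versus-simple test based on $\dot X_1$ alone with uniform prior. Its Bayes error is $\frac12\big(P_+(\text{reject})+P_-(\text{accept})\big)$ for the likelihood ratio test.

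\textbf{Distribution of one quantized sample.} Conditional on $\eta_1=\pm1$, the entries $\dot X_{1j}$ are i.i.d.\ $\pm1$ valued. By \eqref{bvtrade} and independence of the dither, $\mathbb{E}[\lambda\dot X_{1j}]=\mathbb{E}\,T_\lambda(\pm a+\varepsilon_{1j})=\pm m$. By symmetry of $\varepsilon$, this mean is exactly antisymmetric in the sign of the label. So $\dot X_{1j}=1$ with probability $q:=\frac{1+m/\lambda}{2}$ under $\eta_1=1$, and with probability $1-q$ under $\eta_1=-1$. The likelihood ratio depends only on $S:=\#\{j:\dot X_{1j}=1\}\sim{\rm Bin}(p,q)$ or ${\rm Bin}(p,1-q)$. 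The optimal test is therefore $\{S>p/2\}$, and its error is at least $\frac12\mathbb{P}({\rm Bin}(p,q)\le p/2)$, or simply $\mathbb{P}({\rm Bin}(p,q)\le \lfloor p/2\rfloor)$ after handling ties.

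Next I need $m$ sharply. I have $m=a-\mathbb{E}[(a+\varepsilon-\lambda\sign(a+\varepsilon))\mathbf 1(|a+\varepsilon|>\lambda)]$, so $|m-a|\lesssim \sigma e^{-(\lambda-a)^2/(2\sigma^2)}$. Using $\lambda\gtrsim a+\sigma$ together with the scaling condition $\log\frac{2\lambda\sqrt p}{\Delta}\le\frac{\lambda^2}{8\sigma^2}$, this truncation bias is at most a vanishing fraction of $a$. Hence $m=(1+o(1))a$ and $m\le a$, where the $o(1)$ vanishes as $\lambda/(a+\sigma)\to\infty$.

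\textbf{Binomial anti-concentration.} I invoke Lemma \ref{lem:anti} in the form
\[
\mathbb{P}\big({\rm Bin}(p,q)\le p/2\big)\ge \frac{1}{\sqrt{2p}}\exp\big(-p\,{\rm KL}(\tfrac12\|q)\big).
\]
Here ${\rm KL}(\tfrac12\|q)=-\frac12\log(1-m^2/\lambda^2)=\frac{m^2}{2\lambda^2}(1+O(m^2/\lambda^2))$. Since $m/\lambda\le a/\lambda\to0$, the exponent is $\frac{(1+\epsilon)pa^2}{2\lambda^2}=\frac{(1+\epsilon)\Delta^2}{2\lambda^2}$, which gives the claimed bound.

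\textbf{Main obstacles.} I expect two. The first is getting the polynomial prefactor $1/\sqrt{2p}$ exactly, rather than the loose constants a crude Chernoff-type lower bound would give, which is precisely what Lemma \ref{lem:anti} must provide. The second is checking that the truncation bias is genuinely $o(a)$ under the stated scaling condition. If the latter fails near the boundary, I would instead keep $m$ in the exponent and use $m\le a$, which is all that is needed for a lower bound. The sign-ambiguity reduction is routine by comparison.
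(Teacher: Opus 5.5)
Your proposal is correct and follows essentially the same route as the paper: the flat center $\theta=\frac{\Delta}{\sqrt p}\mathbf{1}_p$, a Rademacher-prior reduction to the Bayes test for a single label, the optimal rule $\sign(\langle \dot X_j,\mathbf{1}\rangle)$, and Lemma~\ref{lem:anti} applied to a binomial count with $P_{-1}=\frac12-\frac{m}{2\lambda}$, where $m=\mathbb{E}\,T_\lambda(\frac{\Delta}{\sqrt p}+\varepsilon_{ij})$. Your fallback of using only $0\le m\le \frac{\Delta}{\sqrt p}$ is slightly cleaner than the paper's estimate $P_{-1}\ge\frac12-\xi_0$. The bound $m\le\frac{\Delta}{\sqrt p}$ holds for any symmetric noise, since $T_\lambda(x)=x-(x-\lambda)_+ +(-x-\lambda)_+$ and $t\mapsto\mathbb{E}(\varepsilon_{ij}-t)_+$ is nonincreasing. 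The paper's $\xi_0$ carries Gaussian-tail terms, and those are what force $\lambda\gtrsim\sigma$ and the scaling condition. One caveat: the sign-ambiguity reduction costs a universal constant, as in the paper's $c$ in \eqref{eq11}, not the $1-o(1)$ factor you claim.
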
 
We now discuss in what regimes the above lower bound implies the sharpness of our recovery guarantees in Section \ref{sec:main}. 
\begin{rem} \label{rem:whensharp}
    By absorbing the leading factor $\frac{1}{\sqrt{2p}}$ into the exponent, the lower bound reads $
          \frac{1}{\sqrt{2p}}\exp\big(-\frac{(1+\epsilon)\|\theta\|_2^2}{2\lambda^2}\big) = \exp\big(-(1+\epsilon+\frac{\log(2p)\lambda^2}{\|\theta\|_2^2})\frac{\|\theta\|_2^2}{2\lambda^2}\big)$ 
     and can be written as $\exp\big(-(1+\epsilon')\frac{\|\theta\|_2^2}{2\lambda^2}\big)$ for small enough $\epsilon'$ under $  \|\theta\|_2^2    \gtrsim  \lambda^2\log(2p)$. Recall that our misclassification rates in Theorems \ref{maintheorem}, \ref{thm:spifree} read $\exp\big(-\frac{1-o(1)}{2}r_{n,\lambda^2}\big)=\exp\big(-\frac{1-o(1)}{2}\frac{(\|\theta\|_2/\lambda)^4}{(\|\theta\|_2/\lambda)^2+p/n}\big)$. Hence, it reduces to $\exp\big(-\frac{1\pm o(1)}{2}\frac{\|\theta\|_2^2}{\lambda^2}\big)$ under $p\ll \frac{n\|\theta\|_2^2}{\lambda^2}$.\footnote{We write this to denote $p\le \frac{cn\|\theta\|_2^2}{\lambda^2}$ for some small enough $c$.} In conclusion, under the two scaling conditions of 
     \begin{align}
         \label{matchregime}
         \textrm{$\|\theta\|_2^2\gtrsim \lambda^2\log(2p)$ ~~and~~ $p\ll\frac{n\|\theta\|_2^2}{\lambda^2}$,}
     \end{align}
     the lower bound matches the upper bounds in Theorems \ref{maintheorem} and \ref{thm:spifree}, indicating  that $\exp\big(-\frac{1\pm o(1)}{2}\frac{\|\theta\|_2^2}{\lambda^2}\big)$ is the sharp misclassification rate. Since exact recovery is achieved if and only if $\ell(\hat{\eta}_t,\eta)<\frac{1}{n}$, the lower bound also indicates the sharpness of the separation conditions (\ref{cor:exact}) and (\ref{oursepara}) in the regime of (\ref{matchregime}).
\end{rem}
\begin{rem}
   However, we do not have a matching lower bound in some regimes of interest, such as the high-dimensional regime $p\gg \frac{n\|\theta\|_2^2}{\lambda^2}$, where our upper bound reads $\exp\big(-\frac{1-o(1)}{2}\frac{n\|\theta\|_2^4}{p\lambda^4}\big)$. We leave the development of tighter lower and upper bounds in these regimes to future work. \label{rem:limitlower}
\end{rem}

We provide an overview of the proof here and relegate the complete proof to Appendix \ref{app:prooflower}.  

\begin{proof}[Overview of the Proof for Theorem \ref{thmlower}] 
Unlike in \cite{lu2016statistical,ndaoud2022sharp}, under quantization,
the Bayes classifier does not have a closed form under general $\theta$. Our remedy is to fix the specific center as $\theta=\frac{\Delta}{\sqrt{p}}\mathbf{1}_p$. We then place an independent Rademacher prior on the label $\eta$ and leverage a standard argument (e.g., \cite{gao2018community}) to get  
$
       \inf_{\hat{\eta}}\sup_{\theta,\eta}\, \mathbb{E} \big[\ell(\hat{\eta},\eta)\big]  \ge c \cdot \inf_{\bar{\eta}_j}\,\mathbb{E}_\pi \mathbb{E}\,|\bar{\eta}_j(\dot{X}_j)-\eta_j|,$
where   $\eta_j$ is a Rademacher variable, and $\bar{\eta}_j$ is a measurable function of $\dot{X}_j$ for estimating $\eta_i$, $c>0$ is a universal constant. 
Since the entries of $\theta$ are equal and the noise and dithering are symmetric, we find that $\eta_j=1$ renders higher likelihood if and only if the number of $1$'s in $\dot{X}_j$ is higher than the number of $-1$'s. Thus, the Bayes optimal selector attaining the infimum $\inf_{\bar{\eta}_j}\,\mathbb{E}_\pi \mathbb{E}\,|\bar{\eta}_j(\dot{X}_j)-\eta_j|$ is given by $\eta_j^* = \sign(\langle \dot{X}_j,\mathbf{1}\rangle)$. Note that it makes an error if and only if {\it at least half  of the  entries of $\dot{X}_j$ have different signs from $\eta_j$.} 
  Observe that the entries of $\dot{X}_j$ are independent and have sign different from $\eta_j$ with the same probability $P_-:=\mathbb{P}(\frac{\|\theta\|_2}{\sqrt{p}}+\varepsilon_{ij}+\tau_{ij}<0)$. Therefore, the number of $-\eta_j$ in $\dot{X}_j$  follows ${\rm Binomial}(p,P_-)$, and   thus by Lemma \ref{lem:anti} 
$\mathbb{E}_\pi \mathbb{E}\,|\bar{\eta}_j(\dot{X}_j)-\eta_j|=\mathbb{E}|\eta_j^*-\eta_j|=2 \mathbb{P}\Big({\rm Binomial}(p,P_-)>\frac{p}{2}\Big)\ge \frac{1}{\sqrt{2p}}\exp\Big(-pD_{\rm KL}\big(\,\frac{1}{2}\,\| P_-\big)\Big).$ 
By further computing $P_-$ and performing some algebra, we obtain the claimed lower bound under the scaling assumptions of the theorem.
\end{proof}

\section{Experiments}\label{sec:experi}
We provide experimental results to corroborate our theory and demonstrate the effectiveness of the proposed method. We consider two Gaussian mixtures with $\theta$ uniformly distributed over $\Delta \mathbb{S}^{p-1}$ and  $\varepsilon_i\sim N(0,I_p)$.\footnote{Due to  $\theta\sim {\rm Unif}(\Delta\mathbb{S}^{p-1})$ we do not need a separate Haar matrix rotation step.} We run $200$ iterations in Algorithm \ref{alg:2lloyd} and the reported results are averaged over $300$ independent trials. All experiments were implemented using Matlab R2022a on a laptop with an Intel CPU up to 2.5 GHz and 32 GB RAM. We defer some details to Appendix \ref{app:detailexp} due to page limit.

\noindent\textbf{Phase Transitions.} The aim of our first two experiments is to illustrate the separation condition (\ref{sepaexact}) for achieving exact recovery. Note that  the minimal $\lambda$ in (\ref{lambdacon1}) is approximately $\sqrt{2\log(np)}$ (cf. Equation (\ref{minimallambda}), Remark \ref{rem:tuning}). Further illustrations of the experimental designs can be found in Appendix \ref{app:details}.

We start with a low-dimensional setting with   $p=5$, in which (\ref{sepaexact}) is provably sharp (cf. Remark \ref{rem:whensharp}). We set the dithering level $\lambda=\sqrt{2\log n}$ in  light of $\sqrt{2\log(np)}\approx \sqrt{2\log n}$ under $p=5$. In this setting, the separation condition (\ref{sepaexact}) approximately reduces to $\Delta\ge 2\log n$. We set $a=\log n$, $b= \Delta$ and test  $a=3.4:0.25:6.9,\,b=4:0.25:16$. We then report the empirical rates of exact recovery in Figure \ref{fig:sub1}. Consistent with our theory, the results suggest that the phase transitions from partial recovery to exact recovery occur roughly in $b=2a$.

Fix $n=100$, we also follow \cite{ndaoud2022sharp} to provide a high-dimensional setting with $p=bn\log n$, $\Delta^2 = \lambda^2(1+\sqrt{a})\log n$, and $\lambda=\sqrt{2\log (np)}$. Hence, our (\ref{sepaexact}) suggests that $a \ge 1+2b$ ensures exact recovery (w.h.p.).  We test $b= 2.4:0.05:4.6$ and $a= 1.5:0.1:13$ and similarly report the empirical exact recovery rates in    Figure \ref{fig:sub2}. While we do not have a matching lower bound in this regime (cf. Remark \ref{rem:limitlower}), it appears that     $a \ge 1+2b$ remains close to the locations of the phase transitions.

\noindent\textbf{One-bit v.s. Classical.} Our third experiment confirms the efficacy of the proposed method by comparing the performance of \cite{ndaoud2022sharp} (using $\{X_i\}_{i=1}^n$) and Algorithm \ref{alg:2lloyd} (using $\{\dot{X}_i\}_{i=1}^n$).   We set $\lambda=0.3\cdot\sqrt{2\log (np)}$ for our method with a shrinkage factor of $0.3$; see further justification in Appendix \ref{app:shrink}.  We test $n=100$, $p=300$ and $\Delta = 0.1:0.1:6$ and report the   misclassification rates in Figure \ref{fig:sub3}. Empirically, our one-bit clustering method requires {\it no more than twice the separation} $\Delta$ to achieve the same misclassification rate as in the classical clustering setting.   

\begin{figure}[ht!]
    \centering

    \begin{subfigure}[t]{0.32\textwidth}
        \centering
        \includegraphics[width=\textwidth]{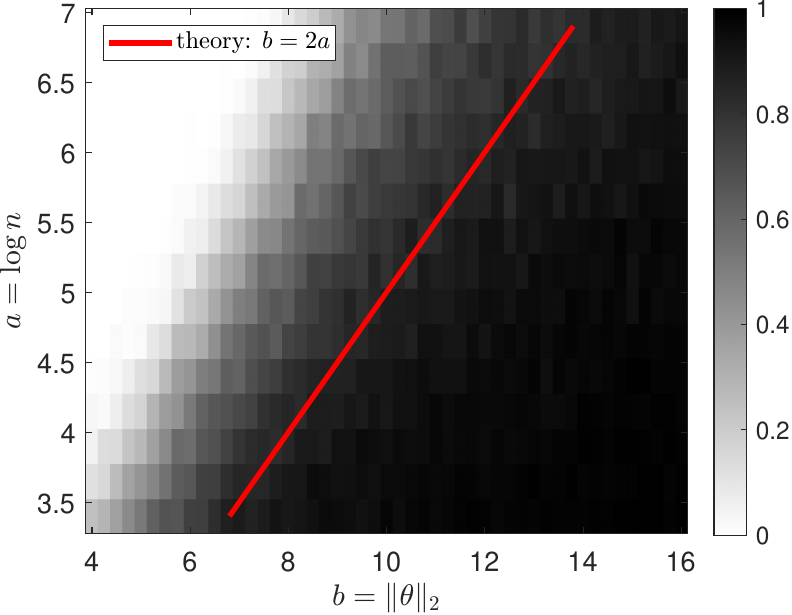}
        \caption{Phase transition  ($p=5$)}
        \label{fig:sub1}
    \end{subfigure}
    \hfill
    \begin{subfigure}[t]{0.31\textwidth}
        \centering
        \includegraphics[width=\textwidth]{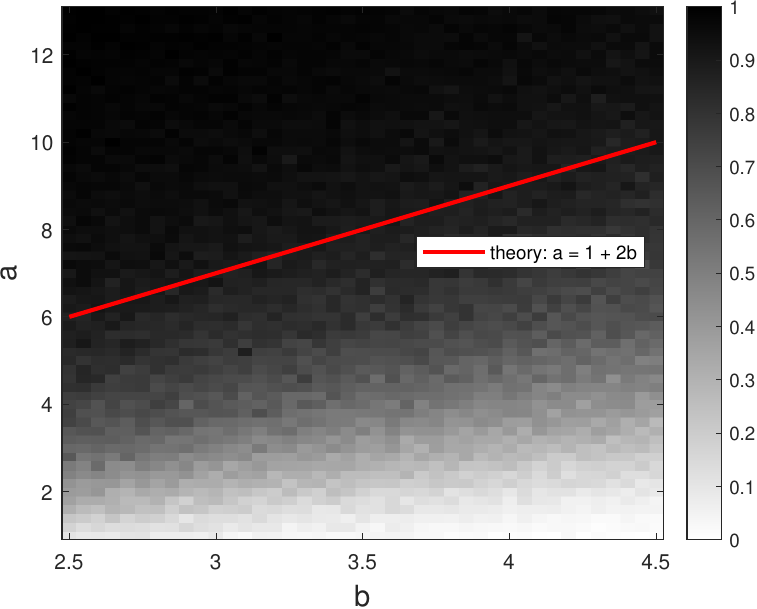}
        \caption{Phase transition  ($p=bn\log n$)}
        \label{fig:sub2}
    \end{subfigure}
    \hfill
    \begin{subfigure}[t]{0.31\textwidth}
        \centering
        \includegraphics[width=\textwidth]{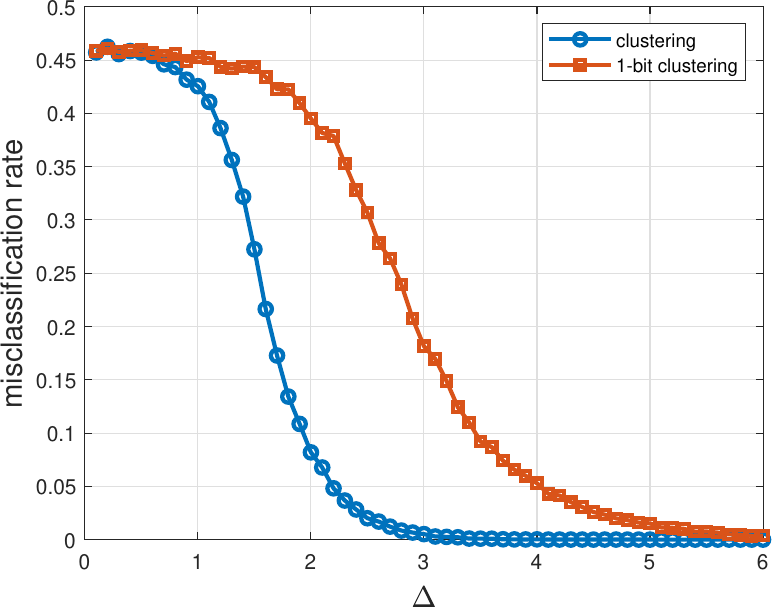}
        \caption{1-bit clustering v.s. clustering}
        \label{fig:sub3}
    \end{subfigure}

    \caption{Phase transitions and comparison with classical clustering.}
    \label{fig:three_figs}
\end{figure}

\section{Conclusion}\label{sec:conclu}
This paper provides the first one-bit clustering method for two symmetric sub-Gaussian mixtures. We adopt a dithered one-bit quantizer and establish partial and exact recovery guarantees that are only slightly worse than those in the classical unquantized setting. 
A lower bound is also provided to show that the recovery guarantees exhibit sharp constants in general. 
Compared to previous work on one-bit estimation, several new phenomena arise, including the achievability of exact recovery and a subtle interaction with the spikiness of the center. The latter can be addressed via a Haar matrix rotation step, which enables partial recovery under $p\gtrsim 1$ and exact recovery under $p\gtrsim \log n\log\log n$. There remain many interesting directions for future work, including tuning-free one-bit clustering (cf.\ Remark \ref{rem:tuning}), sharp bounds in high dimensions (cf.\ Remark \ref{rem:limitlower}),  the performance of other clustering algorithms under quantization, and extension to more than two mixtures that are not in symmetric positions.

\bibliography{libr}

\appendix

 \section{Proof of Theorem \ref{maintheorem} (Partial recovery)} \label{app:proofthm1}
 \begin{proof}
As $\lambda \dot{X}_i$ is the surrogate of $X_i$,
we define the overall noise on the sample as 
\begin{align*}
    u_i =  \lambda \dot{X}_i - \eta_i\theta = \lambda\sign(\eta_i\theta + \varepsilon_i + \tau_i)-\eta_i\theta
\end{align*}
and can formulate the problem in matrix form:
\begin{align*}
    \lambda\dot{Y} = \theta \eta^T + U 
\end{align*}
where $U = [u_1,\cdots,u_n]$. Therefore, the one-bit clustering problem can be treated as a clustering problem with noise $U$. While $u_i$ is bounded and therefore sub-Gaussian, the difficulty is that $u_i$ is not zero-mean. 
We bypass the difficulty by a technical trick developed in \cite{chen2025parameter}: first, we analyze the uniform quantizer that leads to zero-mean noise; second, we show that, when $\lambda\ge \|\theta\|_\infty + L\sqrt{2\log(np)}$, then this uniform quantizer is identical to our one-bit quantizer with high probability. 
\subsubsection*{Step 1: Introducing the Uniform Quantizer}
The uniform quantizer is given by 
\begin{align*}
    Q_{2\lambda}(a) = 2\lambda \Big(\lfloor\frac{a}{2\lambda}\rfloor + \frac{1}{2}\Big),
\end{align*}
and notice that  
\begin{align}\label{equationQ2}
    Q_{2\lambda}(a) = \lambda\sign(a)\,,\quad\forall  |a|\le 2\lambda. 
\end{align}
See the following Figure \ref{fig:Q2_sign}. 

  \begin{figure}[ht!]
\hspace{-1.5cm}\begin{center}
     \includegraphics[height=0.45\textwidth]{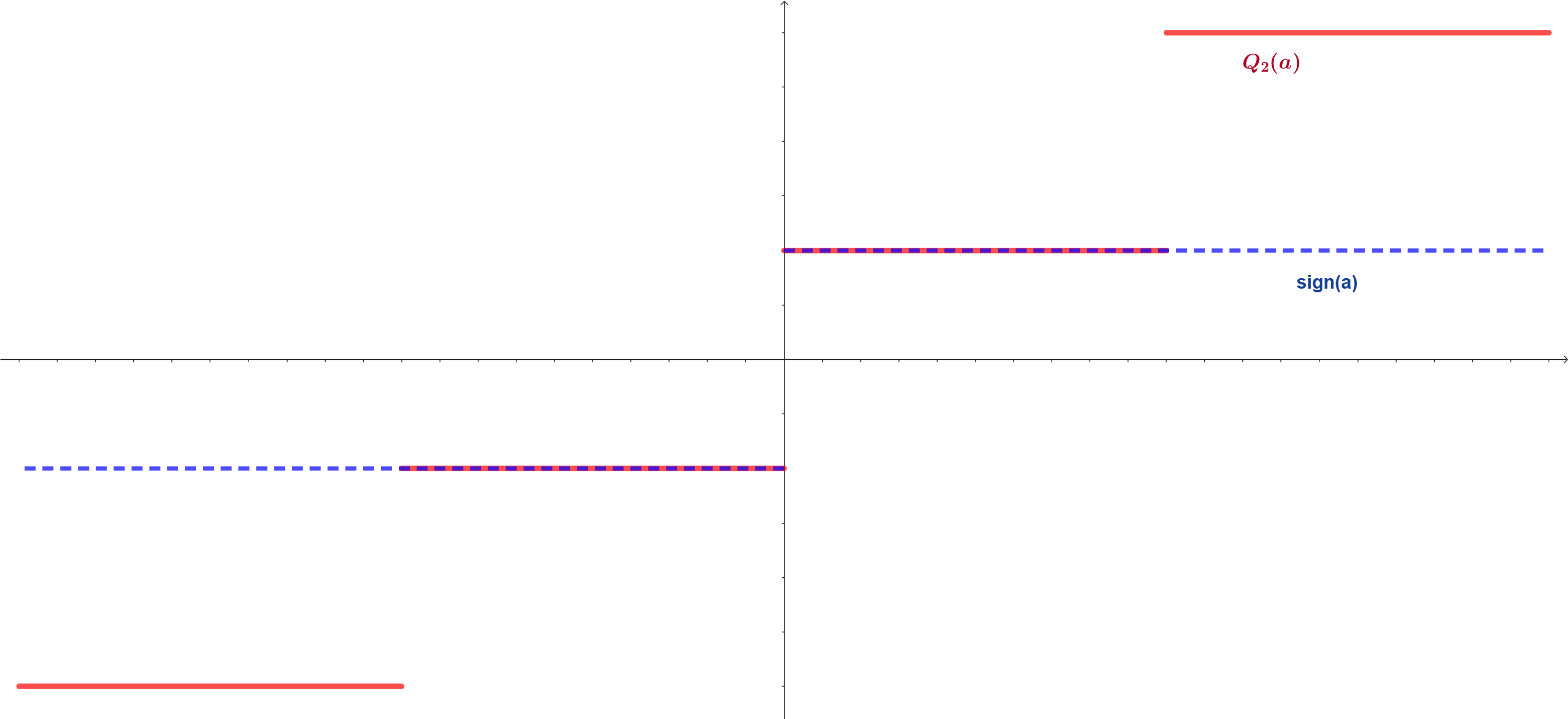}
\end{center}
     \caption{Graphical illustration of Equation (\ref{equationQ2}): the red curve $Q_{2}(a)$ and the blue curve $\sign(a)$ are identical when $|a|<2$.}
     \label{fig:Q2_sign}
 \end{figure}

We now proceed the analysis with the quantized samples $$\tilde{X}_i = Q_{2\lambda}(X_i + \tau_i),\quad i=1,\cdots,n,$$ which can be arranged in the matrix $\tilde{Y}=[\tilde{X}_1,\cdots,\tilde{X}_n]$. We then proceed to analyze the following two-stage procedure, replacing $\dot{Y}$ in Algorithm \ref{alg:2lloyd} by $\tilde{Y}$:
\begin{enumerate}
[leftmargin=5ex,topsep=0.25ex]
    \setlength\itemsep{-0.3em}
    \item Let $\tilde{v}$ be the leading eigenvector of $H(\tilde{Y}^T\tilde{Y})$ and let $\tilde{\eta}^0 = \sign(\tilde{v})$;
    \item Run $\tilde{\eta}^{k+1}=\sign(H(\tilde{Y}^T\tilde{Y})\tilde{\eta}^k)$ to obtain the sequence $\{\tilde{\eta}^i\}_{i\ge 0}$.
\end{enumerate}
We let $e_i=\tilde{X}_i-X_i$ for $i\in [n]$, then we have that $e_i$ is zero-mean and $O(\lambda)$ sub-Gaussian (cf. Lemma \ref{lem:quannoi}). We further let $u_i=e_i+\varepsilon_i$, then the observations $\tilde{X}_i$ can be expressed as
\begin{align*}
    \tilde{X}_i = X_i+e_i = \eta_i\theta + \varepsilon_i + e_i = X_i + u_i,
\end{align*}
or  as the matrix form 
\begin{align*}
    \tilde{Y} = \theta \eta^T + W + E  = \theta\eta^T + U 
\end{align*}
by letting \[\textrm{$W=[\varepsilon_1,...,\varepsilon_n]$,\quad $E=[e_1,...,e_n]$\quad and\quad $U = [u_1,...,u_n]$.}\] Given that the columns of $E$ and $U$ are both zero-mean and sub-Gaussian, we can largely follow the argument in \cite{ndaoud2022sharp} to analyze the algorithm along with a transition from Gaussian noise to sub-Gaussian noise.  

\subsubsection*{Step 2. Extending the Analysis of \cite[Theorems 3 \& 4]{ndaoud2022sharp}}
\subsubsection*{Step 2.1. Crude Bound on Spectral Initialization}   
We use $\tilde{Y}=\theta \eta^T +U$ and start with  
\begin{align*}
    \frac{1}{n}\tilde{Y}^T\tilde{Y} &= \frac{1}{n}(\theta\eta^T+U)^T(\theta\eta^T+U)= \frac{\|\theta\|_2^2}{n}\eta\eta^T + \underbrace{\frac{\eta\theta^TU+U^T\theta\eta^T}{n} + \frac{U^TU}{n}}_{:=Z_1}\,,
\end{align*}
which gives
\begin{align}\label{HYYZ2}
    H\Big(\frac{1}{n}\tilde{Y}^T\tilde{Y}\Big) = \frac{\|\theta\|_2^2}{n}\eta\eta^T  - \frac{\|\theta\|_2^2I_n}{n} +H(Z_1) .
\end{align}
Therefore, by letting $Z_2=H(Z_1)-\frac{\|\theta\|_2^2}{n}I_n$, we reach 
\begin{align}\label{dklem}
    \bigg\|H\Big(\frac{1}{n}\tilde{Y}^T\tilde{Y}\Big)- \frac{\|\theta\|_2^2}{n}\eta\eta^T\bigg\|_{op} \le \|Z_2\|_{op}.
\end{align}
Further noticing that the columns of $U$ are independent and zero-mean,   triangle inequality yields
\begin{align}\nn
    \|Z_2\|_{op} &\le \frac{\|\theta\|^2_2}{n} + 2\bigg\|H\Big(\frac{\eta\theta^TU}{n}\Big)\bigg\|+ \bigg\|H\Big(\frac{U^TU}{n}\Big)\bigg\|_{op}
    \\\nn&\quad\blacktriangleright \textrm{by $Z_2= H\Big(\frac{\eta\theta^TU}{n}\Big)+H\Big(\frac{U^T\theta\eta^T}{n}\Big)+H\Big(\frac{U^TU}{n}\Big)-\frac{\|\theta\|_2^2}{n}I_n$}
    \\\nn 
    &\le \frac{\|\theta\|^2_2}{n} + 4\bigg\|\frac{\eta\theta^TU}{n}\bigg\|_{op} + \bigg\|H\Big(\frac{U^TU}{n}-\mathbb{E}\frac{U^TU}{n}\Big)\bigg\|_{op}
    \\\explain\textrm{by Lemmas \ref{Hbound2} and \ref{lemrow}}
    \\
    &\le \frac{\|\theta\|^2_2}{n} + \frac{4\|\eta\|_2\|U^T\theta\|_2}{n} + \frac{2\|U^TU-\mathbb{E}(U^TU)\|_{op}}{n}. \label{z2divide}\\
    \explain\textrm{by $\|\eta\theta^TU\|_{op}\le \|\eta\|_2\|U^T\theta\|_2$}
\end{align} 
Since the columns of $U$ are independent, zero-mean and have sub-Gaussian norm bounded by 
\begin{align}\label{uisgbound}
    &\|u_i\|_{\psi_2}\le \|\varepsilon_i\|_{\psi_2}+\|e_i\|_{\psi_2}\lesssim \sigma + \lambda \lesssim \lambda, \\\explain\textrm{by Assumption \ref{subgaussian}, Lemma \ref{lem:quannoi}, Equation (\ref{lambdacon1})}
\end{align}
Therefore, $U^T\theta\in \mathbb{R}^n$ is a vector with independent entries whose sub-Gaussian norms are bounded by $O(\lambda\|\theta\|_2)$. Thus, we invoke Lemma \ref{lem:sgvbound} to obtain 
\begin{align}\label{Uthetal2}
    \mathbb{P}\Big(\|U^T\theta\|_2 \le C\lambda \|\theta\|_2 \sqrt{n}\Big) \ge 1- 2\exp(-2n), 
\end{align}
and on the high-probability event $\|U^T\theta\|_2\le C\lambda\|\theta\|_2\sqrt{n}$ we have
\begin{align}\label{etaubound}
    \frac{4\|\eta\|_2\|U^T\theta\|_2}{n} = \frac{4\|U^T\theta\|_2}{\sqrt{n}} \le 4C\lambda\|\theta\|_2.
\end{align} 
Moreover, since $U$ has independent rows of $O(\lambda)$ sub-Gaussian norms,\footnote{To see this, notice that $U=W+E$ and both $W$ and $E$ have independent entries of $O(\lambda)$ sub-Gaussian norms.} by Lemma \ref{utubound} we have
\begin{align}
    \mathbb{P}\Big(\|U^TU-\mathbb{E}(U^TU)\|_{op}\le C\cdot \lambda^2n\max\{1,\sqrt{\frac{p}{n}}\}\Big) \ge 1-2\exp(-2n). \label{utuboundeq}
\end{align}
 Therefore,  
 \begin{align}\nn 
       &\bigg\|H\Big(\frac{1}{n}\tilde{Y}^T\tilde{Y}\Big)- \frac{\|\theta\|_2^2}{n}\eta\eta^T\bigg\|_{op}\le \|Z_2\|_{op} \\\nn&\le \frac{\|\theta\|_2^2}{n} + C\Big(\lambda\|\theta\|_2 +  \lambda^2\max\{1,\sqrt{p/n}\} \Big)\\\explain \textrm{by substituting (\ref{etaubound}), (\ref{utuboundeq}) into (\ref{z2divide})}\\&\le \frac{\|\theta\|_2^2}{n} + C'\Big(\lambda\|\theta\|_2+\lambda^2\sqrt{\frac{p}{n}}\Big)\,,\label{ref1111}\\
       \explain\textrm{by $\|\theta\|_2\ge \lambda$; see Equation (\ref{snrgtr1})}\\
       &\le \frac{\|\theta\|^2_2}{2} \nn \\
       \explain \textrm{by $\lambda\le \frac{c\|\theta\|_2}{1+(p/n)^{1/4}}$ for small enough $c$}
 \end{align}
 Note that the leading eigenvalue of $\frac{\|\theta\|_2^2\eta\eta^T}{n}$ is
\begin{align*}
    \lambda_1\Big(\frac{\|\theta\|^2_2\eta\eta^T}{n}\Big) = \|\theta\|^2_2,
\end{align*}
with the leading eigenvector being $\frac{\eta}{\sqrt{n}}$.  
 Therefore, by Davis-Kahan's theorem (e.g.,  \cite[Theorem 20]{chi2019nonconvex}), the leading eigenvector of $H(\tilde{Y}^T\tilde{Y})$ denoted by $\tilde{v}$ satisfies 
\begin{align}\label{Z2bound}
    \min\left\{\Big\|\hat{v}-\frac{\eta}{\sqrt{n}}\Big\|_2,\Big\|\hat{v}+\frac{\eta}{\sqrt{n}}\Big\|_2\right\}\lesssim \frac{\|Z_2\|_{op}}{\|\theta\|^2_2}\lesssim   \frac{1}{n} + \frac{\lambda}{\|\theta\|_2}+\frac{\lambda^2}{\|\theta\|_2^2}\sqrt{\frac{p}{n}}\,.
\end{align} 
Recall that $\tilde{\eta}^0=\sign(\tilde{v})$, and therefore 
\begin{align*}
   \frac{\min\{\|\tilde{\eta}^0-\eta\|_1,\|\tilde{\eta}^0+\eta\|_1\}}{n}&\le 2\min\left\{\Big\|\hat{v}-\frac{\eta}{\sqrt{n}}\Big\|_2^2,\Big\|\hat{v}+\frac{\eta}{\sqrt{n}}\Big\|^2_2\right\}\\\explain\textrm{by Lemma \ref{retractbound}}\\
    &\lesssim  \frac{1}{n^2}+ \frac{\lambda^2}{\|\theta\|_2^2}+\frac{\lambda^4}{\|\theta\|_2^4}\frac{p}{n}.\\
    \explain\textrm{by Equation (\ref{Z2bound})}
\end{align*}
By using $r_{n,\lambda}=\frac{\|\theta\|_2^2/\lambda^2}{\sqrt{\|\theta\|_2^2/\lambda^2+p/n}}$, it follows that 
\begin{align}\label{finalinibound}
    \mathbb{P}\bigg(\ell(\tilde{\eta}^0,\eta)\le C\Big[\frac{1}{n^2}+\frac{1}{r_{n,\lambda}^2}\Big]\bigg)\ge 1-4\exp(-2n)
\end{align}
for some absolute constant $C$.

\subsubsection*{Step 2.2. Analyzing the Local Refinement}

We now analyze the iterates of Lloyd's algorithm.  We define the following events
\begin{gather}\label{Aidefine}
    A_i:= \Big\{\Big(\frac{H(\tilde{Y}^T\tilde{Y})_i^T}{n}\eta\Big)\eta_i\ge \frac{C^*\|\theta\|^2_2}{r_{n,\lambda}}\Big\},\quad i =1,...,n.
\end{gather}
where $H(\tilde{Y}^T\tilde{Y})_i^T$ denotes the $i$-th row of $H(\tilde{Y}^T\tilde{Y})$, $C^*$ can be chosen to be large enough. We then define
\begin{align}\label{eb}
    B := \bigg\{\|Z_2\|_{op}\le C_1\left(\!\frac{\|\theta\|^2_2}{n}+ \lambda\|\theta\|_2+ \lambda^2\sqrt{\frac{p}{n}}\right)\bigg\} 
\end{align}
that holds for some absolute constant $C_1$ with probability at least $1-4\exp(-2n)$ --- this is a consequence of the analysis in Step 2.1, see Equation (\ref{ref1111}). Furthermore, we define 
 the event 
\begin{align}\label{ec}
    C:=\bigg\{\frac{1}{n}\sum_{i=1}^n \mathbf{1}({A_i^c}) \le \frac{C_2}{r_{n,\lambda}^{2}}\bigg\},
\end{align} 
Moreover, we define
\begin{align}
    B':= \bigg\{\ell(\tilde{\eta}^0,\eta) \le C_3\Big[\frac{1}{n^2}+\frac{1}{r_{n,\lambda}^2}\Big]\bigg\}
\end{align}
that holds for some absolute constant $C_3$ with probability at least $1-4\exp(-2n)$; see Equation (\ref{finalinibound}). Without loss of generality, we assume $\ell(\tilde{\eta}^0,\eta)=\frac{1}{n}\|\tilde{\eta}^0-\eta\|_1$ and can write
\begin{align} \label{eventbprime}
     B' =\bigg\{\frac{\|\tilde{\eta}^0-\eta\|_1}{n}\le C_3\Big[\frac{1}{n^2}+\frac{1}{r_{n,\lambda}^2}\Big]\bigg\}.
\end{align}
By the above discussion, 
\begin{align}
    \label{probBBprime}
    \mathbb{P}(B) \ge 1-4\exp(-2n),\quad\textrm{and}\quad \mathbb{P}(B')\ge 1-4\exp(2n). 
\end{align}
\paragraph{Iterates Stay Near $\eta$ on $S:=B\cap B'\cap C$.} We first prove
\begin{align}
    \frac{\|\tilde{\eta}^k-\eta\|_1}{n}\mathbf{1}(S) \le \frac{C_4}{r_{n,\lambda}^2} + \frac{C_4}{n^2},\quad \forall k\ge 0\label{induct1}
\end{align}
for some large enough constant $C_4$ by induction. 
 By $B'$ in Equation (\ref{eventbprime}), (\ref{induct1}) is trivial for $k=0$. Now, we suppose
\begin{align}
    \label{hypo}
    \frac{\|\tilde{\eta}^k-\eta\|_1}{n}\mathbf{1}(S) \le \frac{C_4}{r_{n,\lambda}^2} + \frac{C_4}{n^2}
\end{align}
and seek to prove
\begin{align*}
    \frac{\|\tilde{\eta}^{k+1}-\eta\|_1}{n}\mathbf{1}(S) \le \frac{C_4}{r_{n,\lambda}^2} + \frac{C_4}{n^2}\,.
\end{align*}
By $\tilde{\eta}^{k+1}= \sign(H(\tilde{Y}^T\tilde{Y})\tilde{\eta}^{k})$, the $i$-th entry of $\tilde{\eta}^{k+1}$ is given by
\begin{align} 
    (\tilde{\eta}^{k+1})_i = \sign\left(H(\tilde{Y}^T\tilde{Y})_i^T\tilde{\eta}^k\right)=\sign\left(H\Big(\frac{1}{n}\tilde{Y}^T\tilde{Y}\Big)_i^T\tilde{\eta}^k\right).
\end{align}
We now compute $H(\frac{1}{n}\tilde{Y}^T\tilde{Y})_i^T\tilde{\eta}^k$ in the following:  
\begin{align}
  &  H(\frac{1}{n}\tilde{Y}^T\tilde{Y})_i^T\tilde{\eta}^k=  H(\frac{1}{n}\tilde{Y}^T\tilde{Y})_i^T (\tilde{\eta}^k-\eta)+ H(\frac{1}{n}\tilde{Y}^T\tilde{Y})_i^T \eta\nn \\\nn
  & = (Z_2)_i^T (\tilde{\eta}^k-\eta) + \frac{\|\theta\|_2^2}{n}\eta_i\eta^T(\tilde{\eta}^k-\eta) + H(\frac{1}{n}\tilde{Y}^T\tilde{Y})_i^T \eta  \\ \explain\textrm{by $H(\frac{1}{n}\tilde{Y}^T\tilde{Y})=\frac{\|\theta\|_2^2}{n}\eta\eta^T+Z_2$ from Equation (\ref{HYYZ2})}\\\nn 
  & =  (Z_2)_i^T (\tilde{\eta}^k-\eta) - \frac{\eta_i\|\theta\|_2^2}{n}\big(n-\eta^T\tilde{\eta}^k\big)+ H(\frac{1}{n}\tilde{Y}^T\tilde{Y})_i^T \eta\\\explain\textrm{by $\eta^T(\tilde{\eta}^k-\eta) = \eta^T\tilde{\eta}^k-n$}  \\
  & =  (Z_2)_i^T (\tilde{\eta}^k-\eta) - \eta_i\|\theta\|^2_2\frac{\|\tilde{\eta}^k-\eta\|_1}{n}+ H(\frac{1}{n}\tilde{Y}^T\tilde{Y})_i^T \eta.\label{computerowc}
  \\\explain\textrm{by $\|\tilde{\eta}^k-\eta\|_1 = \frac{1}{2}\|\tilde{\eta}^k-\eta\|_2^2  = n-\eta^T\tilde{\eta}^k$} 
\end{align}
Note that the error of $\tilde{\eta}$ can be written as 
\begin{align*}
    \frac{\|\tilde{\eta}^{k+1}-\eta\|_1}{n} = \frac{2}{n}\bigg(\sum_{i:\eta_i=1}\mathbf{1}({H(\tilde{Y}^T\tilde{Y})_i^T \tilde{\eta}^k<0})+\sum_{i:\eta_i=-1}\mathbf{1}({H(\tilde{Y}^T\tilde{Y})_i^T \tilde{\eta}^k>0})\bigg)
\end{align*}
and therefore\begin{align}\label{errordecom}
    \frac{\|\tilde{\eta}^{k+1}-\eta\|_1}{n}\mathbf{1}(S)= \frac{2}{n}\bigg(\sum_{i:\eta_i=1}\mathbf{1}({H(\tilde{Y}^T\tilde{Y})_i^T \tilde{\eta}^k<0})\mathbf{1}(S)+\sum_{i:\eta_i=-1}\mathbf{1}({H(\tilde{Y}^T\tilde{Y})_i^T \tilde{\eta}^k>0})\mathbf{1}(S)\bigg). 
\end{align} 
We separately discuss the cases of $\eta_i=1$ and $\eta_i=-1$:
\begin{itemize}[leftmargin=5ex,topsep=0.25ex]
    \setlength\itemsep{-0.3em}
    \item If $\eta_i=1$, then by (\ref{computerowc}) 
    \begin{align*}
         H(\frac{1}{n}\tilde{Y}^T\tilde{Y})_i^T \tilde{\eta}^k=(Z_2)_i^T (\tilde{\eta}^k-\eta) - \|\theta\|^2_2\frac{\|\tilde{\eta}^k-\eta\|_1}{n}+ H(\frac{1}{n}\tilde{Y}^T\tilde{Y})_i^T \eta
    \end{align*}
    and the event $A_i$ defined in Equation (\ref{Aidefine}) reads 
    \begin{align}\label{currentAi}
        A_i = \bigg\{ \frac{1}{n}H(\tilde{Y}^T\tilde{Y})_i^T\eta\ge \frac{C^*\|\theta\|^2_2}{r_{n,\lambda}}\bigg\}.
    \end{align}
    Therefore, 
    \begin{align*}
        &\mathbf{1}({H(\tilde{Y}^T\tilde{Y})_i^T \tilde{\eta}^k<0})\le \mathbf{1}({H(\frac{1}{n}\tilde{Y}^T\tilde{Y})_i^T \tilde{\eta}^k<0,A_i})+\mathbf{1}({A_i^c})\\
        &\le \mathbf{1}\left((Z_2)_i^T(\tilde{\eta}^k-\eta)\le -\frac{C^*\|\theta\|^2_2}{r_{n,\lambda}}+\frac{\|\theta\|^2_2\|\tilde{\eta}^k-\eta\|_1}{n}\right)+\mathbf{1}(A_i^c)\\
        \explain\textrm{by Equations (\ref{computerowc}) and (\ref{currentAi})} 
    \end{align*}
    which further leads to 
    \begin{align*}
        &\mathbf{1}({H(\tilde{Y}^T\tilde{Y})_i^T \tilde{\eta}^k<0})\mathbf{1}(S)\\&=\mathbf{1}\left((Z_2)_i^T(\tilde{\eta}^k-\eta)\le -\frac{C^*\|\theta\|^2_2}{r_{n,\lambda}}+\frac{\|\theta\|^2_2\|\tilde{\eta}^k-\eta\|_1}{n}\,,~S\right)+\mathbf{1}(A_i^c,S) \\
        &\le \mathbf{1}\left((Z_2)_i^T(\tilde{\eta}^k-\eta)\le -\frac{C^*\|\theta\|^2_2}{2r_{n,\lambda}}+\frac{C_4\|\theta\|^2_2}{n^2},~S\right)+\mathbf{1}(A_i^c,S)\\\explain\textrm{by Equation (\ref{hypo})}\\
        &\le \mathbf{1}\left((Z_2)_i^T(\tilde{\eta}^k-\eta)\le -\frac{C^*\|\theta\|^2_2}{3r_{n,\lambda}},\,S\right)+\mathbf{1}(A_i^c,S)\\
        \explain \textrm{by the assumption $\underbrace{\|\theta\|_2\le \lambda n \Big(1+\frac{p^{1/4}}{n^{3/4}}\Big)}_{\textrm{due to $r_{n,\lambda}\lesssim n$}}$ and large enough $C^*$}
    \end{align*}
   Further using $\mathbf{1}(T\le -a)\le \frac{T^2}{a^2}$ for any $a>0$,  we arrive at 
    \begin{align}\label{etai1case}
        \mathbf{1}({H(\tilde{Y}^T\tilde{Y})_i^T \tilde{\eta}^k<0})\mathbf{1}(S) \le \frac{[3r_{n,\lambda}(Z_2)_i^T(\tilde{\eta}^k-\eta)]^2}{(C^*)^2\|\theta\|_2^4} \mathbf{1}(S) +\mathbf{1}(A_i^c,S)
    \end{align}
    \item If $\eta_i=-1$, then by argument parallel to the last dot point, we obtain
     \begin{align}\label{etai-1case}
        \mathbf{1}({H(\tilde{Y}^T\tilde{Y})_i^T \tilde{\eta}^k>0})\mathbf{1}(S) \le \frac{[3r_{n,\lambda}(Z_2)_i^T(\tilde{\eta}^k-\eta)]^2}{(C^*)^2\|\theta\|_2^4} \mathbf{1}(S) +\mathbf{1}(A_i^c,S)
    \end{align}
\end{itemize}

The above disucssions then yield  
\begin{align}\nn 
    &\frac{\|\tilde{\eta}^{k+1}-\eta\|_1}{n}\mathbf{1}(S) \le \frac{2}{n}\sum_{i=1}^n \left( \frac{[3r_{n,\lambda}(Z_2)_i^T(\tilde{\eta}^k-\eta)]^2}{(C^*)^2\|\theta\|^4_2} \mathbf{1}(S) +\mathbf{1}(A_i^c,S)\right)\\ \explain \textrm{by substituting (\ref{etai1case}) and (\ref{etai-1case}) into (\ref{errordecom})}\\ \nn 
    &\le \frac{18}{n} \frac{r_{n,\lambda}^2\sum_{i=1}^n[(Z_2)_i^T(\tilde{\eta}^k-\eta)]^2}{(C^*)^2\|\theta\|^4_2}\mathbf{1}(S)  + \frac{2}{n}\sum_{i=1}^n \mathbf{1}(A_i^c,S)\\\explain\textrm{by simple algebra}\\
    &\le \frac{18}{n} \frac{r_{n,\lambda}^2\|Z_2\|_{op}^2\|\tilde{\eta}^k-\eta\|_2^2}{(C^*)^2\|\theta\|^4_2}\mathbf{1}(S) + \frac{2}{n}\sum_{i=1}^n\mathbf{1}(A_i^c)\mathbf{1}(S)\label{weneed}\\\explain\textrm{by $\sum_{i=1}^n[(Z_2)_i^T(\tilde{\eta}^k-\eta)]^2=\|Z_2(\tilde{\eta}^k-\eta)\|_2^2\le \|Z_2\|_{op}^2\|\tilde{\eta}^k-\eta\|_2^2$}\\ \nn 
    & \le\frac{\|\tilde{\eta}^k-\eta\|_1\mathbf{1}(S)}{n}O\bigg(\frac{r_{n,\lambda}^2 }{(C^*)^2\|\theta\|^4_2}\Big(\frac{\|\theta\|^4_2}{n^2}+\lambda^2\|\theta\|^2_2+ {\lambda^4} \frac{p}{n}\Big)\bigg)  + O(r_{n,\lambda}^{-2})\\
    \explain\textrm{by $\|\tilde{\eta}^k-\eta\|_1=\frac{1}{2}\|\tilde{\eta}^k-\eta\|_2^2$ and the events in Equations (\ref{eb}) and (\ref{ec})}\\ 
    &{\le}\Big(\frac{1}{r_{n,\lambda}^2}+\frac{1}{n^2}\Big) \underbrace{O\left(\frac{r_{n,\lambda}^2 }{(C^*)^2}\left(\frac{1}{n^2}+\frac{\lambda^2}{\|\theta\|^2_2}+ \frac{\lambda^4}{\|\theta\|_2^4} \frac{p}{n}\right)\right)}_{:=F}  + O(r_{n,\lambda}^{-2}) \nn \\
    \explain\textrm{by the hypothesis in (\ref{hypo})}
 \end{align}
 We now examine the scaling of $F$:  
since
$r_{n,\lambda}^2\big(\frac{\lambda^2}{\|\theta\|^2_2}+\frac{\lambda^4}{\|\theta\|^4_2}\frac{p}{n}\big)=\Theta(1)$, we have 
$ F \lesssim \frac{1}{(C^*)^2}\big(\frac{r_{n,\lambda}^2}{n^2}+\Theta(1)\big)$; moreover, $\|\theta\|_2\le \lambda(n+(np)^{1/4})$ implies $r_{n,\lambda}=O(n)$ and therefore $F=O(\frac{1}{(C^*)^2})$; therefore, we can set $C^*$ large enough to render a small enough $F$, yielding 
\begin{align*}
    \frac{\|\tilde{\eta}^{k+1}-\eta\|_1}{n}\mathbf{1}(S)  \le  \frac{C_4}{r_{n,\lambda}^2} + \frac{C_4}{n^2} 
\end{align*}
as long as $C_4$ is suitably large. 
  The induction is complete and we have established \begin{align}
    \frac{\|\tilde{\eta}^k-\eta\|_1}{n}\mathbf{1}(S) \le \frac{C_4}{r_{n,\lambda}^2} + \frac{C_4}{n^2},\quad k\ge 0.\label{induct11}
\end{align}

\paragraph{Yielding Sharp Bound.}  
  We shall proceed with an intermediate result in the above induction argument, which also holds for all $k\ge 0$: by Equation (\ref{weneed}), we have
\begin{align*}
    \frac{\|\tilde{\eta}^{k+1}-\eta\|_1}{n}\mathbf{1}(S)\le \frac{36r_{n,\lambda}^2\|Z_2\|_{op}^2}{(C^*)^2\|\theta\|^4_2}\cdot\frac{\|\tilde{\eta}^k-\eta\|_1}{n}\mathbf{1}(S)+\frac{2}{n}\sum_{i=1}^n\mathbf{1}(A_i^c)\mathbf{1}(S)\,,\quad\forall k\ge 0.
\end{align*} 
As before, we substitute the bound on $\|Z_2\|_{op}$ from Equation (\ref{eb}) and take $C^*$ large enough to obtain
\begin{align*}
     \frac{\|\tilde{\eta}^{k+1}-\eta\|_1}{n}\mathbf{1}(S)\le \frac{1}{4}\cdot  \frac{\|\tilde{\eta}^{k}-\eta\|_1}{n}\mathbf{1}(S)+\frac{2}{n}\sum_{i=1}^n\mathbf{1}(A_i^c)\mathbf{1}(S)\,,\quad\forall k\ge 0.
\end{align*}
We then iterate this inequality, along with $\frac{\|\tilde{\eta}_k-\eta\|_1}{n}\le 2$, to yield 
\begin{align*}
    \frac{\|\tilde{\eta}^{k}-\eta\|_1}{n}\mathbf{1}(S) \le \frac{2}{4^k}+\frac{C_5}{n}\sum_{i=1}^n\mathbf{1}(A_i^c)\mathbf{1}(S),\quad\forall k\ge 0.
\end{align*}
Thus, it holds that 
\begin{align}\label{deterlargek}
    \frac{\|\tilde{\eta}^{k}-\eta\|_1}{n}\mathbf{1}(S) \le \frac{1}{n^2}+\frac{C_5}{n}\sum_{i=1}^n\mathbf{1}(A_i^c)\mathbf{1}(S),\qquad \forall k\ge \lceil 3\log_4 n\rceil. 
\end{align}  We now take expectation, 
\begin{align}\nn
    \mathbb{E} \frac{\|\tilde{\eta}^k-\eta\|_1}{n}  &\le \mathbb{E}\bigg(\frac{\|\tilde{\eta}^k-\eta\|_1}{n}\mathbf{1}(S)\bigg)+\mathbb{E}\bigg(\frac{\|\tilde{\eta}^k-\eta\|_1}{n}\Big[\mathbf{1}(B^c)+\mathbf{1}\big((B')^c\big)+\mathbf{1}(C^c)\Big]\bigg)\\ \explain\textrm{by $S=B\cap B'\cap C$ and hence $S^c = B^c\cup (B')^c\cup C^c$}\\ \nn 
    &\le \frac{1}{n^2}+\mathbb{E}\Big(\frac{C_5}{n}\sum_{i=1}^n\mathbf{1}(A_i^c)\Big) + 2\big(\mathbb{P}(B^c)+\mathbb{P}\big((B')^c\big)+\mathbb{P}(C^c)\big)
    \\\explain\textrm{by Equation (\ref{deterlargek}) and $\frac{\|\tilde{\eta}^k-\eta\|_1}{n}\le 2$}\\
    &\le \frac{1}{n^2} + C_5\mathbb{P}(A_i^c)+16\exp(-2n)+2\mathbb{P}(C^c) ,\quad \forall k\ge \lceil 3\log_4 n\rceil.\label{expcbound}\\
    \explain\textrm{by $\mathbb{P}(B^c)+\mathbb{P}((B')^c)\le 8\exp(-2n)$ from (\ref{probBBprime})}
\end{align}
Recall from Equation (\ref{ec}) that 
$C^c= \big\{\frac{1}{n}\sum_{i=1}^n\mathbf{1}(A_i^c)>\frac{C_2}{r_{n,\lambda}^2}\big\}$. Thus,
by Markov's inequality we have 
\begin{align*}
    &\frac{C_2\mathbb{P}(C^c)}{r_{n,\lambda}^2}\le \mathbb{E}\Big(\frac{1}{n}\sum_{i=1}^n\mathbf{1}(A_i^c)\Big) = \mathbb{P}(A_i^c) 
    \\\Longrightarrow~&\mathbb{P}(C^c)= O(r_{n,\lambda}^{2}\mathbb{P}(A_i^c)).
\end{align*}
Substituting this into (\ref{expcbound}), along with $r_{n,\lambda}\gtrsim 1$, yields
\begin{align}\label{finalexpbound}
    \mathbb{E}\frac{\|\tilde{\eta}^k-\eta\|_1}{n} \le \frac{2}{n^2} + C'r_{n,\lambda}^{2}\mathbb{P}(A_i^c),\qquad \forall k\ge \lceil 3\log_4 n\rceil
\end{align}
where the sufficiently large $n$ ensures $16\exp(-2n)\le\frac{1}{n^2}$. Recall from Equation (\ref{Aidefine}) that
\begin{align}\label{evAic}
    A_i^c = \Big\{\Big(\frac{H(\tilde{Y}^T\tilde{Y})_i^T}{n}\eta\Big)\eta_i < \frac{C^*\|\theta\|_2^2}{r_{n,\lambda}}\Big\}.
\end{align}
By noticing  
$$H(\tilde{Y}^T\tilde{Y})_i^T = \big(\langle \tilde{X}_i,\tilde{X}_1 \rangle,\cdots, \langle \tilde{X}_i,\tilde{X}_{i-1}\rangle,0,\langle \tilde{X}_i,\tilde{X}_{i+1} \rangle, \cdots,\langle \tilde{X}_i,\tilde{X}_n \rangle\big),$$
we have that 
\begin{align*}
    \Big(\frac{H(\tilde{Y}^T\tilde{Y})_i^T}{n}\eta\Big)\eta_i = \frac{1}{n} \sum_{j\ne i}\eta_i\eta_j \langle\tilde{X}_i,\tilde{X}_j\rangle = \Big\langle \eta_i\tilde{X}_i,\frac{1}{n}\sum_{j\ne i}\eta_j\tilde{X}_j\Big\rangle.
\end{align*}
By $\varepsilon_i\stackrel{d}{=}-\varepsilon_i$\footnote{We use $A\stackrel{d}{=} B$ to denote that $A$ and $B$ have the same distribution.} and $\tilde{X}_i = \eta_i\theta + u_i$ where $u_i=Q_{2\lambda}(\eta_i\theta + \varepsilon_i+\tau_i)-\eta_i\theta$, we have
\begin{align*}
    \eta_i\tilde{X}_i =  \theta+ \eta_iu_i &= \theta +(Q_{2\lambda}(\theta+\eta_i\varepsilon_i+\eta_i\tau_i)-\theta)\\
    &\stackrel{d}{=} \theta+ (Q_{2\lambda}(\theta+\varepsilon_i+\tau_i)-\theta).
\end{align*}
As a result, we can let $\delta_1,\cdots,\delta_n$ be i.i.d. and follow the distribution $Q_{2\lambda}(\theta+\varepsilon_i+\tau_i)-\theta$, then we have  
\begin{align}
    \Big(\frac{H(\tilde{Y}^T\tilde{Y})_i^T}{n}\eta\Big)\eta_i\stackrel{{\rm d}}{=}\big\langle \theta + \delta_1,\frac{1}{n}\sum_{j=2}^n (\theta+ \delta_j)\big\rangle\label{samedisss}
\end{align}
and thus
\begin{align}\nn
    \mathbb{P}(A_i^c) &= \mathbb{P} \bigg(\Big\langle \theta+\delta_1,\frac{1}{n}\sum_{j=2}^n (\theta+\delta_j)\Big\rangle < \frac{C^*\|\theta\|_2^2}{r_{n,\lambda}}\bigg)\\\explain\textrm{by substituting (\ref{samedisss}) into (\ref{evAic})}\\
    &\le \mathbb{P} \bigg(\Big\langle \theta+\delta_1,\theta+\frac{1}{n-1}\sum_{j=2}^n \delta_j\Big\rangle < \frac{2C^*\|\theta\|_2^2}{r_{n,\lambda}}\bigg):= P^*,\label{definePstar} \\
    \explain\textrm{by $\frac{1}{n-1}<\frac{2}{n}$}
\end{align}
where $2C^*$ is some large enough absolute constant. Substituting this into (\ref{finalexpbound}) yields \[\textrm{$\mathbb{E}\Big(\frac{1}{n}\|\tilde{\eta}^k-\eta\|_1\Big)\le \frac{2}{n^2}+ C'r_{n,\lambda}^2 P^*$ ~~for any~ $k\ge \lceil 3\log_4 n\rceil$},\] 
and   Markov's inequality further yields  
\begin{align*}
    \mathbb{P}\Big(\frac{\|\tilde{\eta}^k-\eta\|_1}{n} \ge \frac{2\sqrt{r_{n,\lambda}}}{n^2} + C'r_{n,\lambda}^{5/2}P^*\Big)\le r_{n,\lambda}^{-1/2}.
\end{align*}
Combining with $r_{n,\lambda}=O(n)$, we reach
\begin{align}\label{tildeetabound}
    \mathbb{P}\bigg(\frac{\|\tilde{\eta}^k-\eta\|_1}{n}  \le \frac{C}{n^{3/2}}+C'r_{n,\lambda}^{5/2}P^*\bigg)\ge 1-r_{n,\lambda}^{-1/2}.
\end{align}
We have now completed the analysis of Algorithm \ref{alg:2lloyd} with  $\tilde{Y}$.

\subsubsection*{Step 3. Reduction to One-Bit Quantization} 
All that remains is to  reduce to the one-bit quantization regime by transitioning from   $\tilde{\eta}^k$ (obtained by running Algorithm \ref{alg:2lloyd} with $\dot{Y}$ being replaced by $\tilde{Y}$) to $\hat{\eta}^k$ (which is obtained by running Algorithm \ref{alg:2lloyd}). To this end, we only need to show $\lambda\dot{Y}=\tilde{Y}$, namely 
\begin{align}\label{tildeequaldot}
     \lambda\sign(\eta_i\theta+ \varepsilon_i + \tau_i) = \lambda\dot{X}_i = \tilde{X}_i = Q_{2\lambda}(\eta_i\theta+\varepsilon_i+\tau_i),\quad \forall i\in[n].
\end{align} 
The reason is that, on the event of (\ref{tildeequaldot}),  $\hat{\eta}^k=\tilde{\eta}^k$ holds for all $k\ge0$, and therefore the bound for $\tilde{\eta}^k$ in Equation (\ref{tildeetabound}) transfers to $\hat{\eta}^k$ of interest.
By the observation 
\begin{align*}
    Q_{2\lambda}(a) = \lambda\sign(a)\,,\quad\forall |a|\le 2\lambda,
\end{align*}  
it suffices to show $ \|\eta_i\theta + \varepsilon_i + \tau_i\|_\infty \le 2\lambda$ for all $i\in[n]$. Using triangle inequality and combining with $\|\tau_i\|_\infty\le \lambda$, it suffices to show 
\begin{align}\label{Wmaxbound}
    \|W\|_{\max}:=\sup_{i\in [n]}\|\varepsilon_i \|_\infty\le \lambda - \|\theta\|_\infty,\qquad \forall i\in [n].
\end{align}
By the tail bound of the entries of $W$ (Lemma \ref{lem:sgtail}) and a union bound, it follows that
\begin{align*}
    \mathbb{P}\Big(\|W\|_{\max}\ge \sigma\sqrt{2(1+\nu)\log(np)}\Big)\le \frac{2}{(np)^\nu}.
\end{align*}
As we assume $\lambda- \|\theta\|_\infty\ge \sigma\sqrt{2(1+\nu)\log(np)}$, the desired result follows.

\subsubsection*{Step 4. Sharp Bound on $P^*$} 
  In light of
    \begin{align*}
        \bigg\langle\theta+\delta_1,\theta+\frac{1}{n-1}\sum_{j=2}^n\delta_j\bigg\rangle=\|\theta\|_2^2+ \bigg\langle\theta,\frac{1}{n-1}\sum_{j=2}^n \delta_j\bigg\rangle + \bigg\langle\delta_1,\theta+\frac{1}{n-1}\sum_{j=2}^n\delta_j\bigg\rangle
    \end{align*}
    and the definition of $P^*$ in Equation (\ref{definePstar}), 
    we have
    \begin{align}\label{Pstarformulate}
        P^*= \mathbb{P}\bigg(\underbrace{\bigg\langle\theta,\frac{1}{n-1}\sum_{j=2}^n \delta_j\bigg\rangle + \bigg\langle\delta_1,\theta+\frac{1}{n-1}\sum_{j=2}^n\delta_j\bigg\rangle <\Big(-1+\frac{C_1}{r_{n,\lambda}}\Big)\|\theta\|_2^2}_{\textrm{denoted by an event $\Gamma$}}\bigg). 
    \end{align}
    We first show that $\langle \theta, \frac{1}{n-1}\sum_{j=2}^n\delta_j\rangle$ is a higher order term. Since $\delta_j\stackrel{{\rm d}}{=} Q_{2\lambda}(\theta+\varepsilon_j+\tau_j)-\theta$ has independent, zero mean entries with $O(\lambda)$ sub-Gaussian norms (e.g., by argument parallel to Equation (\ref{uisgbound})), we can use Lemma \ref{ver261} to obtain 
    \begin{align}\label{adapthm31}
        \bigg\|\frac{1}{n-1}\sum_{j=2}^n\delta_j\bigg\|_{\psi_2}\lesssim \frac{\lambda}{\sqrt{n}}~\Longrightarrow ~ \bigg\|\bigg\langle \theta, \frac{1}{n-1}\sum_{j=2}^n\delta_j\bigg\rangle\bigg\|_{\psi_2}\lesssim \frac{\lambda\|\theta\|_2}{\sqrt{n}}.
    \end{align}
    Therefore, let \begin{align}
        \hat{B}:=\bigg\{\Big|\Big\langle\theta,\frac{1}{n-1}\sum_{j=2}^n\delta_j\Big\rangle\Big|<C\lambda\|\theta\|_2\sqrt{\frac{\log n}{n}}\bigg\}\textrm{ ~with large enough $C$,}\label{evehatB}
    \end{align}   then by (\ref{adapthm31}) and a sub-Gaussian tail bound, 
    \begin{align}\label{adapthm32}
        \mathbb{P}(\hat{B}^c)=\mathbb{P}\bigg(\bigg|\Big\langle\theta,\frac{1}{n-1}\sum_{j=2}^n\delta_j\Big\rangle\bigg|\ge C\lambda\|\theta\|_2\sqrt{\frac{\log n}{n}} \bigg) \le\frac{1}{n^2}.
    \end{align}
    Moreover, we let 
    \begin{align}\label{evehatBone}
        \hat{B}_1:= \bigg\{\Big\|\frac{1}{n-1}\sum_{j=2}^n\delta_j\Big\|_2^2 \le \big(1+O\Big(\frac{1}{n}\Big)\big)\frac{p\lambda^2}{n}+O\Big(\frac{1}{\sqrt{n}}\Big)\|\theta\|_2^2\bigg\}, 
    \end{align}
    then   Lemma \ref{lem:boundsumvnorm} gives
    \begin{align}\label{B1cboundbound}
        \mathbb{P}(\hat{B}_1^c)\le 2\exp(-10r_{n,\lambda}^2). 
    \end{align} 
    On the event $\hat{B}\cap \hat{B}_1$, along with $\|\theta\|_2\gtrsim\lambda$, we have
    \begin{align}
        \bigg\|\theta+\frac{1}{n-1}\sum_{j=2}^n\delta_j\bigg\|_2^2 &=\|\theta\|_2^2+\bigg\|\frac{1}{n-1}\sum_{j=2}^n\delta_j\bigg\|_2^2 + 2\bigg\langle \theta,\frac{1}{n-1}\sum_{j=2}^n\delta_j\bigg\rangle  \nn\\\explain\textrm{by simple algebra} \\
        &\le \|\theta\|_2^2 + \Big(1+O(\frac{1}{n})\Big)\frac{p\lambda^2}{n} + O(\frac{1}{\sqrt{n}})\|\theta\|_2^2 + 2C\lambda\|\theta\|_2\sqrt{\frac{\log n}{n}}\nn\\\explain\textrm{by $\hat{B}$ in (\ref{evehatB}) and $\hat{B}_1$ in (\ref{evehatBone})}\\
        & = \left(1+O\Big( \sqrt{\frac{\log n}{n}}\Big)\right)\|\theta\|_2^2 + \left(1+O\Big(\frac{1}{\sqrt{n}}\Big)\right)\frac{p\lambda^2}{n}. \label{sharpnormbb}\\\explain\textrm{by $\lambda\le \|\theta\|_2$; see Equation (\ref{snrgtr1})}
    \end{align}
Moreover, on the event of $\hat{B}\cap \hat{B}_1$, we have 
    \begin{align}
         &\text{The event $\Gamma$ in Equation (\ref{Pstarformulate})}\nn
         \\\nn 
         \Longrightarrow~& 
          \bigg\langle\delta_1,\theta+\frac{1}{n-1}\sum_{j=2}^n\delta_j\bigg\rangle <\Big(-1+\frac{C_1}{r_{n,\lambda}}\Big)\|\theta\|_2^2+ \bigg|\Big\langle\theta,\frac{1}{n-1}\sum_{j=2}^n \delta_j\Big\rangle\bigg|\\ \nn 
        \Longrightarrow~&  \bigg\langle\delta_1,\theta+\frac{1}{n-1}\sum_{j=2}^n\delta_j\bigg\rangle <\Big(-1+\frac{C_1'}{r_{n,\lambda}}\Big)\|\theta\|_2^2\\\explain\textrm{by $\hat{B}$ in Equation (\ref{evehatB})} \\
          \Longrightarrow~& \bigg\langle \delta_1, \frac{\theta+(n-1)^{-1}\sum_{j=2}^n\delta_j}{\|\theta+(n-1)^{-1}\sum_{j=2}^n\delta_j\|_2}\bigg\rangle <  
          -\frac{\Big(1-O(r_{n,\lambda}^{-1})-O((\frac{\log n}{n})^{1/4})\Big)\|\theta\|_2^2}{\sqrt{\|\theta\|_2^2+\frac{p\lambda^2}{n}}}. \label{eventimply}\\\explain\textrm{by Equation (\ref{sharpnormbb}) that holds under $\hat{B}\cap\hat{B}_1$}
    \end{align}
Therefore, by letting $\hat{\xi}:=\frac{\theta+(n-1)^{-1}\sum_{j=2}^n\delta_j}{\|\theta+(n-1)^{-1}\sum_{j=2}^n\delta_j\|_2}=(\hat{\xi}_1,\cdots,\hat{\xi}_p)^T$, 
    \begin{align} 
        &P^*=\mathbb{P}(\Gamma)\le \mathbb{P}(\hat{B}^c)+\mathbb{P}(\hat{B}^c_1)  + \mathbb{P}(\Gamma\cap \hat{B}\cap \hat{B}_1) \nn \\
        &\le \frac{1}{n^2}+2\exp(-10r_{n,\lambda}^2)+\underbrace{\mathbb{P}\bigg(\langle \delta_1,\hat{\xi}\rangle<-\Big(1-O\Big(r_{n,\lambda}^{-1}+(\frac{\log n}{n})^{1/4}\Big)\Big)\frac{\|\theta\|_2^2}{\sqrt{\|\theta\|_2^2+\frac{p\lambda^2}{n}}}\bigg)}_{:=P^{**}}\label{Pstarstardefine}\\\explain\textrm{by Equations (\ref{adapthm32}), (\ref{B1cboundbound}) and (\ref{eventimply})}
    \end{align}
    To establish sharp bound on $P^{**}$, 
    we need a reduction to the one-bit case. Recall that \[\delta_1\stackrel{{\rm d}}{=}Q_{2\lambda}(\theta+\varepsilon_1+\tau_1)-\theta,\] we shall introduce a surrogate of $\delta_1$ as 
    \begin{align*}
        \tilde{\delta}_1:=\lambda\sign(\theta+\varepsilon_1+\tau_1) - \mathbb{E}\big[\lambda\sign(\theta+\varepsilon_1+\tau_1)\big]=(\tilde{\delta}_{11},\cdots,\tilde{\delta}_{1p})^T.
    \end{align*}
    By Lemma \ref{lem:Pstarstar}, for some $\epsilon_{n,\lambda}=O(r_{n,\lambda}^{-1}+(\frac{\log n}{n})^{1/4})$,
    \begin{align}\nn
        P^{**}&\le \mathbb{P}\bigg(\langle \tilde{\delta}_1,\hat{\xi}\rangle<-\frac{(1-\epsilon_{n,\lambda})\|\theta\|_2^2}{\sqrt{\|\theta\|_2^2+\frac{p\lambda^2}{n}}}\bigg)+\frac{1}{n^{1+\nu}p^\nu}\\&= \mathbb{P}\bigg(\sum_{j=1}^p\frac{\tilde{\delta}_{1j}\hat{\xi}_j}{\lambda} <-(1-\epsilon_{n,\lambda})r_{n,\lambda}\bigg)+\frac{1}{n^{1+\nu}p^\nu}\,. \label{P2starbound}\\\explain\textrm{recall $r_{n,\lambda}=\frac{\|\theta\|_2^2}{\lambda^2}\big/\sqrt{\frac{\|\theta\|_2^2}{\lambda^2}+\frac{p}{n}}$}
    \end{align}
    Since $\{\tilde{\delta}_{1j}:j\in[p]\}$ are independent, zero-mean, and $\|\lambda\sign(\theta+\varepsilon_1+\tau_1)\|_\infty\le \lambda$, we shall use Hoeffding's inequality (see Lemma \ref{hoeffdinglem})   to obtain 
    \begin{align}
        \label{hoeffdings}\mathbb{P}\bigg(\sum_{j=1}^p\frac{\tilde{\delta}_{1j}\hat{\xi}_j}{\lambda} <-(1-\epsilon_{n,\lambda})r_{n,\lambda}\bigg)\le 2\exp\bigg(-\frac{(1-\epsilon_{n,\lambda})^2r_{n,\lambda}^2}{2}\bigg). 
    \end{align}

\subsubsection*{Step 5. Putting Pieces Together} 
 We substitute  (\ref{hoeffdings}) and (\ref{P2starbound}) into (\ref{Pstarstardefine}) to obtain
 \[P^*\le  2\exp\bigg(-\frac{(1-\epsilon_{n,\lambda})^2r_{n,\lambda}^2}{2}\bigg)+\frac{1}{n^2}+2\exp(-10r_{n,\lambda}^2) + \frac{1}{n^{1+\nu}p^\nu}\]
 Substituting this into (\ref{tildeetabound}) it follows that   
     \begin{align*}
         \ell(\hat{\eta}^t ,\eta)& \le Cr_{n,\lambda}^{5/2}\exp\bigg(-\frac{(1-\epsilon_{n,\lambda})r_{n,\lambda}^2}{2}\bigg)+ O\bigg(\frac{1}{n^{3/2}}+ \frac{r_{n,\lambda}^{5/2}}{n^2} + \frac{r_{n,\lambda}^{5/2}}{n^{1+\nu}p^\nu}\bigg)\\\explain \textrm{by $\exp(-10r_{n,\lambda}^2)\le \exp(-\frac{1}{2}r_{n,\lambda^2})$} \\ 
         &\le \frac{1}{2}\exp\bigg(-\Big(1-\epsilon_{n,\lambda}-\frac{2\log(2Cr_{n,\lambda}^{5/2})}{r_{n,\lambda}^2}\Big)\frac{r_{n,\lambda}^2}{2}\bigg) + \frac{1}{5n}\\\explain\textrm{by $n\gtrsim 1,~r_{n,\lambda}\lesssim \min\{(np)^{2\nu/5},n^{2/5}\}$}\\
         & =\frac{1}{2} \exp\bigg(-\frac{(1-\epsilon'_{n,\lambda})r_{n,\lambda}^2}{2}\bigg) + \frac{1}{5n}
     \end{align*}
     for some $\epsilon_{n,\lambda},\epsilon_{n,\lambda}'=O(\frac{1}{r_{n,\lambda}}+(\frac{\log n}{n})^{1/4})$. If $\frac{1}{2}\exp(-\frac{(1-\epsilon_{n,\lambda}')r_{n,\lambda}^2}{2})\le \frac{1}{5n}$, then we have $\ell(\hat{\eta}^t,\eta) \le \frac{2}{5n}$ and thus necessarily $\ell(\hat{\eta}^t,\eta) =0$. If $\frac{1}{2}\exp(-\frac{(1-\epsilon_{n,\lambda}')r_{n,\lambda}^2}{2})>\frac{1}{5n}$, then we have
     \begin{align*}
         \ell(\hat{\eta}^t,\eta)\le \exp \bigg(-\frac{(1-\epsilon'_{n,\lambda})r_{n,\lambda}^2}{2}\bigg),
     \end{align*}
     as desired. 
 \end{proof}
 \section{Proof of Corollary \ref{cor:exact} (Exact Recovery)}\label{app:proveexact1}
 \begin{proof}
   Note that the separation condition in Equation (\ref{sepaexact}) implies 
   \[\|\theta\|_2^2 \gtrsim \lambda^2 \Big(\log n + \sqrt{\frac{p\log n}{n}}\Big).\]
  In light of $r_{n,\lambda}=\frac{\|\theta\|_2^2/\lambda^2}{\sqrt{\|\theta\|_2^2/\lambda^2+p/n}}\asymp \min\{\frac{\|\theta\|_2}{\lambda},\frac{\|\theta\|^2}{\lambda^2}\sqrt{\frac{n}{p}}\}$, we then have 
   \[\frac{1}{r_{n,\lambda}}\lesssim \max\{\frac{\lambda}{\|\theta\|_2},\frac{\lambda^2}{\|\theta\|_2^2}\sqrt{\frac{p}{n}}\}=O(\frac{1}{\sqrt{\log n}}).\]
   Under $n\gtrsim 1$, Theorem \ref{maintheorem} guarantees that \[\ell(\hat{\eta}^t,\eta)\le \exp(-\frac{(1-\epsilon_0)}{2}r_{n,\lambda}^2),\quad\textrm{for some $\epsilon_0=O(\frac{1}{\sqrt{\log n}})$}\] holds with probability at least $1-O(\frac{1}{\sqrt{\log n}})-2(np)^{-\nu}$. Note that $\ell(\hat{\eta}^t,\eta)<n^{-1}$ implies $\ell(\hat{\eta}^t,\eta)=0$. Therefore, to ensure $\hat{\eta}^t=\pm\eta$ with the promised probability, it suffices to have \begin{align}\nn
       &\exp\Big(-\frac{(1-\epsilon_0)}{2}r_{n,\lambda}^2\Big)<\frac{1}{n}\\
       \nn\iff~&\exp\bigg(-\frac{1-\epsilon_0}{2}\frac{(\|\theta\|_2^2/\lambda^2)^2}{\frac{\|\theta\|_2^2}{\lambda^2}+\frac{p}{n}}\bigg) < \frac{1}{n} \\
       \nn\iff~& \frac{1-\epsilon_0}{2}\frac{(\|\theta\|_2^2/\lambda^2)^2}{\frac{\|\theta\|_2^2}{\lambda^2}+\frac{p}{n}} > \log n \\
       \nn\iff~& (1-\epsilon_0)\Big(\frac{\|\theta\|_2^2}{\lambda^2}\Big)^2 -2\log n \frac{\|\theta\|_2^2}{\lambda^2} -\frac{2p\log n}{n}>0\\
       \label{quadraticcalculate}\iff~& \|\theta\|_2^2 > \frac{1}{1-\epsilon_0}\lambda^2 \bigg(1+\sqrt{1+\frac{(1-\epsilon_0)2p }{n\log n}}\bigg)\log n
   \end{align}
   This condition can be implied by (\ref{sepaexact}) by  noticing $\frac{1}{1-\epsilon_0}=1+\frac{\epsilon_0}{1-\epsilon_0}=1+O(\epsilon_0)=1+O(\frac{1}{\sqrt{\log n}})$ and  $1-\epsilon_0\le 1$. 
\end{proof}
 \section{Proof of Theorem \ref{thm:spifree} (Partial Recovery without Spikiness Condition)}\label{app:proofthm2}
 We start with a remark that summarizes the additional technicalities beyond the proof of Theorem \ref{maintheorem}. 
 \begin{rem}
      The result is proved by revisiting  the proof of Theorem \ref{maintheorem} and similarly relating $\lambda\dot{X}_i=\lambda\sign(\eta_iR\theta + R\varepsilon_i+
\tau_i)$ to $\tilde{X}_i:=Q_{2\lambda}(\eta_iR\theta+R\varepsilon_i+\tau_i)$. Yet  the overall noise   $u_i=Q_{2\lambda}(\eta_iR\theta+R\varepsilon_i+\tau_i)-\eta_iR\theta$ has possibly correlated entries, making some arguments no longer valid. Our general remedy is to further decompose $u_i$ into $u_i = \big[Q_{2\lambda}(\eta_iR\theta+R\varepsilon_i+\tau_i)-(\eta_iR\theta+R\varepsilon_i)\big]+R\varepsilon_i := u_{i1}+R\varepsilon_i$ 
and then separately treat $u_{i1}$ and $R\varepsilon_i$: note that $u_{i1}$ has independent zero-mean entries (when  conditioning on $\varepsilon_i$) and $R\varepsilon_i$ is simply a rotation of    $\varepsilon_i$ with independent zero-mean entries. Following this idea, most arguments can be readily extended by triangle inequality. \label{rem:proofthm2}
 \end{rem}
 \begin{proof}
 By letting $\dot{Y}=[\dot{X}_1,...,\dot{X}_n]$, $Y=[X_1,...,X_n]$, $D=[\tau_1,...,\tau_n]$, $W=[\varepsilon_1,...,\varepsilon_n]$, 
we can write\begin{align*}
    \dot{Y} = \sign(RY+D) = \sign(R\theta\eta^T+RW+D).
\end{align*}
  The differences from Theorem \ref{maintheorem} are that the center becomes $\theta_R:=R\theta$, and that the noise matrix becomes $W_R:=RW=[R\varepsilon_1,\cdots,R\varepsilon_n]$, which has independent columns but may not have independent rows. By Lemma \ref{Rthetamu}, we assume that $\mu(\theta_R)\le \sqrt{\frac{3\log p}{p}}$ holds with the promised probability. This proof will follow that of Theorem \ref{maintheorem} with necessary adaptations.

\paragraph{Step 1. Checking Steps 1--3 in Appendix \ref{app:proofthm1}.} We now re-iterate the proof of Theorem \ref{maintheorem} to reach the following: with probability at least $1-r_{n,\lambda}^{-1/2}-2(np)^{-\nu}-3p^{-1/4}$, we have that 
\begin{align} \label{Pstarbounderr}
    \ell(\hat{\eta}^t,\eta) \le \frac{C'}{n^{3/2}}+C'' r_{n,\lambda}^{5/2}P^*, 
\end{align}
where $P^*$ is defined as  
\begin{align}\label{evePstar}
    P^*:= \mathbb{P}\left(\left\langle \theta+\delta_1,\theta+\frac{1}{n-1}\sum_{j=2}^n \delta_j\right\rangle < \frac{C_1\|\theta\|_2^2}{r_{n,\lambda}}\right)
\end{align}
where $\delta_1,\cdots,\delta_n$ are i.i.d. random vectors in $\mathbb{R}^p$ having the same distribution as $Q_{2\lambda}(\theta_R+R\varepsilon_i + \tau_i)-\theta_R$. In fact, we only need to modify the concentration bounds, as detailed in the following: 
\begin{itemize}[leftmargin=5ex,topsep=0.25ex]
    \setlength\itemsep{-0.3em}
 \item Note that we  now have $\tilde{Y}=Q_{2\lambda}(R\theta\eta^T+RW+D)=R\theta\eta^T+U$ and thus $U=Q_{2\lambda}(R\theta\eta^T+RW+D)-R\theta\eta^T$.  
    \item  Since $U$ has independent columns of $O(\lambda)$ sub-Gaussian norms, the argument in Equations (\ref{Uthetal2})--(\ref{etaubound}) for bounding $\|U^T\theta\|_2$ remains valid.

    \item Since $U$ has possibly correlated rows,  Equation (\ref{utuboundeq}) via Lemma \ref{utubound} is no longer valid, and we should use Lemma \ref{lem:genutu} to achieve the same bound, by the decomposition   
 \begin{align}\label{decomposeU}
     U = \underbrace{Q_{2\lambda}(R\theta\eta^T+RW+D)-(R\theta\eta^T+RW)}_{:=U_1}+RW:=U_1+RW 
 \end{align} 
 where, given $R\in \mathbb{O}(p)$,  $W$ has i.i.d. entries of $O(\sigma)$ sub-Gaussian norms, and $U_1|W$ also has i.i.d. entries of $O(\lambda)$ sub-Gaussian norms. 
 \item It is evident that (\ref{Wmaxbound}) remains true for $\|RW\|_{\max}$. 
\end{itemize}

\paragraph{Step 2. Checking Steps 4--5 in Appendix \ref{app:proofthm1}.} 
The additional challenge is that $\delta_1,...,\delta_n$ may not have independent entries. Analogously to (\ref{decomposeU}), we decompose $\{\delta_j\}_{j=1}^n$ as 
\begin{align}\label{decomdeltaj}
    \delta_j = \underbrace{Q_{2\lambda}(\theta_R+R\varepsilon_j+\tau_j)-(\theta_R+R\varepsilon_j)}_{:=\delta'_j}+R\varepsilon_i
\end{align}
and separately treat $\delta'_j$ and $R\varepsilon_i$. It turns out that the adaptation of Equations (\ref{adapthm31})--(\ref{adapthm32}) is simple, and we still have \[\hat{B}:=\bigg\{\Big|\langle \theta,\frac{1}{n-1}\sum_{j=2}^n\delta_j\rangle\Big|\le C\lambda\|\theta\|_2\sqrt{\frac{\log n}{n}}\bigg\}\] holds with probability at least $1-\frac{1}{n^2}$. By Lemma \ref{lem:sumnormlem}, we also have that \[\hat{B}_1:=\Big\{\Big\|\frac{1}{n-1}\sum_{j=2}^n\delta_j\Big\|_2^2\le (1+O(\frac{1}{n}))\frac{p\lambda^2}{n}+O(\frac{1}{\sqrt{n}})\|\theta\|_2^2\Big\}\] holds with probability at least $1-6\exp(-10r_{n,\lambda}^2)$. On the event $\hat{B}\cap \hat{B}_1$, following the derivations of Equations (\ref{sharpnormbb}) and (\ref{eventimply}), we have 
$$\bigg\|\theta+\frac{1}{n-1}\sum_{j=2}^n\delta_j\bigg\|_2^2\le \bigg(1+O\Big(\sqrt{\frac{\log n}{n}}\Big)\bigg)\|\theta\|_2^2 + \left(1+O\left(\frac{1}{\sqrt{n}}\right)\right)\frac{p\lambda^2}{n},$$
and thus the event defining $P^*$ in Equation (\ref{evePstar}) implies \[
   \lambda^{-1} \langle \delta_1,\hat{\xi}\rangle \le -(1-O(r_{n,\lambda}^{-1})-O\Big((\frac{\log n}{n})^{1/4})\Big)r_{n,\lambda}\] for a vector $\hat{\xi}\in\mathbb{S}^{p-1}$ independent of $\delta_1$; therefore,  
   \begin{align*}
       P^* \le \mathbb{P}\left(\frac{\langle\delta_1,\hat{\xi}\rangle}{\lambda}\le -\Big(1-O(r_{n,\lambda}^{-1})-O((\frac{\log n}{n})^{1/4})\Big)r_{n,\lambda}\right),
   \end{align*}
   and further using $\delta_1=\delta'_1+R\varepsilon_1$ in Equation (\ref{decomdeltaj}) yields
   \begin{align}\nn
       P^* &\le \mathbb{P}\left(\frac{\langle \delta_1',\hat{\xi}\rangle}{\lambda}\le -\left(1-O(r_{n,\lambda}^{-1})-O((\frac{\log n}{n})^{1/4})-\frac{1}{\log^{1/4}(np)}\right)r_{n,\lambda}\right)\\\nn
       &+ \mathbb{P}\left(\frac{\langle\varepsilon_1,R^T\hat{\xi}\rangle}{\lambda}\le -\frac{r_{n,\lambda}}{\log^{1/4}(np)}\right)+\frac{1}{n^2}+6\exp(-10r_{n,\lambda}^2)\\:&=P^{**}+P^*_2+\frac{1}{n^2}+6\exp(-10r_{n,\lambda}^2). \label{Psssdefine}
   \end{align}
   By the sub-Gaussian tail bound, $ P_2^*\le \exp(-\sqrt{\log(np)}\cdot r_{n,\lambda}^2)$, which is negligible. To obtain sharp bound on $P^{**}$, we need a reduction to the one-bit case. As a surrogate of \[\delta_1'=Q_{2\lambda}(\theta_R+R\varepsilon_1+\tau_1)-(\theta_R+R\varepsilon_1),\] we introduce $\tilde{\delta}_1'=\lambda\sign(\theta_R+R\varepsilon_1+\tau_1)-\mathbb{E}_{\tau_1}[\lambda\sign(\theta_R+R\varepsilon_1+\tau_1)]$. We invoke Lemma \ref{lem:deltatotilehaar} to reach \[P^{**}\le \mathbb{P}\Big(\frac{\langle\tilde{\delta}_1',\hat{\xi}\rangle}{\lambda}\le -(1-\epsilon_{n,\lambda})r_{n,\lambda}\Big)+\frac{1}{n^{1+\nu}p^\nu}.\] Furthermore, by conditioning on $\varepsilon_1$ and using the randomness of $\tau_1$ (so that the entries of $\tilde{\delta}_1'$ are independent, zero-mean), we can invoke Hoeffding's inequality as in Equation (\ref{hoeffdings}) to reach 
\[\mathbb{P}\bigg(\frac{\langle\tilde{\delta}_1',\hat{\xi}\rangle}{\lambda}\le -(1-\epsilon_{n,\lambda})r_{n,\lambda}\bigg)\le 2\exp\Big(-\frac{(1-\epsilon_{n,\lambda})^2r_{n,\lambda}^2}{2}\Big)=2\exp\Big(-\frac{1}{2}(1-\epsilon_{n,\lambda}')r_{n,\lambda}^2\Big)\] for some $\epsilon_{n,\lambda}'=O(r_{n,\lambda}^{-1}+(\frac{\log n}{n})^{1/4}+\frac{1}{\log^{1/4}(np)})$. Substituting this and $P_2^*\le \exp(-\sqrt{\log(np)}\cdot r_{n,\lambda}^2)$ into (\ref{Psssdefine}) yields  $$P^*\le 3\exp\Big(-\frac{1}{2}(1-\epsilon_{n,\lambda}')r_{n,\lambda}^2\Big)+\frac{1}{n^2}+6\exp(-10r_{n,\lambda}^2)+\frac{1}{n^{1+\nu}p^\nu}.$$ Further combining with (\ref{Pstarbounderr}), along with $r_{n,\lambda}\lesssim \min\{n^{2/5},(np)^{2\nu/5}\}$   and some simple reasoning used in the proof of Theorem \ref{maintheorem}, yields the claim. 
\end{proof}
\section{Proof of Corollary \ref{coro:haar} (Exact Recovery without Spikiness Condition)}\label{app:provecoro2} 
\begin{proof}
     In view of (\ref{minimallambda}) and (\ref{separation111}), we have $\|\theta\|_2^2\gtrsim \lambda^2(\log n+ \sqrt{\frac{p\log n}{n}})$. Combining with $r_{n,\lambda}=\frac{\|\theta\|_2^2/\lambda^2}{\sqrt{\|\theta\|_2^2/\lambda^2+p/n}}\asymp \min\{\frac{\|\theta\|_2}{\lambda},\frac{\|\theta\|^2}{\lambda^2}\sqrt{\frac{n}{p}}\}$, we then have 
   \[\frac{1}{r_{n,\lambda}}\lesssim \max\{\frac{\lambda}{\|\theta\|_2},\frac{\lambda^2}{\|\theta\|_2^2}\sqrt{\frac{p}{n}}\}=O(\frac{1}{\sqrt{\log n}}).\] 
   As a result, Theorem \ref{thm:spifree} yields that for some $\epsilon_{n,\lambda}= O(\log^{-1/4}n)$ and for any $t\ge \lceil 3\log_4n\rceil$, it holds with the promised probability that 
   \[\ell(\hat{\eta}^t,\eta) \le \exp\bigg(-\frac{(1-\epsilon_{n,\lambda})r_{n,\lambda}^2}{2}\bigg).\]
   This leads to the desired $\hat{\eta}^t=\pm \eta$ if 
   \begin{align*}
       &\exp\bigg(-\frac{(1-\epsilon_{n,\lambda})r_{n,\lambda}^2}{2}\bigg)<\frac{1}{n}\\
       \iff~& \|\theta\|_2^2 > \frac{1}{1-\epsilon_{n,\lambda}}\lambda^2 \bigg(1+\sqrt{1+\frac{(1-\epsilon_0)2p}{n\log n}}\bigg)\log n. \\
       \explain\textrm{by calculations parallel to Equation (\ref{quadraticcalculate})}
   \end{align*}
   This can be implied by 
   \[\|\theta\|_2^2  \ge (1+\epsilon) \lambda^2\bigg(1+\sqrt{1+\frac{2p}{n\log n}}\bigg)\log n,\quad\textrm{for some $\epsilon=O(\log^{-1/4}n)$},\]
   which is exactly what we assume in Equations (\ref{minimallambda}) and (\ref{separation111}).

  It remains to show that the condition (\ref{separation111}) can be implied by the more explicit condition (\ref{oursepara}) under $\min\{n,p,r_{n,\lambda}\}\ge C_1(\nu)$ and $p\ge C_2(\nu)\log n\log\log n$ for some sufficiently large constants $C_1(\nu), C_2(\nu)$ depending on $\nu$ only. For some sufficiently large $C_0(\nu)$, we use the inequality 
  \[(A_1+A_2)^2 =A_1^2+A_2^2+2A_1A_2\le (1+C_0(\nu))A_1^2+(1+\frac{1}{C_0(\nu)})A_2^2\]
  to achieve 
  \begin{align*}
        &\bigg(\sqrt{\frac{3\log p}{p}}\|\theta\|_2+\sigma\sqrt{2(1+\nu)\log (np)}\bigg)^2\\
        &\le \frac{3[1+C_0(\nu)]\log p\|\theta\|_2^2}{p} + \Big[1+\frac{1}{C_0(\nu)}\Big](1+\nu) 2\sigma^2\log(np)\\
        &\le \frac{6C_0(\nu)\log p\|\theta\|_2^2}{p} +  (1+1.5\nu) 2\sigma^2\log(np).\\\explain\textrm{by choosing $C_0(\nu)$ large enough to ensure this}
  \end{align*}
In turn a sufficient condition for Equation (\ref{separation111}) is 
\[\|\theta\|_2^2\ge (1+O(\log^{-1/4}n))\bigg(\frac{6C_0(\nu)\log p\|\theta\|_2^2}{p} +  (1+1.5\nu) 2\sigma^2\log(np)\bigg)\bigg(1+\sqrt{1+\frac{2p}{n\log n}}\bigg)\log n,\]
and by rearranging, this is implied by 
\begin{align}\nn
    &\bigg(1-\tilde{C}_0(\nu)\frac{\log n\log p}{p}\big[1+\sqrt{\frac{p}{n\log n}}\big]\bigg)\|\theta\|_2^2\\
    \label{cccc}&\qquad\ge \big(1+O(\log^{-1/4}n)\big)(1+1.5\nu)2\sigma^2\log(np)\bigg(1+\sqrt{1+\frac{2p}{n\log n}}\bigg)\log n
\end{align}
for some   constant $\tilde{C}_0(\nu)$ depending on $\nu$ only. Note that (\ref{cccc})  can be written as 
\[\|\theta\|_2^2\ge \underbrace{\frac{\big(1+O(\log^{-1/4}n)\big)(1+1.5\nu)}{\big(1-\tilde{C}_0(\nu)\frac{\log n\log p}{p}\big[1+\sqrt{\frac{p}{n\log n}}\big]\big)}}_{F_1}2\sigma^2\bigg(1+\sqrt{1+\frac{2p}{n\log n}}\bigg)\log (np)\log n\]
All that remains is to show that the leading factor $F_1$ is bounded by $1+2\nu$. This is ensured by the following observations: (i) $n\ge C_1(\nu)$ with large enough $C_1(\nu)$ ensures that $O(\log ^{-1/4}n)\le c_1(\nu)$ for some small enough $c_1(\nu)$; (ii) $p\ge C_2(\nu)\log n\log\log n$  with large enough $C_2(\nu)$ implies that $\tilde{C}_0(\nu)\frac{\log n\log p}{p}\le  c_2(\nu)$ for some small enough $c_2(\nu);$ (iii) $n,p\ge C_1(\nu)$ for large enough $C_1(\nu)$ ensures that
\[\frac{\tilde{C}_0(\nu)\log n\log p}{p} \sqrt{\frac{p}{n\log n}} =\tilde{C}_0(\nu)\sqrt{\frac{\log^2p\log n}{np}}\le c_3(\nu)\quad\textrm{for some small enough $c_3(\nu).$}\]
The proof is complete. 
\end{proof}
\section{Proof of Theorem \ref{thmlower} (Minimax Lower Bound)} \label{app:prooflower}
\begin{proof}
We fix $\theta = \frac{\|\theta\|_2}{\sqrt{p}}\mathbf{1}_p$ and place an independent Rademacher prior on the label $\eta$, meaning that we let $\eta_1,\cdots,\eta_n$ be i.i.d. uniformly distributed over $\{-1,1\}$. Let $\mathbb{E}_{\pi}$ be the expectation on the prior, then we have
\begin{align*}
    \inf_{\hat{\eta}}\sup_{\theta,\eta}\, \mathbb{E} \big[\ell(\hat{\eta},\eta)\big] \ge \inf_{\hat{\eta}}\,\mathbb{E}_\pi \mathbb{E} \big[\ell(\hat{\eta},\eta)\big].
\end{align*}
We let $T\subset [n]$ be of size $\lfloor n/2\rfloor + 1$, then we can handle the sign flipping associated with $\ell(\hat{\eta},\eta)$ by argument in \cite{gao2018community}, reaching
\begin{align*}
    \inf_{\hat{\eta}}\,\mathbb{E}_\pi \mathbb{E} \big[\ell(\hat{\eta},\eta)\big] &\ge \frac{c}{|T^c|}\sum_{i\in T^c}\inf_{\tilde{\eta}^i}\mathbb{E}_\pi \mathbb{E}|\tilde{\eta}^i(\dot{Y},(\eta_j)_{j\in T})-\eta_i|\\
    &\ge  \frac{c}{|T^c|}\sum_{i\in T^c}\inf_{\tilde{\eta}^i}\mathbb{E}_\pi \mathbb{E}|\tilde{\eta}^i(\dot{Y},(\eta_j)_{j\ne i})-\eta_i| \\
    \explain\textrm{since $(\eta_j)_{j\ne i}$ contains more information than $(\eta_j)_{j\in T}$}
\end{align*}
for some constant $c$. Moreover, since the pairs $(\dot{Y}_j,\eta_j)$ are independently distributed, we have 
\begin{align}\label{eq11}
     \inf_{\hat{\eta}}\,\mathbb{E}_\pi \mathbb{E} \big[\ell(\hat{\eta},\eta)\big] \ge c \cdot \inf_{\bar{\eta}_j}\,\mathbb{E}_\pi \mathbb{E}|\bar{\eta}_j(\dot{X}_j)-\eta_j|,
\end{align}
where $\eta_i$ is a Rademacher variable, and $\bar{\eta}_i$ is a measurable function of $\dot{X}_i$ for estimating $\eta_i$. We now let \begin{gather*}
    P_1 = \mathbb{P}\Big(\sign\Big(\frac{\|\theta\|_2}{\sqrt{p}}+\varepsilon_{ij}+\tau_{ij}\Big) = 1\Big),\\
    P_{-1} = \mathbb{P}\Big(\sign\Big(\frac{\|\theta\|_2}{\sqrt{p}}+\varepsilon_{ij}+\tau_{ij}\Big)=-1\Big), 
\end{gather*}
which satisfy $P_1>\frac{1}{2}>P_{-1}$ by the symmetry of $\varepsilon_{ij}$ and $\tau_{ij}$. Then, it is easy to write the likelihood of $\dot{X}_i$ under $\eta_i = 1$ and $\eta_i = -1$ as
\begin{gather*}
    f_1(\dot{Y}_j) = P_1^{|\{i\in [p]:\dot{Y}_{ij}=1\}|}P_{-1}^{p-|\{i\in [p]:\dot{Y}_{ij}=1\}|},\\ f_{-1}(\dot{Y}_j) = P_1^{p-|\{i\in [p]:\dot{Y}_{ij}=1\}|}P_{-1}^{|\{i\in [p]:\dot{Y}_{ij}=1\}|}.
\end{gather*}
Therefore, by Bayesian decision rule, the selector 
\begin{align*}
    \eta_k^* = \mathbf{1}(|\{i\in[p]:\dot{Y}_{ij}=1\}|\ge \frac{p}{2}) - \mathbf{1}(|\{i\in[p]:\dot{Y}_{ij}=1\}|<\frac{p}{2})=\sign(\dot{Y}_j^T\mathbf{1})
\end{align*}
attains the minimum of the right-hand side of (\ref{eq11}). Then, it is not hard to see that
\begin{align}\label{anticon}
    \mathbb{E}_\pi \mathbb{E}|\bar{\eta}_j(\dot{X}_j)-\eta_j| = 2 \mathbb{P}\Big(\text{Binomial}(p,P_{-1})>\frac{p}{2}\Big) \ge \frac{1}{\sqrt{2p}}\exp\Big(-pD_{KL}\Big(\frac{1}{2}\|P_{-1}\Big)\Big)
\end{align}
where the last inequality is due to the anti-concentration bound for binomial variable (cf. Lemma \ref{lem:anti}).

We now lower bound $P_{-1}$ as follows:
\begin{align}\nn
    & P_{-1} = \mathbb{P}\Big(\frac{\|\theta\|_2}{\sqrt{p}}+\varepsilon_{ij}+\tau_{ij}<0\Big)\\\nn
    & \ge \mathbb{E}\left(\mathbf{1}\Big(\Big|\frac{\|\theta\|_2}{\sqrt{p}}+\varepsilon_{ij}\Big|\le\lambda \Big)\mathbf{1}\Big(\tau_{ij}<-\varepsilon_{ij}-\frac{\|\theta\|_2}{\sqrt{p}}\Big)\right)\\
    \nn&  = \mathbb{E}\left(\mathbf{1}\Big(\Big|\frac{\|\theta\|_2}{\sqrt{p}}+\varepsilon_{ij}\Big|\le\lambda \Big)\cdot\frac{\lambda-\varepsilon_{ij}-p^{-1/2}\|\theta\|_2}{2\lambda}\right) \\\explain\textrm{by the randomness of $\tau_{ij}\sim{\rm Unif}[-\lambda,\lambda]$}\\
    \nn& =  \mathbb{E}\left(\frac{\lambda-\varepsilon_{ij}-p^{-1/2}\|\theta\|_2}{2\lambda}\right) - \mathbb{E}\left(\frac{\lambda-\varepsilon_{ij}-p^{-1/2}\|\theta\|_2}{2\lambda}\mathbf{1}\Big(\Big|\frac{\|\theta\|_2}{\sqrt{p}}+\varepsilon_{ij}\Big|>\lambda\Big)\right)\\
    & \ge  \frac{1}{2}- \frac{\|\theta\|_2}{2\lambda\sqrt{p}} - E_1-E_2, \label{E1E21}
\end{align}
where we let 
\begin{gather*}
    E_1 : = \mathbb{E}\left(\left|\frac{\lambda-p^{-1/2}\|\theta\|_2}{2\lambda}\right|\mathbf{1}\Big(\Big|\frac{\|\theta\|_2}{\sqrt{p}}+\varepsilon_{ij}\Big|>\lambda\Big)\!\right),\\
    E_2: = \mathbb{E}\left(\frac{|\varepsilon_{ij}|}{2\lambda}\mathbf{1}\Big(\Big|\frac{\|\theta\|_2}{\sqrt{p}}+\varepsilon_{ij}\Big|>\lambda\Big)\!\right).
\end{gather*}
By standard Gaussian tail bound, 
\begin{align*}
    E_1 \le \frac{1}{2}\mathbb{P}\Big(|\varepsilon_{ij}|>\lambda - \frac{\|\theta\|_2}{\sqrt{p}}\Big) \le \exp\Big(-\frac{(\lambda-p^{-1/2}\|\theta\|_2)^2}{2\sigma^2}\Big).
\end{align*}
Next, we have
\begin{align*}
    E_2 & \le \frac{1}{2\lambda}\mathbb{E}\big(|\varepsilon_{ij}|\mathbf{1}(|\varepsilon_{ij}|>\lambda-\|\theta\|_2/\sqrt{p})\big)\\
    & = \frac{\sigma}{2\lambda}\int_{\sigma^{-1}(\lambda-\frac{\|\theta\|_2}{\sqrt{p}})}^\infty \sqrt{\frac{2}{\pi}} s \exp(-\frac{s^2}{2})\,\text{d}s\\
    & = \frac{\sigma}{\lambda\sqrt{2\pi}}\exp\Big(-\frac{(\lambda-p^{-1/2}\|\theta\|_2)^2}{2\sigma^2}\Big).
\end{align*}
Substituting these bounds into (\ref{E1E21}) yields
\begin{align*}
    P_{-1} \ge \frac{1}{2}- \left(\frac{\|\theta\|_2}{2\lambda\sqrt{p}}+\Big(1+\frac{\sigma}{\lambda\sqrt{2\pi}}\Big)\exp\Big(-\frac{(\lambda-p^{-1/2}\|\theta\|_2)^2}{2\sigma^2}\Big)\right): = \frac{1}{2}-\xi_0,
\end{align*}
where we let 
\begin{align*}
    \xi_0 :&= \frac{\|\theta\|_2}{2\lambda\sqrt{p}}+\Big(1+\frac{\sigma}{\lambda\sqrt{2\pi}}\Big)\exp\Big(-\frac{(\lambda-p^{-1/2}\|\theta\|_2)^2}{2\sigma^2}\Big) ,
\end{align*}
and under $\lambda\gtrsim\sigma + p^{-1/2}\|\theta\|_2$ and $\log(\frac{2\lambda\sqrt{p}}{\|\theta\|_2})\le \frac{\lambda^2}{8\sigma^2}$, we have
\begin{align}
    \xi_0 & \le \frac{\|\theta\|_2}{2\lambda\sqrt{p}} + 2\exp\Big(-\frac{\lambda^2}{4\sigma^2}\Big) = \frac{\|\theta\|_2}{2\lambda\sqrt{p}}\Big(1+2\exp\Big(-\frac{\lambda^2}{8\sigma^2}\Big)\Big). \label{xi_0asym}
\end{align}

Note that $\xi_0$ is small enough under the assumption $\lambda\ge C(\frac{\|\theta\|_2}{\sqrt{p}}+\sigma)$. Continuing from (\ref{anticon}), 
\begin{align*}
     \mathbb{E}_\pi \mathbb{E}\,|\bar{\eta}_j(\dot{X}_j)-\eta_j|  &\ge \frac{1}{\sqrt{2p}}\exp\Big(-pD_{KL}\Big(\frac{1}{2}\big\|\frac{1}{2}-\xi_0\Big)\Big)\\
    &= \frac{1}{\sqrt{2p}}\exp\Big(\frac{p}{2}\log(1-4\xi_0^2)\Big)\\
    &\ge \frac{1}{\sqrt{2p}}\exp\big(-2(1+o(1))p\xi_0^2\big),
\end{align*}
where in the last inequality we use $\log(1-4\xi_0^2)\ge -4(1+o(1))\xi_0^2$ for some $o(1)\to 0$ as $\xi_0\to 0$, which is ensured by $\frac{\lambda}{p^{-1/2}\|\theta\|_2+\sigma}\to \infty$. Further using $\xi_0\le (1+o(1))\frac{\|\theta\|_2}{2\lambda\sqrt{p}}$ from (\ref{xi_0asym}), we arrive at the desired claim. 
\end{proof}

\section{Technical Lemmas} \label{app:lem}
\begin{lem}[Anti-concentration of Binomial variable (e.g., \cite{ash2012information})] \label{lem:anti}Let $X\sim {\rm Binomial}(N,q)$, then for any $k=0,1,\cdots,N$ we have $\mathbb{P}\big(X=k\big)\ge \frac{1}{\sqrt{2N}}\exp\big(-N\cdot D_{\rm KL}(\frac{k}{N}\big\|q)\big)$ 
where $D_{\rm KL}(q\|q'):=q\log(\frac{q}{q'})+(1-q)\log(\frac{1-q}{1-q'})$. 
\end{lem}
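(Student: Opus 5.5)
The plan is to prove the anti-concentration bound for the binomial point mass, $\mathbb{P}(X=k)=\binom{N}{k}q^k(1-q)^{N-k}$, by splitting it into a combinatorial factor and an exponential factor. Write $\hat q=k/N$. A direct computation gives
\begin{align*}
q^k(1-q)^{N-k}=\exp\big(-N[H(\hat q)+D_{\rm KL}(\hat q\|q)]\big),
\end{align*}
where $H(x)=-x\log x-(1-x)\log(1-x)$ is the binary entropy in nats. This holds because $\hat q\log q+(1-\hat q)\log(1-q)=-H(\hat q)-D_{\rm KL}(\hat q\|q)$. The claim therefore reduces to the $q$-free inequality
\begin{align*}
\binom{N}{k}\ge \frac{1}{\sqrt{2N}}\exp\big(N H(k/N)\big),\qquad k=0,1,\cdots,N.
\end{align*}

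The boundary cases $k\in\{0,N\}$ are trivial. There $H=0$ and $\binom{N}{k}=1\ge 1/\sqrt{2N}$, so the bound holds for any $N\ge 1$.

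For $1\le k\le N-1$, I would use Stirling's bounds in the Robbins form,
\begin{align*}
\sqrt{2\pi m}\,(m/e)^m\le m!\le \sqrt{2\pi m}\,(m/e)^m e^{1/(12m)}.
\end{align*}
Apply the lower form to $N!$ and the upper form to $k!$ and $(N-k)!$. This yields
\begin{align*}
\binom{N}{k}\ge \sqrt{\frac{N}{2\pi k(N-k)}}\;e^{NH(k/N)}\;e^{-\frac{1}{12k}-\frac{1}{12(N-k)}}.
\end{align*}
Since $k(N-k)\le N^2/4$, the prefactor is at least $\sqrt{2/(\pi N)}\,e^{-1/6}$. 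It remains to compare this with $1/\sqrt{2N}$, that is, to check $2e^{-1/6}/\sqrt{\pi}\ge 1$. Numerically $2/\sqrt{\pi}\approx 1.128$ and $e^{-1/6}\approx 0.846$, so the product is about $0.955$, slightly short.

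This constant bookkeeping is the main obstacle. I would first sharpen the Stirling error in the worst case: the exponent $-\frac{1}{12k}-\frac{1}{12(N-k)}$ equals $-1/6$ only when $k=N-k=1$, i.e. $N=2$. In general it satisfies $\frac{1}{12k}+\frac{1}{12(N-k)}\le \frac{1}{12}+\frac{1}{12(N-1)}$. More usefully, when $k$ is near $N/2$ both terms are small, while the bound $k(N-k)\le N^2/4$ is only tight near $k=N/2$. So I would pair the two estimates: write $k(N-k)=N^2 x(1-x)$ with $x=k/N$, and show that
\begin{align*}
\frac{1}{\sqrt{2\pi N x(1-x)}}\,e^{-\frac{1}{12Nx}-\frac{1}{12N(1-x)}}\ge \frac{1}{\sqrt{2N}}
\end{align*}
for all admissible $x$ once $N$ is moderately large. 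The small cases (say $N\le 10$) can be checked by hand. For example, at $N=2$, $k=1$ we have $\binom{2}{1}=2\ge \frac{1}{2}\cdot 4=2$, which holds with equality.

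As an alternative fallback, I would use the classical information-theory argument of \cite{ash2012information}. It directly shows $\binom{N}{k}\ge (8k(N-k)/N)^{-1/2}e^{NH(k/N)}$, and with $k(N-k)\le N^2/4$ this gives exactly $(2N)^{-1/2}e^{NH(k/N)}$. I would cite that form if the hand-check became messy. Combining the reduced inequality with the factorization above completes the proof.
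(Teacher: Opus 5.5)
Your proof is correct. The paper gives no proof of this lemma; it simply imports it from Ash. The fact being imported is exactly your fallback bound $\binom{N}{k}\ge (8k(N-k)/N)^{-1/2}e^{NH(k/N)}$. Combined with $k(N-k)\le N^2/4$ and your factorization $q^k(1-q)^{N-k}=e^{-N[H(k/N)+D_{\rm KL}(k/N\|q)]}$, it gives the statement.

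Your primary route is a self-contained replacement for that citation, and the pairing step you left as a plan closes as follows. Since $\frac{1}{12k}+\frac{1}{12(N-k)}=\frac{1}{12Ny}$ with $y=k(N-k)/N^2$, the Robbins bound gives $\binom{N}{k}\ge (2\pi N y)^{-1/2}e^{-1/(12Ny)}e^{NH(k/N)}$. It therefore suffices to show $h(y):=(\pi y)^{-1/2}e^{-1/(12Ny)}\ge 1$ on the admissible $y$.

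\begin{itemize}
\item Monotonicity: $\frac{d}{dy}\log h(y)=-\frac{1}{2y}+\frac{1}{12Ny^2}<0$ for $y>\frac{1}{6N}$.
\item Admissible range: every admissible $y$ satisfies $y\ge (N-1)/N^2>\frac{1}{6N}$ once $N\ge 2$.
\item Worst case: hence $h(y)\ge h(1/4)=\frac{2}{\sqrt{\pi}}e^{-1/(3N)}$. This is at least $1$ exactly when $N\ge 3$, because $\log(2/\sqrt{\pi})\approx 0.1208>1/9$.
\end{itemize}

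So the only case needing a hand check is $N=2$, $k=1$, not all $N\le 10$. You already verified that case, and it holds with equality. That equality is also why no Stirling estimate with these error terms could have covered it.

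Compared with the citation, your argument yields only the $k$-independent prefactor $1/\sqrt{2N}$, which is all the lower-bound proof uses. Ash's form retains the sharper $k$-dependent prefactor $(8k(N-k)/N)^{-1/2}$.
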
  
\begin{lem}[Sub-Gaussian tail bound] \label{lem:sgtail}
Under Assumption \ref{subgaussian}, for any $v\in \mathbb{R}
^p$ with $\|v\|_2=1$, we have 
\[\max\bigg\{\mathbb{P}(v^T\varepsilon_i\ge t),\mathbb{P}(v^T\varepsilon_i\le-t)\bigg\}\le \exp\Big(-\frac{\sigma^2 t^2}{2}\Big),\quad \forall t\ge 0.\]
\end{lem}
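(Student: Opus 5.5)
The plan is a Chernoff (exponential Markov) argument. It relies on the moment generating function bound in Assumption \ref{subgaussian} together with the independence and symmetry of the coordinates of $\varepsilon_i$.

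\textbf{Step 1 (MGF of the projection).} Fix a unit vector $v$ and $s\ge 0$. By independence of the entries $\varepsilon_{ij}$,
\[
\mathbb{E}\exp(s\,v^T\varepsilon_i)=\prod_{j=1}^p\mathbb{E}\exp(s v_j\varepsilon_{ij}).
\]
Assumption \ref{subgaussian} only gives the bound for nonnegative arguments. For $v_j<0$ I use the symmetry $\varepsilon_{ij}\stackrel{d}{=}-\varepsilon_{ij}$ to rewrite $\mathbb{E}\exp(sv_j\varepsilon_{ij})=\mathbb{E}\exp(s|v_j|\varepsilon_{ij})$. Each factor is therefore at most $\exp(s^2v_j^2\sigma^2/2)$, and since $\|v\|_2=1$,
\[
\mathbb{E}\exp(s\,v^T\varepsilon_i)\le \exp\Big(\frac{s^2\sigma^2}{2}\Big).
\]
The same bound holds for $-v^T\varepsilon_i$, either by symmetry of $\varepsilon_i$ or by replacing $v$ with $-v$. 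So it suffices to treat the upper tail.

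\textbf{Step 2 (Chernoff).} Markov's inequality gives
\[
\mathbb{P}(v^T\varepsilon_i\ge t)\le \exp\Big(-st+\frac{s^2\sigma^2}{2}\Big).
\]
Optimizing at $s=t/\sigma^2$ yields $\exp(-t^2/(2\sigma^2))$.

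\textbf{Step 3 (matching the displayed exponent).} The last step is to pass from $\exp(-t^2/(2\sigma^2))$ to the exponent $\exp(-\sigma^2t^2/2)$ as it is written in the lemma. This is immediate whenever $\sigma\le 1$, since then $\sigma^2\le \sigma^{-2}$; in particular the two expressions coincide under the normalization $\sigma=1$. I would carry out the proof under that proviso and state it explicitly.

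This step is the main obstacle, because the exponent in the lemma is not scale-consistent. For $\sigma>1$ the displayed bound is false in general: take $\varepsilon_{ij}\sim N(0,\sigma^2)$, for which $\mathbb{P}(v^T\varepsilon_i\ge t)$ decays like $\exp(-t^2/(2\sigma^2))$, slower than $\exp(-\sigma^2t^2/2)$. So the statement as worded can only be established in the regime $\sigma\le 1$. In the applications (for example bounding $\|W\|_{\max}$ by $\sigma\sqrt{2(1+\nu)\log(np)}$ in Step 3 of Appendix \ref{app:proofthm1}), what is actually used is the bound $\exp(-t^2/(2\sigma^2))$ from Step 2, which holds for every $\sigma>0$.
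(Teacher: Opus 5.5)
Your Chernoff argument with $s=t/\sigma^2$ is exactly the paper's proof, which likewise only reaches $\exp(-t^2/(2\sigma^2))$ (the form actually invoked later in the appendix), so you are right that the displayed exponent $\exp(-\sigma^2t^2/2)$ is a typo and is false for $\sigma>1$, with Gaussian noise as a counterexample. Your use of the symmetry of $\varepsilon_{ij}$ for coordinates with $v_j<0$ also fills a small step the paper skips, since Assumption \ref{subgaussian} only controls the moment generating function at nonnegative arguments.
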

\begin{proof}
    We only bound $\mathbb{P}(v^T\varepsilon_i\ge t)$ since $\mathbb{P}(v^T\varepsilon_i\le -t)=\mathbb{P}((-v)^T\varepsilon_i\ge t)$. For $t=0$ the bound is trivial. We suppose $t>0$. Then for any $s>0$, 
    \begin{align*}
        \mathbb{P}(v^T\varepsilon_i\ge t) & = \mathbb{P}\Big(\exp(sv^T\varepsilon_i)\ge \exp(st)\Big) \\
        &\le \exp(-st)\mathbb{E}\Big[\exp(sv^T\varepsilon_i)\Big]\\
        \explain\textrm{by Markov's inequality}\\
        & = \exp(-st)\mathbb{E}\bigg[\prod_{j=1}^p\exp(sv_j\varepsilon_{ij})\bigg]\\
        & = \exp(-st)\prod_{j=1}^p \mathbb{E}\big[\exp(sv_j\varepsilon_{ij})\big]\\
        \explain\textrm{by independence of entries of $\varepsilon_i$}\\
        &\le \exp(-st)\prod_{j=1}^p\exp\Big(\frac{\sigma^2 s^2v_j^2}{2}\Big) = \exp\Big(-st+\frac{\sigma^2s^2}{2}\Big).
    \end{align*}
    Setting $s=\frac{t}{\sigma^2}$ yields the claim. 
\end{proof}
\begin{lem}\cite[Proposition 2.6.1]{vershynin2018high} \label{ver261} 
Let $X_1,\cdots,X_N$ be independent, zero-mean variables such that $\|X_i\|_{\psi_2}<\infty$. Then for some universal constant $C$,
\[\bigg\|\sum_{i=1}^NX_i\bigg\|_{\psi_2}^2\le C\sum_{i=1}^N\|X_i\|_{\psi_2}^2\]
\end{lem}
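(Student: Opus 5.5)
The statement is Lemma~\ref{ver261}: the sum of independent zero-mean sub-Gaussian variables is sub-Gaussian with squared norm controlled by the sum of squared norms. The plan is to go through the moment generating function characterization of the $\psi_2$ norm, since MGFs factorize over independent summands. The standard equivalence of sub-Gaussian properties (as in \cite{vershynin2018high}) gives two facts. First, if $X$ is zero-mean with $\|X\|_{\psi_2}\le K$, then $\mathbb{E}\exp(tX)\le \exp(C_1 t^2K^2)$ for all $t\in\mathbb{R}$. Second, conversely, if $\mathbb{E}\exp(tX)\le \exp(t^2L^2)$ for all $t$, then $\|X\|_{\psi_2}\le C_2 L$.

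First, I would apply the first fact to each $X_i$ with $K_i=\|X_i\|_{\psi_2}$. To prove that fact, I would expand $\mathbb{E}e^{tX}=1+\sum_{k\ge2}t^k\mathbb{E}X^k/k!$; the zero-mean property removes the linear term. I would then bound the moments via $\mathbb{E}|X|^k\le C^k K^k k^{k/2}$, which follows from the $\psi_2$ definition by integrating the tail $\mathbb{P}(|X|\ge s)\le 2e^{-s^2/K^2}$. For $|t|K\lesssim1$, summing the series gives $1+Ct^2K^2\le e^{Ct^2K^2}$. For large $|t|K$, I would instead use $tX\le t^2K^2/2\cdot c+X^2/(2cK^2)$ together with $\mathbb{E}e^{X^2/K^2}\le2$.

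Next, by independence,
\[
\mathbb{E}\exp\Big(t\sum_i X_i\Big)=\prod_i\mathbb{E}e^{tX_i}\le \exp\Big(C_1t^2\sum_i K_i^2\Big).
\]
The sum is therefore sub-Gaussian in the MGF sense with $L^2=C_1\sum_iK_i^2$.

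Finally, I would convert back to the $\psi_2$ norm. Chernoff's bound gives the tail estimate $\mathbb{P}(|S|\ge s)\le 2\exp(-s^2/(4L^2))$. Integrating, $\mathbb{E}\exp(S^2/M^2)=1+\int_0^\infty (2s/M^2)e^{s^2/M^2}\mathbb{P}(|S|\ge s)\,ds$, which is at most $2$ once $M^2=cL^2$ for a large constant $c$. This yields $\|S\|_{\psi_2}^2\le C\sum_i\|X_i\|_{\psi_2}^2$.

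The main obstacle is the first step, the MGF bound from the $\psi_2$ norm for centered variables. It is the only place where the zero mean is used, and it needs the two-regime case split on $|t|K$. The remaining steps are routine tail integration.
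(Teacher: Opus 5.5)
Your proof is correct and is essentially the standard argument: the paper gives no proof of its own and simply cites \cite[Proposition 2.6.1]{vershynin2018high}, whose proof is the same route. It bounds the MGF of each centered sub-Gaussian summand, factorizes the MGF of the sum by independence, and converts the resulting MGF bound back to a $\psi_2$ bound via a Chernoff tail and tail integration. Your constants check out: in the large-$|t|K$ regime take $c\ge 1/2$ so that Jensen gives $\mathbb{E}e^{X^2/(2cK^2)}\le 2^{1/(2c)}$, and in the last step $M^2=12L^2$ suffices.
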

 \begin{lem}\label{psi1psi2}
 \cite[Lemma 2.7.7]{vershynin2018high} Define the sub-exponential   norm (or $\psi_1$-norm) of $X$ as $\|X\|_{\psi_1}=\inf\{K>0:\mathbb{E}\exp(|X|/K)\le 2\}$.  Let $X,Y$ be sub-Gaussian, then $XY$ is sub-exponential: $\|XY\|_{\psi_1}\leq \|X\|_{\psi_2}\|Y\|_{\psi_2}$. 
 \end{lem}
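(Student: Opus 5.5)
The plan is the standard duality route for Orlicz norms. Set $K_1=\|X\|_{\psi_2}$ and $K_2=\|Y\|_{\psi_2}$. If either is zero, then $XY=0$ almost surely and the claim is trivial. Otherwise the inequality is homogeneous: replacing $X,Y$ by $X/K_1, Y/K_2$ rescales both sides by $K_1K_2$. So it suffices to assume $\|X\|_{\psi_2}=\|Y\|_{\psi_2}=1$ and show $\|XY\|_{\psi_1}\le 1$, that is, $\mathbb{E}\exp(|XY|)\le 2$.

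The key steps are as follows.

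\emph{Step 1 (AM--GM).} By $ab\le \frac{a^2+b^2}{2}$ we get $|XY|\le \frac{X^2}{2}+\frac{Y^2}{2}$, and hence
\[\exp(|XY|)\le \exp\big(\tfrac{X^2}{2}\big)\exp\big(\tfrac{Y^2}{2}\big).\]

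\emph{Step 2 (Young's inequality).} Apply Young's inequality in the form $uv\le \frac{u^2}{2}+\frac{v^2}{2}$ to $u=\exp(X^2/2)$ and $v=\exp(Y^2/2)$. This gives
\[\exp\big(\tfrac{X^2}{2}\big)\exp\big(\tfrac{Y^2}{2}\big)\le \tfrac{1}{2}\exp(X^2)+\tfrac{1}{2}\exp(Y^2).\]

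\emph{Step 3 (take expectations).} The normalization means $\mathbb{E}\exp(X^2)\le 2$ and $\mathbb{E}\exp(Y^2)\le 2$. Strictly, the definition of $\|\cdot\|_{\psi_2}$ gives this bound only for every $K>1$; letting $K\downarrow 1$, monotone convergence yields it at $K=1$. Combining with Steps 1--2,
\[\mathbb{E}\exp(|XY|)\le \tfrac12\cdot 2+\tfrac12\cdot 2=2.\]
By the definition of $\|\cdot\|_{\psi_1}$ this gives $\|XY\|_{\psi_1}\le 1$, and undoing the normalization gives the lemma.

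The only delicate point is the infimum in the norm definition, i.e.\ whether $\mathbb{E}\exp(X^2/K_1^2)\le 2$ holds at the infimum $K_1$ itself. There are two ways to handle it: the monotone convergence remark in Step 3, or running the argument with $K_1+\epsilon$ and $K_2+\epsilon$ and then letting $\epsilon\to0$. Everything else is elementary.
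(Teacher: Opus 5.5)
Your proof is correct, and it is the same argument as the source the paper cites. The paper gives no proof of its own; it defers to Vershynin's Lemma 2.7.7, which normalizes to unit $\psi_2$ norms, applies $ab\le \frac{a^2+b^2}{2}$ twice, and uses $\mathbb{E}\exp(X^2)\le 2$ and $\mathbb{E}\exp(Y^2)\le 2$. Your handling of the zero-norm case and of attainment at the infimum (via monotone convergence) is correct and slightly more careful than the textbook version.
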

 \begin{lem}\label{centering}
  \cite[Exercise 2.7.10]{vershynin2018high} For some absolute constant $C$, $\|X-\mathbb{E}X\|_{\psi_1}\leq C\|X\|_{\psi_1}$.
 \end{lem}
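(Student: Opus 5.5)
The statement to prove is Lemma~\ref{centering}: $\|X-\mathbb{E}X\|_{\psi_1}\le C\|X\|_{\psi_1}$. My plan is to use the triangle inequality for the $\psi_1$-norm together with a bound on the $\psi_1$-norm of a constant. Write $K=\|X\|_{\psi_1}$ and assume $K<\infty$; otherwise there is nothing to prove. Since $\|\cdot\|_{\psi_1}$ is a norm (it is an Orlicz norm), we get
\[\|X-\mathbb{E}X\|_{\psi_1}\le \|X\|_{\psi_1}+\|\mathbb{E}X\|_{\psi_1}.\]
It therefore suffices to show that $\|\mathbb{E}X\|_{\psi_1}\le C'K$.

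\textbf{Step 1 (norm of a constant).} For a deterministic $a\in\mathbb{R}$, the condition $\exp(|a|/t)\le 2$ holds exactly when $t\ge |a|/\log 2$. Hence $\|a\|_{\psi_1}=|a|/\log 2$.

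\textbf{Step 2 (bound $|\mathbb{E}X|$ by $K$).} Jensen's inequality gives $|\mathbb{E}X|\le \mathbb{E}|X|$. Next, $\exp$ is convex and $\exp(0)=1$, so Jensen applied to $\exp$ yields
\[\exp\big(\mathbb{E}|X|/K\big)\le \mathbb{E}\exp(|X|/K)\le 2.\]
(If the infimum defining $K$ is not attained, the same holds with $K$ replaced by any $K'>K$; letting $K'\downarrow K$ uses monotone convergence.) Taking logarithms gives $\mathbb{E}|X|\le K\log 2$. Combined with Step 1, this gives $\|\mathbb{E}X\|_{\psi_1}\le K$.

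\textbf{Step 3 (conclude).} Putting the pieces together, $\|X-\mathbb{E}X\|_{\psi_1}\le 2K$, so the claim holds with $C=2$.

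The only point that needs care is that $\|\cdot\|_{\psi_1}$ really satisfies the triangle inequality. This is standard for Orlicz norms. If I had to verify it directly, I would argue as follows. Take $a>\|X\|_{\psi_1}$ and $b>\|Y\|_{\psi_1}$. By convexity of $u\mapsto e^u$ and the bound $|X+Y|\le |X|+|Y|$,
\[\exp\Big(\frac{|X+Y|}{a+b}\Big)\le \frac{a}{a+b}e^{|X|/a}+\frac{b}{a+b}e^{|Y|/b}.\]
Taking expectations, the right-hand side has expectation at most $2$. Hence $\|X+Y\|_{\psi_1}\le a+b$, and letting $a\downarrow\|X\|_{\psi_1}$ and $b\downarrow\|Y\|_{\psi_1}$ gives the triangle inequality. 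I do not expect any real obstacle beyond this point.
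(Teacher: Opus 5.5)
Your proof is correct. The paper gives no argument of its own and only cites Vershynin's Exercise 2.7.10, whose intended solution is your route: the triangle inequality for the Orlicz norm plus $\|\mathbb{E}X\|_{\psi_1}\lesssim|\mathbb{E}X|\le\mathbb{E}|X|\lesssim\|X\|_{\psi_1}$, mirroring the sub-Gaussian centering lemma in that reference, with your Jensen step making the constants explicit and giving $C=2$.
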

 \begin{lem}
     \label{hoeffdinglem}\cite[Theorem 2.2.6]{vershynin2018high}
     Let $X_1,...,X_N$ be independent random variables. Assume that $X_i\in[m_i,M_i]$ for every $i$. Then, for any $t>0$ we have
     \[\mathbb{P}\bigg(\sum_{i=1}^N(X_i-\mathbb{E}X_i)\ge t\bigg)\le\exp\bigg(-\frac{2t^2}{\sum_{i=1}^N(M_i-m_i)^2}\bigg)\]
 \end{lem}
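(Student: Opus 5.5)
This is Hoeffding's inequality for independent, bounded random variables. I will prove it by the Chernoff (exponential Markov) method together with Hoeffding's lemma for a single bounded, centered variable. The ingredients are a one-variable moment generating function bound, tensorization via independence, and optimization over the free parameter.

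\emph{Step 1 (Chernoff bound).} Set $S=\sum_{i=1}^N(X_i-\mathbb{E}X_i)$. For any $s>0$, Markov's inequality applied to $e^{sS}$ gives $\mathbb{P}(S\ge t)\le e^{-st}\,\mathbb{E}e^{sS}$. Since the $X_i$ are independent, this factorizes as
\[
\mathbb{E}e^{sS}=\prod_{i=1}^N\mathbb{E}\,e^{s(X_i-\mathbb{E}X_i)}.
\]
This is the same manipulation already used in the proof of Lemma~\ref{lem:sgtail}.

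\emph{Step 2 (Hoeffding's lemma).} I will show that if $Z\in[m,M]$ and $\mathbb{E}Z=\mu$, then $\mathbb{E}e^{s(Z-\mu)}\le \exp\big(s^2(M-m)^2/8\big)$. This is the only step with real content, and I expect it to be the main obstacle. I would use the log-MGF route. Let $\psi(s)=\log\mathbb{E}e^{s(Z-\mu)}$. Then $\psi(0)=0$ and $\psi'(0)=0$. Moreover, $\psi''(s)$ equals the variance of $Z$ under the exponentially tilted law $d\mathbb{Q}_s\propto e^{sZ}d\mathbb{P}$. That tilted law is still supported on $[m,M]$, so its variance is at most $(M-m)^2/4$: indeed $\mathrm{Var}(Z)\le \mathbb{E}(Z-\tfrac{m+M}{2})^2\le (M-m)^2/4$. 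Taylor's theorem with remainder then gives $\psi(s)\le s^2(M-m)^2/8$.

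Two points need some care here. First, differentiation under the expectation is justified because $Z$ is bounded. Second, one could instead use the convexity bound $e^{sz}\le \frac{M-z}{M-m}e^{sm}+\frac{z-m}{M-m}e^{sM}$ and then analyze the resulting function $L(h)=-hp+\log(1-p+pe^h)$. I prefer the tilting argument because it is cleaner.

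\emph{Step 3 (Optimization).} Combining Steps 1 and 2 gives
\[
\mathbb{P}(S\ge t)\le\exp\Big(-st+\frac{s^2}{8}\sum_{i=1}^N(M_i-m_i)^2\Big).
\]
Choosing $s=4t/\sum_i(M_i-m_i)^2$, which minimizes the quadratic in the exponent, yields exponent $-2t^2/\sum_i(M_i-m_i)^2$, which is the claim. If $\sum_i(M_i-m_i)^2=0$, then every $X_i$ is constant, so $S=0$ and the bound holds trivially for $t>0$.
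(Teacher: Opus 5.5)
Your proof is correct: the Chernoff bound, Hoeffding's lemma via the tilted-variance bound $\psi''(s)\le (M-m)^2/4$, and the choice $s=4t/\sum_i(M_i-m_i)^2$ together give exactly the stated exponent $-2t^2/\sum_i(M_i-m_i)^2$. The paper does not prove this lemma itself but cites \cite[Theorem 2.2.6]{vershynin2018high}, and your argument is the standard proof of that result.
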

  \begin{lem}\label{bern} \cite[Theorem 2.8.1]{vershynin2018high} Let $X_1,...,X_N$ be independent, zero-mean, sub-exponential random variables. Then for every $t\geq 0$, for some absolute constant $c$ we have \begin{equation}
        \nonumber\mathbb{P}\left(\Big|\sum_{i=1}^NX_i\Big|\geq t\right)\leq 2\exp\left(-c\min\Big\{\frac{t^2}{\sum_{i=1}^N \|X_i\|_{\psi_1}^2},\frac{t}{\max_{1\leq i\leq N}\|X_i\|_{\psi_1}}\Big\}\right)
    \end{equation}
 \end{lem}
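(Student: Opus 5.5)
The statement to prove is Bernstein's inequality for independent, zero-mean, sub-exponential variables. I would follow the standard moment generating function (Chernoff) route. It suffices to bound $\mathbb{P}(\sum_i X_i\ge t)$: the lower tail follows by applying the same bound to $-X_i$, which has the same $\psi_1$-norm, and a union bound then gives the factor $2$. Write $K_i=\|X_i\|_{\psi_1}$ and $K=\max_i K_i$.

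\textbf{Step 1 (MGF bound for a single variable).} The key step is to show that a zero-mean $X$ with $\|X\|_{\psi_1}\le K_i$ satisfies
\[
\mathbb{E}\exp(sX)\le \exp(C s^2K_i^2)\qquad\text{for all } |s|\le \frac{c}{K_i}.
\]
I would do this through moments.
\begin{itemize}
\item From $\mathbb{E}\exp(|X|/K_i)\le 2$ one gets $\mathbb{E}|X|^k\le 2\,k!\,K_i^k$, since $x^k/k!\le e^{x}$ for $x\ge 0$.
\item Expand the exponential and use $\mathbb{E}X=0$ to kill the linear term:
\[
\mathbb{E}e^{sX}\le 1+\sum_{k\ge2}\frac{|s|^k\,\mathbb{E}|X|^k}{k!}\le 1+2\sum_{k\ge2}(|s|K_i)^k.
\]
\item For $|s|K_i\le 1/2$ this is at most $1+4s^2K_i^2\le \exp(4s^2K_i^2)$.
\end{itemize}
This is the only step with real content. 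The point to watch is that the zero-mean assumption is exactly what removes the first-order term, leaving the quadratic behaviour.

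\textbf{Step 2 (Chernoff).} By Markov's inequality and independence, for $0<s\le 1/(2K)$,
\[
\mathbb{P}\Big(\sum_i X_i\ge t\Big)\le \exp\Big(-st+4s^2\sum_iK_i^2\Big).
\]

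\textbf{Step 3 (optimize $s$).} Choose
\[
s=\min\Big\{\frac{t}{8\sum_i K_i^2},\ \frac{1}{2K}\Big\}.
\]
\begin{itemize}
\item If the first term is the minimum, the exponent equals $-t^2/(16\sum_iK_i^2)$.
\item Otherwise $s=1/(2K)$ and $4s^2\sum_iK_i^2\le st/2$, so the exponent is at most $-st/2=-t/(4K)$.
\end{itemize}
In both cases the exponent is at most $-c\min\{t^2/\sum_iK_i^2,\ t/K\}$ with $c=1/16$. Combining with the symmetric lower-tail bound gives the stated two-sided inequality.

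This is routine: the statement is a textbook result, cited from Vershynin, and it could simply be invoked. The only care needed is the constants in Step 1.
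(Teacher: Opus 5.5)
Your argument is correct and is the standard proof behind the cited result (Vershynin, Theorem 2.8.1); the paper itself gives no proof beyond the citation. The steps match that source: a centered sub-exponential MGF bound $\mathbb{E}e^{sX_i}\le e^{Cs^2\|X_i\|_{\psi_1}^2}$ for $|s|\lesssim 1/\|X_i\|_{\psi_1}$, then Chernoff with independence, then optimizing $s$ over the constrained range, then applying the bound to $-X_i$. Your moment-based derivation of the MGF bound and your explicit constant $c=1/16$ both check out.
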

\begin{lem}\cite[Section 2.1]{abdalla2026robust}
    \label{lem:expect}
    Let $a\in\mathbb{R}$ and $\tau\sim{\rm Unif}[-\lambda,\lambda]$, we have 
    \[\mathbb{E}[\lambda \sign(a+\tau)]=T_{\lambda}(a) :=\begin{cases}
    ~~~~a\,,\quad&\textrm{if}~|a|\le \lambda\\
    \lambda\sign(a)\,,\quad& \textrm{if}~|a|>\lambda 
\end{cases}.\]
\end{lem}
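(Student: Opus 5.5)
We prove Lemma \ref{lem:expect} by a direct computation: condition on nothing, write the expectation as an integral against the uniform density $\frac{1}{2\lambda}$ on $[-\lambda,\lambda]$, and split into the cases $|a|\le\lambda$ and $|a|>\lambda$. Since $\sign(a+\tau)$ takes only the values $\pm1$, we have
\[
\mathbb{E}[\lambda\sign(a+\tau)] = \lambda\big(\mathbb{P}(\tau\ge -a)-\mathbb{P}(\tau<-a)\big)=\lambda\big(2\,\mathbb{P}(\tau\ge -a)-1\big),
\]
so everything reduces to computing $\mathbb{P}(\tau\ge -a)$ for $\tau\sim{\rm Unif}[-\lambda,\lambda]$.

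In the case $|a|\le\lambda$, the threshold $-a$ lies in $[-\lambda,\lambda]$, hence $\mathbb{P}(\tau\ge -a)=\frac{\lambda-(-a)}{2\lambda}=\frac{\lambda+a}{2\lambda}$, and substituting gives $\lambda\big(\frac{\lambda+a}{\lambda}-1\big)=a$. In the case $a>\lambda$, we have $-a<-\lambda$, so $\mathbb{P}(\tau\ge -a)=1$ and the expectation equals $\lambda=\lambda\sign(a)$; symmetrically, for $a<-\lambda$ we have $-a>\lambda$, so $\mathbb{P}(\tau\ge-a)=0$ and the expectation is $-\lambda=\lambda\sign(a)$. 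Combining the cases yields $T_\lambda(a)$.

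The only point requiring a moment's care is the convention $\sign(0)=1$ and the boundary $|a|=\lambda$: since $\tau$ is continuous, the event $\{a+\tau=0\}$ has probability zero, so the convention is immaterial, and at $|a|=\lambda$ both formulas agree ($a=\lambda\sign(a)$), so the piecewise definition is consistent. There is no real obstacle; this is a routine calculation.
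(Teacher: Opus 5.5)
Your computation is correct and complete. Reducing to $\lambda\big(2\,\mathbb{P}(\tau\ge -a)-1\big)$ and evaluating the uniform tail in the three regimes gives exactly $T_\lambda(a)$. Your remarks on the $\sign(0)$ convention and on the boundary $|a|=\lambda$ settle the only loose ends.

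The paper gives no proof of this lemma; it simply cites \cite{abdalla2026robust}. Your argument therefore supplies a self-contained verification.

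For comparison, there is a route implicit elsewhere in the paper, e.g.\ in the proof of Lemma \ref{lem:Pstarstar}. For $|a|\le\lambda$ one has $|a+\tau|\le 2\lambda$, so $\lambda\sign(a+\tau)=Q_{2\lambda}(a+\tau)$ almost surely. The unbiasedness $\mathbb{E}[Q_{2\lambda}(a+\tau)]=a$ from Lemma \ref{lem:quannoi} then gives the first case. For $|a|>\lambda$, the sign of $a+\tau$ equals $\sign(a)$ deterministically.

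That route ties the lemma to the uniform-quantizer device used throughout the proofs. However, it relies on the dithering-unbiasedness fact, which the paper only cites. Your direct calculation needs nothing beyond the distribution function of the uniform law, so it is the more elementary of the two.
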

 
\begin{lem}
    \label{Hbound2}Let $H(Z)=Z-\diag(Z)$. For any $Z\in \mathbb{R}^{n\times n}$ we have 
    $\|H(Z)\|_{op}\le 2\|Z\|_{op}$. 
\end{lem}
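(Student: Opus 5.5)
The claim is that $\|H(Z)\|_{op}\le 2\|Z\|_{op}$. Since $H(Z)=Z-\diag(Z)$, the triangle inequality for the operator norm gives
\[
\|H(Z)\|_{op}\le \|Z\|_{op}+\|\diag(Z)\|_{op},
\]
so it suffices to prove $\|\diag(Z)\|_{op}\le \|Z\|_{op}$.

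The operator norm of a diagonal matrix is the largest absolute value of its diagonal entries, so $\|\diag(Z)\|_{op}=\max_{i\in[n]}|Z_{ii}|$. Let $e_i$ denote the $i$-th standard basis vector of $\mathbb{R}^n$. Each diagonal entry can be written as $Z_{ii}=e_i^TZe_i$, and by Cauchy--Schwarz together with the definition of the operator norm,
\[
|Z_{ii}|=|e_i^TZe_i|\le \|e_i\|_2\,\|Ze_i\|_2\le \|Z\|_{op}.
\]
Taking the maximum over $i$ gives $\|\diag(Z)\|_{op}\le\|Z\|_{op}$, and substituting into the first display yields $\|H(Z)\|_{op}\le 2\|Z\|_{op}$.

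Every step is elementary and no step presents a real obstacle. The only thing to check is that $\diag(Z)$ denotes the diagonal matrix that keeps the diagonal of $Z$, which is how it is defined in the text right after Algorithm \ref{alg:2lloyd}.
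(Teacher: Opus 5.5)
Your proof is correct and follows the paper's route: the triangle inequality $\|Z-\diag(Z)\|_{op}\le\|Z\|_{op}+\|\diag(Z)\|_{op}$ together with $\|\diag(Z)\|_{op}\le\|Z\|_{op}$. The paper states the second inequality without justification, and your bound $|Z_{ii}|=|e_i^TZe_i|\le\|Z\|_{op}$ supplies that missing step.
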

\begin{proof}
    By triangle inequality, $$\|H(Z)\|_{op}=\|Z-\diag(Z)\|_{op}\le\|Z\|_{op}+\|\diag(Z)\|_{op}\le 2\|Z\|_{op}.$$
\end{proof}
\begin{lem}\cite[Lemma 2]{ndaoud2022sharp}
    \label{lemrow} 
    For a random matrix $W$ with independent, zero-mean columns, we have
    $$\|H(W^TW)\|_{op}\le 2\|W^TW-\mathbb{E}(W^TW)\|_{op}$$
\end{lem}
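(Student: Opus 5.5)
The statement is Lemma~\ref{lemrow}: for $W$ with independent, zero-mean columns, $\|H(W^TW)\|_{op}\le 2\|W^TW-\mathbb{E}(W^TW)\|_{op}$. The plan is to identify $\mathbb{E}(W^TW)$ as a diagonal matrix. Then $H(W^TW)$ equals $H(W^TW-\mathbb{E}W^TW)$, and Lemma~\ref{Hbound2} finishes the argument.

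\textbf{Step 1: $\mathbb{E}(W^TW)$ is diagonal.} Write $W=[w_1,\dots,w_n]$, so that $(W^TW)_{ij}=\langle w_i,w_j\rangle$. For $i\ne j$, the columns $w_i$ and $w_j$ are independent and zero-mean, so
\[\mathbb{E}\langle w_i,w_j\rangle=\langle \mathbb{E}w_i,\mathbb{E}w_j\rangle=0.\]
This needs finite second moments, which hold in every application here since the columns are sub-Gaussian. Hence $\mathbb{E}(W^TW)=\diag(\mathbb{E}\|w_1\|_2^2,\dots,\mathbb{E}\|w_n\|_2^2)$.

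\textbf{Step 2: Reduce to the centered matrix.} The operator $H$ is linear and annihilates diagonal matrices, so $H(\mathbb{E}W^TW)=0$. Therefore
\[H(W^TW)=H\big(W^TW-\mathbb{E}W^TW\big).\]

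\textbf{Step 3: Apply Lemma~\ref{Hbound2}.} With $Z=W^TW-\mathbb{E}W^TW$, Lemma~\ref{Hbound2} gives $\|H(Z)\|_{op}\le 2\|Z\|_{op}$, which is the claim.

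There is no real obstacle. The only point to check is the moment condition used in Step 1. The only structural fact used is that the columns are independent and centered; no independence across rows is needed.
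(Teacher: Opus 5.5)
Your proof is correct and is the standard argument. The paper gives no proof of its own (it cites \cite[Lemma 2]{ndaoud2022sharp}), but in Step 2.1 of Appendix~\ref{app:proofthm1} it uses the lemma exactly as you argue: it replaces $H(U^TU/n)$ by $H\big(U^TU/n-\mathbb{E}\,U^TU/n\big)$ because the expectation is diagonal, then applies Lemma~\ref{Hbound2}, and your note that finite second moments are needed is already implicit in $\mathbb{E}(W^TW)$ appearing in the statement.
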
 
\begin{lem}\cite[Lemma 5]{ndaoud2022sharp}\label{retractbound}
    For any $x\in\{-1,1\}^n$ and $y\in\mathbb{R}^n$, we have
    \begin{align*}
        \frac{\|\sign(y)-x\|_1}{n}\le 2\Big\|y-\frac{x}{\sqrt{n}}\Big\|_2^2.
    \end{align*}
\end{lem}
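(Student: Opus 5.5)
The plan is a coordinatewise counting argument. No earlier result is needed: both sides of the claimed inequality decompose over the indices $i\in[n]$, and the left-hand side only sees the coordinates where $\sign(y_i)\ne x_i$.

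\textbf{Step 1: rewrite the left-hand side.} Since $\sign(y_i)\in\{-1,1\}$ and $x_i\in\{-1,1\}$, each term $|\sign(y_i)-x_i|$ equals $0$ or $2$. Letting $M:=\{i\in[n]:\sign(y_i)\ne x_i\}$, I will write
\begin{align*}
\frac{\|\sign(y)-x\|_1}{n}=\frac{2|M|}{n}.
\end{align*}

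\textbf{Step 2: each mismatched coordinate costs at least $1/n$ on the right.} Fix $i\in M$. I will show that $y_i$ and $x_i$ do not lie strictly on the same side of $0$, by splitting into the two possible values of $x_i$.
\begin{itemize}
\item If $x_i=1$, then $\sign(y_i)=-1$, so $y_i<0$.
\item If $x_i=-1$, then $\sign(y_i)=1$, so $y_i\ge 0$. This case includes $y_i=0$, because the paper's convention is $\sign(0)=1$.
\end{itemize}
In both cases $x_iy_i\le 0$, which gives
\begin{align*}
\Big|y_i-\frac{x_i}{\sqrt{n}}\Big|=|y_i|+\frac{1}{\sqrt{n}}\ge\frac{1}{\sqrt{n}}.
\end{align*}
Hence $(y_i-x_i/\sqrt{n})^2\ge 1/n$ for every $i\in M$.

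\textbf{Step 3: sum over coordinates.} Dropping the nonnegative terms with $i\notin M$,
\begin{align*}
\Big\|y-\frac{x}{\sqrt{n}}\Big\|_2^2\ge\sum_{i\in M}\Big(y_i-\frac{x_i}{\sqrt{n}}\Big)^2\ge\frac{|M|}{n}.
\end{align*}
Multiplying by $2$ and comparing with Step 1 gives the claim.

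The only point that needs care is the convention $\sign(0)=1$. It is the reason Step 2 uses the weak inequality $x_iy_i\le 0$ rather than a strict one, and under it the case $y_i=0$ is covered by the second bullet. I do not expect any genuine obstacle beyond that.
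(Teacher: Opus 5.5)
Your proof is correct: a mismatch $\sign(y_i)\ne x_i$ forces $x_iy_i\le 0$, so $(y_i-x_i/\sqrt{n})^2\ge 1/n$, and since $|\sign(y_i)-x_i|\in\{0,2\}$ the bound follows, with $y_i=0$ handled correctly under either sign convention. The paper gives no proof of its own and only cites \cite[Lemma 5]{ndaoud2022sharp}; your coordinatewise comparison is the standard proof of that fact, so your route is essentially the intended one.
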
 
\begin{lem}
\label{lem:quannoi}
Let $Q_{2\lambda}(a)=2\lambda(\lfloor \frac{a}{2\lambda}\rfloor + \frac{1}{2})$ for $a\in\mathbb{R}$ and $\tau\sim {\rm Unif}[-\lambda,\lambda]$. Then for any $a\in \mathbb{R}$, we have 
\begin{gather*}
    \mathbb{E}\big[Q_{2\lambda}(a+\tau)\big] = a, \quad |Q_{2\lambda}(a+\tau)-a|\le 2\lambda\quad\text{and} \quad \|Q_{2\lambda}(a+\tau)-a\|_{\psi_2}=O(\lambda)
\end{gather*}
\end{lem}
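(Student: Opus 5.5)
The statement to prove is Lemma \ref{lem:quannoi}: $Q_{2\lambda}(a+\tau)$ is unbiased for $a$, bounded within $2\lambda$ of $a$, and $O(\lambda)$ sub-Gaussian around $a$. I would prove the three claims in order, working directly with the structure of $Q_{2\lambda}$.

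For the mean, write $a = 2\lambda m + s$ with $m=\lfloor \frac{a}{2\lambda}\rfloor\in\mathbb{Z}$ and $s\in[0,2\lambda)$. Then $a+\tau$ ranges over the interval $[a-\lambda,a+\lambda]$, which has length exactly $2\lambda$. This interval meets at most two consecutive cells of the grid $\{[2\lambda k,2\lambda(k+1))\}_{k\in\mathbb{Z}}$; on each cell $Q_{2\lambda}$ equals the cell midpoint $2\lambda(k+\frac12)$.

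I split into the cases $s\ge\lambda$ and $s<\lambda$.
\begin{itemize}
\item If $s\ge\lambda$, the interval splits at $2\lambda(m+1)$. The value $2\lambda(m+\tfrac12)$ is taken with probability $\frac{2\lambda(m+1)-(a-\lambda)}{2\lambda}=\frac{3\lambda-s}{2\lambda}$, and the value $2\lambda(m+\tfrac32)$ with probability $\frac{s-\lambda}{2\lambda}$. A direct computation gives expectation $2\lambda m + \lambda(3\lambda-s)/(2\lambda)+3\lambda(s-\lambda)/(2\lambda)=2\lambda m+s=a$.
\item The case $s<\lambda$, where the split is at $2\lambda m$, is symmetric.
\end{itemize}
The boundary case where the interval is exactly a single cell is covered by either formula, since it is a measure-zero event. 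Equivalently, one can note that $a\mapsto \mathbb{E}Q_{2\lambda}(a+\tau)$ is the convolution of a staircase with a box of width equal to the step length, hence piecewise linear, and it matches the identity at the grid midpoints and grid points.

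For the boundedness claim, $|Q_{2\lambda}(x)-x|\le\lambda$ for every $x$, since $x$ lies in a cell of width $2\lambda$ whose midpoint is $Q_{2\lambda}(x)$. Combining with $|\tau|\le\lambda$ gives $|Q_{2\lambda}(a+\tau)-a|\le |Q_{2\lambda}(a+\tau)-(a+\tau)|+|\tau|\le 2\lambda$.

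For the sub-Gaussian claim, a random variable bounded by $2\lambda$ satisfies $\mathbb{E}\exp(X^2/K^2)\le \exp(4\lambda^2/K^2)\le 2$ once $K=2\lambda/\sqrt{\log 2}$, so $\|Q_{2\lambda}(a+\tau)-a\|_{\psi_2}\le 2\lambda/\sqrt{\log 2}=O(\lambda)$. Because the variable is also centered (by the first claim), Hoeffding's lemma would give the variance-proxy form as well if needed.

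The only step needing care is the case bookkeeping in the expectation computation; everything else is immediate.
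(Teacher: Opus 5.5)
Your proof is correct. For the boundedness and sub-Gaussian claims it is exactly the paper's argument: $|Q_{2\lambda}(x)-x|\le\lambda$ and $|\tau|\le\lambda$ combined by the triangle inequality, and then boundedness by $2\lambda$ gives an $O(\lambda)$ $\psi_2$-norm. The only difference is that the paper does not prove $\mathbb{E}[Q_{2\lambda}(a+\tau)]=a$ but cites the classical dithered-quantization result for it, whereas your two-cell case computation (or your convolution argument) makes the lemma self-contained.
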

\begin{proof}
    $\mathbb{E}\big[Q_{2\lambda}(a+\tau)\big] = a$ can be found in \cite{gray1993dithered}. For the second result, the definition of $Q_{2\lambda}(a)$ gives $\sup_{a\in\mathbb{R}}|Q_{2\lambda}(a)-a|\le \lambda$, hence 
    \[|Q_{2\lambda}(a+\tau)-a|\le |Q_{2\lambda}(a+\tau)-(a+\tau)|+|\tau|\le 2\lambda.\]
    Since $Q_{2\lambda}(a+\tau)-a$ is zero-mean and bounded by $2\lambda$, $\|Q_{2\lambda}(a+\tau)-a\|_{\psi_2}=O(\lambda)$ immediately follows (e.g., \cite[Section 2]{vershynin2018high}). 
\end{proof}
\begin{lem}\label{lem:sgvbound}
    Let $x\in \mathbb{R}^n$ be a random vector with independent entries with sub-Gaussian norms bounded by $A$. Then for some absolute constant $C$, we have
    \begin{align*}
       \mathbb{P}\big(\|x\|_2\le C\cdot  A\sqrt{n}\big)\ge 1-\exp(-2n).
    \end{align*}
\end{lem}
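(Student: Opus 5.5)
The plan is to reduce to a Chernoff-type bound on $\|x\|_2^2=\sum_{i=1}^n x_i^2$. Each $x_i^2$ is sub-exponential with $\|x_i^2\|_{\psi_1}=\|x_i\|_{\psi_2}^2\le A^2$, by Lemma \ref{psi1psi2} applied with $X=Y=x_i$. By the definition of the $\psi_1$-norm, $\mathbb{E}\exp(x_i^2/A^2)\le 2$.

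I will not center the variables; instead I will use the moment generating function directly. Markov's inequality together with independence gives, for any $t>0$,
\begin{align*}
\mathbb{P}\big(\|x\|_2^2\ge t\big)\le e^{-t/A^2}\,\mathbb{E}\exp\Big(\sum_{i=1}^n x_i^2/A^2\Big)=e^{-t/A^2}\prod_{i=1}^n\mathbb{E}\exp(x_i^2/A^2)\le 2^n e^{-t/A^2}.
\end{align*}
Choosing $t=C^2A^2 n$ makes the right-hand side $\exp\big(-n(C^2-\log 2)\big)$. With $C^2\ge 2+\log 2$ (for example $C=2$) this is at most $\exp(-2n)$, which is exactly the claimed bound $\mathbb{P}(\|x\|_2\le CA\sqrt{n})\ge 1-\exp(-2n)$.

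There is no real obstacle. The one point to check is that no mean-zero assumption is needed: the lemma does not assume centering, and the MGF argument above never uses it. Only the defining inequality $\mathbb{E}\exp(x_i^2/\|x_i\|_{\psi_2}^2)\le 2$ enters. One also uses that $\|x_i\|_{\psi_2}\le A$ implies $\mathbb{E}\exp(x_i^2/A^2)\le 2$, which follows by monotonicity in $K$. If $A=0$ the statement is trivial, since then $x=0$ almost surely. Alternatively one could apply Bernstein's inequality (Lemma \ref{bern}) to the centered variables $x_i^2-\mathbb{E}x_i^2$ via Lemma \ref{centering}, bounding $\mathbb{E}x_i^2\lesssim A^2$. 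The direct MGF bound is cleaner and gives the explicit exponent $2n$ immediately.
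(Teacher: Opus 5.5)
Your proof is correct, and it takes a different and more elementary route than the paper's. The paper centers the sub-exponential variables $x_i^2$ via Lemma \ref{centering}. It then applies Bernstein's inequality (Lemma \ref{bern}) to get two-sided concentration of $\|x\|_2^2$ about $\mathbb{E}\|x\|_2^2$ at deviation $t=C'nA^2$, and separately bounds $\mathbb{E}\|x\|_2^2=O(nA^2)$.

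You instead use a single Chernoff bound at the fixed exponent $1/A^2$. This is exactly where the defining inequality $\mathbb{E}\exp(x_i^2/A^2)\le 2$ controls each factor of the moment generating function, so independence gives $\mathbb{P}(\|x\|_2^2\ge t)\le 2^n e^{-t/A^2}$ in one line. The lemma only asks for a one-sided bound at the crude scale $nA^2$, so nothing is lost. You get the explicit constant $C=2$ and exactly the stated failure probability $\exp(-2n)$. The paper's chain, as written, ends at $2\exp(-2n)$ and would need $C'$ enlarged to match the statement.

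What the Bernstein route buys is concentration of $\|x\|_2^2$ around its mean, with deviations of lower order than the mean. The paper reuses this template in Lemma \ref{lem:boundsumvnorm}, where the leading constant in $\mathbb{E}\|\cdot\|_2^2$ must be preserved; a fixed-parameter Chernoff bound would lose it there.

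One small point of rigor. If $\|x_i\|_{\psi_2}=A$ exactly, the inequality $\mathbb{E}\exp(x_i^2/A^2)\le 2$ needs monotone convergence (the infimum in the definition is attained), not monotonicity alone. Alternatively, run the argument with $2A$ in place of $A$ when $A>0$ and absorb the factor into $C$.
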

\begin{proof}
    Let $x_i$ be the $i$-th entry of $x$, then $\|x\|_2^2 = \sum_{j=1}^n x_j^2$, where $x_1^2,...,x_n^2$ are sub-exponential variables with sub-exponent norm bounded by (see Lemma \ref{psi1psi2})
    \begin{align*}
        \|x_i^2\|_{\psi_1}\le\|x_i\|_{\psi_2}^2 \le A^2.
    \end{align*}
    By centering (c.f., Lemma \ref{centering}), we have $\|x_i^2-\mathbb{E}x_i^2\|_{\psi_1}=O(A^2)$.
    Therefore, Bernstein's inequality (cf. Lemma \ref{bern}) gives
    \begin{align*}
        \mathbb{P}\Big(\big|\|x\|_2^2-\mathbb{E}\|x\|_2^2\big|\ge t\Big) \le 2\exp\left(-c \min \left\{\frac{t^2}{nA^4},\frac{t}{A^2}\right\}\right).
    \end{align*}
    Setting $t=C'nA^2$ with large enough $C'$ yields 
    \begin{align*}
        \mathbb{P}\Big(\|x\|_2^2 \ge C'nA^2 + \mathbb{E}\|x\|_2^2\Big)\le 2\exp(-2n).
    \end{align*}
    Moreover, by the sub-Gaussianity of $x_i$ we have $\mathbb{E}x_i^2 =O(A^2)$, and therefore $\mathbb{E}\|x\|_2^2 = O(nA^2)$, which completes the proof. 
\end{proof}
\begin{lem}
    \label{utubound} 
    Suppose that $U\in \mathbb{R}^{p\times n}$ have independent rows with  sub-Gaussian norms bounded by $A$, then for some absolute constant $C$ we have
    \begin{align*}
        \mathbb{P}\bigg(\|U^TU-\mathbb{E}(U^TU)\|_{op}\ge  CA^2n\max\big\{1,\sqrt{\frac{p}{n}}\big\}\bigg) \le 2\exp(-2n). 
    \end{align*}
\end{lem}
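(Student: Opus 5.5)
The plan is to prove the bound on $\|U^TU-\mathbb{E}(U^TU)\|_{op}$ by a standard $\epsilon$-net argument over the unit sphere $\mathbb{S}^{n-1}$. Write the rows of $U$ as $w_1^T,\dots,w_p^T$ with $w_k\in\mathbb{R}^n$ independent and $\|w_k\|_{\psi_2}\le A$ (as vectors). Then
\[
U^TU-\mathbb{E}(U^TU)=\sum_{k=1}^p\big(w_kw_k^T-\mathbb{E}w_kw_k^T\big),
\]
and for a fixed $x\in\mathbb{S}^{n-1}$ we get $x^T(U^TU-\mathbb{E}U^TU)x=\sum_{k=1}^p\big(\langle w_k,x\rangle^2-\mathbb{E}\langle w_k,x\rangle^2\big)$. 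This is a sum of $p$ independent, zero-mean random variables. By Lemma \ref{psi1psi2} and Lemma \ref{centering}, each has $\psi_1$-norm $O(A^2)$.

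Next I will apply Bernstein's inequality (Lemma \ref{bern}) to this sum, which gives
\[
\mathbb{P}\Big(\big|x^T(U^TU-\mathbb{E}U^TU)x\big|\ge t\Big)\le 2\exp\Big(-c\min\Big\{\tfrac{t^2}{pA^4},\tfrac{t}{A^2}\Big\}\Big).
\]
Take a $1/4$-net $\mathcal{N}$ of $\mathbb{S}^{n-1}$ with $|\mathcal{N}|\le 9^n$. For a symmetric matrix $M$, the usual net argument gives $\|M\|_{op}\le 2\max_{x\in\mathcal{N}}|x^TMx|$. So a union bound reduces everything to choosing $t$ so that $c\min\{t^2/(pA^4),t/A^2\}\ge n\log 9+2n$.

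The choice is $t=C A^2\max\{n,\sqrt{np}\}=CA^2n\max\{1,\sqrt{p/n}\}$. Then $t/A^2\ge Cn$, and $t^2/(pA^4)\ge C^2 n^2 p/(np)\cdot$ (using $t\ge CA^2\sqrt{np}$) $=C^2 n$. Hence both terms in the minimum are at least $Cn$. For $C$ large, the union bound over $9^n$ points leaves probability at most $2\exp(-2n)$. This matches the statement after absorbing the factor $2$ from the net into $C$.

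The step needing care is the sub-Gaussian assumption on the rows. I will interpret it as the vector sub-Gaussian norm $\sup_{\|x\|_2=1}\|\langle w_k,x\rangle\|_{\psi_2}\le A$. This is exactly what the applications provide: $U$ has independent entries of $\psi_2$-norm $O(\lambda)$, and the rows of $W$ and $E$ are independent with independent zero-mean entries, so Lemma \ref{ver261} yields the vector bound. Zero mean of $\langle w_k,x\rangle$ is not needed, since we center $\langle w_k,x\rangle^2$ directly. Beyond this, the argument is routine.
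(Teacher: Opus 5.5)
Your proposal is correct and follows essentially the same route as the paper: an $\epsilon$-net over $\mathbb{S}^{n-1}$ of size at most $9^n$, Bernstein's inequality applied to $\|Uv\|_2^2-\mathbb{E}\|Uv\|_2^2=\sum_k\big(\langle w_k,v\rangle^2-\mathbb{E}\langle w_k,v\rangle^2\big)$ for each fixed net point (where, like the paper, you read the row assumption as a vector sub-Gaussian norm bound), a union bound, and the choice $t\asymp A^2n\max\{1,\sqrt{p/n}\}$. The only differences are cosmetic: you use a $1/4$-net with factor $2$ where the paper uses a $1/3$-net with factor $3$, and you keep the absolute value on the quadratic form in the net reduction, which the paper's display omits.
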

\begin{proof}
    To control $\|U^TU-\mathbb{E}(U^TU)\|_{op} = \sup_{v\in \mathbb{S}^{n-1}}\big[\|Uv\|_2^2- \mathbb{E}\|Uv\|_2^2\big]$, we start with a discretization (see, e.g.,  \cite[Exercise 4.4.3]{vershynin2018high}): let $N_{1/3}$ be a $1/3$-net of $\mathbb{S}^{n-1}$ of cardinality bounded by $9^n$, we have
    \begin{align}\label{discrete1}
        \|U^TU-\mathbb{E}(U^TU)\|_{op} \le 3\sup_{v\in N_{1/3}}\,\big[\|Uv\|_2^2- \mathbb{E}\|Uv\|_2^2\big].
    \end{align}
    For any $v\in N_{1/3}$, we note that $Uv=(w_1,...,w_p)^T\in \mathbb{R}^p$ has independent entries with sub-Gaussian norms bounded by $A$, and therefore $\|w_i^2\|_{\psi_1}\le \|w_i\|_{\psi_2}^2=O(A^2)$ by Lemma \ref{psi1psi2}. We now invoke Bernstein's inequality (see Lemma \ref{bern}) to arrive at
    \begin{align*}
        \mathbb{P}\Big(\Big|\|Uv\|_2^2-\mathbb{E}\|Uv\|_2^2\Big|\ge t\Big)&= \mathbb{P}\Big(\Big|\sum_{i=1}^p [w_i^2-\mathbb{E}w_i^2]\Big|\ge t\Big)\\
        & \le 2\exp \Big(-c \min\Big\{\frac{t^2}{pA^4},\frac{t}{A^2}\Big\}\Big),\qquad\forall t\ge 0.
    \end{align*}
    Taking a union bound, along with Equation (\ref{discrete1}), yields
    \begin{align*}
        \mathbb{P}\Big(\|U^TU-\mathbb{E}(U^TU)\|_{op}\ge t\Big)\le 2\exp\Big(n \log 9 - c'\min\Big\{\frac{t^2}{pA^4},\frac{t}{A^2}\Big\}\Big),\quad \forall t\ge 0.
    \end{align*}
    Setting $t = C'A^2n\max\{1,\sqrt{\frac{p}{n}}\}$ with large enough $C'$, we obtain that 
    $$\|U^TU-\mathbb{E}(U^TU)\|_{op}\le C'A^2n\max\{1,\sqrt{\frac{p}{n}}\}$$
    with probability at least $1-2\exp(-2n)$, as claimed.  
\end{proof} 
\begin{lem} \label{lem:boundsumvnorm}
    In the proof of Theorem \ref{maintheorem} in Appendix \ref{app:proofthm1}, we have 
    \begin{align*}
        \mathbb{P}\bigg(\Big\|\frac{1}{n-1}\sum_{j=2}^n\delta_j\Big\|_2^2< \frac{p\lambda^2}{n}\Big(1+\frac{C}{n}\Big) + \frac{C'\|\theta\|_2^2}{\sqrt{n}}\bigg) \ge 1-2\exp(-10r_{n,\lambda}^2)
    \end{align*}
\end{lem}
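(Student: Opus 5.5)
The plan is to write $S:=\frac{1}{n-1}\sum_{j=2}^n\delta_j$, where the $\delta_j$ are i.i.d. copies of $Q_{2\lambda}(\theta+\varepsilon+\tau)-\theta$. By Lemma \ref{lem:quannoi} and Assumption \ref{subgaussian}, each $\delta_j$ has independent, zero-mean coordinates with $O(\lambda)$ sub-Gaussian norms. Then
\[
\|S\|_2^2=\frac{1}{(n-1)^2}\sum_{j=2}^n\|\delta_j\|_2^2+\frac{1}{(n-1)^2}\sum_{j\ne k}\langle\delta_j,\delta_k\rangle .
\]
Its expectation is $\frac{1}{n-1}\mathbb{E}\|\delta_2\|_2^2$, since the cross terms vanish in expectation. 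I will (a) compute $\mathbb{E}\|\delta_2\|_2^2$ with a sharp leading constant, and (b) show that the deviation of $\|S\|_2^2$ from its mean is at most $\frac{C'\|\theta\|_2^2}{\sqrt n}$ plus a term of order $\frac{p\lambda^2}{n^2}$, with failure probability at most $2\exp(-10r_{n,\lambda}^2)$.

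For (a), I work coordinatewise. Write $a=\theta_k+\varepsilon_{k}$ and set $G:=\{|a|\le\lambda\}$. On $G$ we have $|a+\tau|\le2\lambda$, so by (\ref{equationQ2}) the quantity $Q_{2\lambda}(a+\tau)$ equals $\lambda\,\mathrm{sign}(a+\tau)$, whose square is $\lambda^2$. Hence
\[
\mathbb{E}[\delta_{2k}^2]=\mathbb{E}\big[Q_{2\lambda}(a+\tau)^2\big]-\theta_k^2\le\lambda^2-\theta_k^2+\mathbb{E}\big[Q_{2\lambda}(a+\tau)^2\mathbf 1(G^c)\big].
\]
The last term is controlled by $\lambda^2+|a|^2$ times the indicator of $\{|\varepsilon_k|>\lambda-\|\theta\|_\infty\}$. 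By (\ref{lambdacon1}) and Lemma \ref{lem:sgtail}, this tail is $O(\lambda^2(np)^{-1-\nu})$. Summing over $k$ gives $\mathbb{E}\|\delta_2\|_2^2\le p\lambda^2-\|\theta\|_2^2+O(\lambda^2 n^{-1})$. Dividing by $n-1$ then gives $\mathbb{E}\|S\|_2^2\le\frac{p\lambda^2}{n}(1+\frac{C}{n})$.

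For (b), I handle the diagonal sum $\frac{1}{(n-1)^2}\sum_j(\|\delta_j\|_2^2-\mathbb{E}\|\delta_j\|_2^2)$ with Bernstein (Lemma \ref{bern}). It is a sum of $(n-1)p$ independent centered sub-exponential variables, each with $\psi_1$-norm $O(\lambda^2)$ by Lemmas \ref{psi1psi2} and \ref{centering}. The resulting fluctuation is $O\big(\frac{\lambda^2}{n^2}(\sqrt{np}\,r+r^2)\big)$ at level $\exp(-10r^2)$, where $r=r_{n,\lambda}$.

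I treat the off-diagonal part as a decoupled quadratic form in the concatenated vector $(\delta_2,\dots,\delta_n)\in\mathbb{R}^{(n-1)p}$, whose coordinates are independent. The matrix is $\frac{1}{(n-1)^2}(J-I)\otimes I_p$. Its Frobenius norm is about $\sqrt p/n$ and its operator norm about $1/n$. Hanson–Wright, or Bernstein after conditioning on a half-split, then gives a deviation of $O\big(\lambda^2(\frac{\sqrt p\,r}{n}+\frac{r^2}{n})\big)$ with probability at least $1-2\exp(-10r^2)$.

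The main obstacle is converting these fluctuation terms into the form $\frac{C p\lambda^2}{n^2}+\frac{C'\|\theta\|_2^2}{\sqrt n}$. I will use two facts: $r^2\le\|\theta\|_2^2/\lambda^2$, and $r\le\frac{\|\theta\|_2^2}{\lambda^2}\sqrt{n/p}$ from the definition of $r_{n,\lambda}$. Together they give $\lambda^2\frac{\sqrt p\,r}{n}\le\frac{\|\theta\|_2^2}{\sqrt n}$ and $\lambda^2\frac{r^2}{n}\le\frac{\|\theta\|_2^2}{n}$. The diagonal term is smaller than these by a factor of $1/n$. This absorbs every fluctuation into $\frac{C'\|\theta\|_2^2}{\sqrt n}$. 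If Hanson–Wright for sub-Gaussian (non-bounded) coordinates seems delicate, I will fall back on the fact that each $\delta_{jk}$ is bounded on the high-probability event used in Step 3, and apply the bounded version there.
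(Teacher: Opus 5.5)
Your proposal is correct. The expectation computation in (a) and the final absorption step, using $r_{n,\lambda}\le\|\theta\|_2/\lambda$ and $r_{n,\lambda}\le\frac{\|\theta\|_2^2}{\lambda^2}\sqrt{n/p}$, are essentially what the paper does. The difference is in the concentration step (b).

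The paper never expands over the sample index $j$. It observes that the coordinates $S_i=\frac{1}{n-1}\sum_{j=2}^n\delta_{ji}$, $i\in[p]$, of the average are themselves independent. By Lemma \ref{ver261}, each satisfies $\|S_i\|_{\psi_2}\lesssim\lambda/\sqrt n$. So $\|S\|_2^2=\sum_{i=1}^p S_i^2$ is a sum of only $p$ independent sub-exponential variables with $\psi_1$-norm $O(\lambda^2/n)$. One application of Bernstein (Lemma \ref{bern}) with $t=C\max\{\lambda^2\sqrt p\,r_{n,\lambda}/n,\ \lambda^2 r_{n,\lambda}^2/n\}$ then gives the fluctuation bound at level $2\exp(-10r_{n,\lambda}^2)$.

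Your diagonal/off-diagonal split, with Bernstein over $(n-1)p$ terms plus Hanson--Wright, recovers exactly the same rate. It uses a heavier tool to control cross terms that the coordinatewise view eliminates. The quadratic-form matrix $\frac{1}{(n-1)^2}\mathbf{1}\mathbf{1}^T\otimes I_p$ becomes block diagonal across coordinates after reordering. Each block equals $((n-1)S_i)^2$, which is why the reduction to $p$ independent squares works.

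If you keep Hanson--Wright, apply it to this whole matrix at once (Frobenius norm $\sqrt p/(n-1)$, operator norm $1/(n-1)$) rather than splitting. With the split, the union bound gives a failure probability of about $4\exp(-10r_{n,\lambda}^2)$. To match the stated $2\exp(-10r_{n,\lambda}^2)$ you would then need to enlarge constants or use one-sided tails.

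Your fallback to a bounded version on the Step 3 event is unnecessary. The $\delta_{ji}$ are genuinely independent, centered, and $O(\lambda)$-sub-Gaussian, so the standard sub-Gaussian inequalities apply directly.
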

\begin{proof}
    Recall that $\delta_j \stackrel{d}{=}Q_{2\lambda}(\theta+\varepsilon_j+\tau_j)-\theta$ for $j\in [n]$, and notice that the entries of $\delta_2,\cdots,\delta_n$ are independent, and  entries of $\frac{1}{n-1}\sum_{j=2}^n\delta_j$ have $O(\frac{\lambda}{\sqrt{n}})$ sub-Gaussian norms:  
    \begin{align*}
        \max_{i\in[p]}\Big\|\frac{1}{n-1}\sum_{j=2}^n \delta_{ji}\Big\|_{\psi_2} \lesssim \frac{\lambda}{\sqrt{n}}. 
    \end{align*}
    Therefore, by Lemma \ref{psi1psi2} we have  
    \begin{align*}
        \max_{i\in [p]}\bigg\|\,\Big(\frac{1}{n-1}\sum_{j=2}^n\delta_{ji}\Big)^2\,\bigg\|_{\psi_1} \lesssim \frac{\lambda^2}{n},
    \end{align*}
    and thus Bernstein's inequality (Lemma \ref{bern}) gives, for any $t\ge 0$, that
    \begin{align}\label{bernsteinss}
        \mathbb{P}\bigg(\bigg\|\frac{1}{n-1}\sum_{j=2}^n\delta_j\bigg\|_2^2  \ge \mathbb{E}\bigg\|\frac{1}{n-1}\sum_{j=2}^n\delta_j\bigg\|_2^2+ t\bigg) \le 2\exp\left(-c \min\left\{\frac{n^2t^2}{p\lambda^4},\frac{nt}{\lambda^2}\right\}\right).
    \end{align}
    We set $t = C\max\{\frac{\lambda^2\sqrt{p}\cdot r_{n,\lambda}}{n},\frac{\lambda^2r_{n,\lambda}^2}{n}\}$ with large enough $C$ to establish that 
    \begin{align} 
        &\mathbb{P}\bigg(\bigg\|\frac{1}{n-1}\sum_{j=2}^n\delta_j\bigg\|_2^2  \ge \mathbb{E}\bigg\|\frac{1}{n-1}\sum_{j=2}^n\delta_j\bigg\|_2^2+ C\max\Big\{\frac{\lambda^2\sqrt{p}\cdot r_{n,\lambda}}{n},\frac{\lambda^2r_{n,\lambda}^2}{n}\Big\}\bigg)\nn\\ &\le 2\exp(-10r_{n,\lambda}^2). \label{sumnormbb}
    \end{align}
    It remains to compute the expectation. In light of the i.i.d. zero-mean $\delta_j$'s,  
    \begin{align}\label{expnorm}
        \mathbb{E}\bigg\|\frac{1}{n-1}\sum_{j=2}^n\delta_j\bigg\|_2^2 = \sum_{i=1}^p \mathbb{E}\bigg|\frac{1}{n-1}\sum_{j=2}^n\delta_{ji}\bigg|^2 = \frac{1}{n-1}\sum_{i=1}^p \mathbb{E}[\delta_{ji}^2]. 
    \end{align}
    Furthermore, we have 
    \begin{align}\nn
        \mathbb{E}[\delta_{ji}^2] &= \mathbb{E}\Big|Q_{2\lambda}(\theta_i+\varepsilon_{ji}+\tau_{ji})-\theta _i\Big|^2 = \mathbb{E}|Q_{2\lambda}(\theta_i+\varepsilon_{ji}+\tau_{ji})|^2 -\theta_i^2\\\explain \textrm{by $\mathbb{E}[Q_{2\lambda}(\theta_i+\varepsilon_{ji}+\tau_{ji})]=\theta_i$ (see Lemma \ref{lem:quannoi})} \\\label{hihihi}
        & \le\lambda^2 + \underbrace{\mathbb{E}\Big(|Q_{2\lambda}(\theta_i+\varepsilon_{ji}+\tau_{ji})|^2 \mathbf{1}(|\varepsilon_{ji}|\ge \sigma\sqrt{2(1+\nu)\log(np)})\Big)}_{:=T_1}-\theta_j^2.\\
        \explain \textrm{by $\mathbb{E}\Big(|Q_{2\lambda}(\theta_i+\varepsilon_{ji}+\tau_{ji})|^2 \mathbf{1}(|\varepsilon_{ji}|<\sigma\sqrt{2(1+\nu)\log(np)})\Big)\le \lambda^2$}\\
        &\quad ~\textrm{which holds because under (\ref{lambdacon1}), $|\theta_i+\varepsilon_{ji}+\tau_{ji}|\le 2\lambda$, and $Q_{2\lambda}(a)=\pm\lambda$ when $|a|\le 2\lambda$}\nn
    \end{align}
    Moreover,    
    \begin{align}\nn
        T_1&\le \mathbb{E}\Big((3\lambda+ |\varepsilon_{ji}|)^2 \mathbf{1}(|\varepsilon_{ji}|\ge \sigma\sqrt{2(1+\nu)\log(np)})\Big) \\\explain \textrm{by $|Q_{2\lambda}(\theta_i+\varepsilon_{ji}+\tau_{ji})|\le |\theta_i+\varepsilon_{ji}+2\lambda|\le 3\lambda+|\varepsilon_{ji}|$} 
        \\&\le 18\lambda^2 \mathbb{P}\big(|\varepsilon_{ji}|\ge \sigma\sqrt{2(1+\nu)\log(np)}\big) + 2\mathbb{E}\big[\varepsilon_{ji}^2\mathbf{1}(|\varepsilon_{ji}|\ge \sigma\sqrt{2(1+\nu)\log(np)})\big]\nn\\\explain\textrm{by $(A_1+A_2)^2\le 2A_1^2+2A_2^2$}
        \\& \le \frac{C'\lambda^2}{(np)^{1+\nu}} .\label{T1bound111}\\
        \explain\textrm{by   Lemma \ref{lem:sgtail} and $\mathbb{E}\big[\varepsilon_{ji}^2\mathbf{1}(|\varepsilon_{ji}|\ge \sigma\sqrt{2(1+\nu)\log(np)})\big]\lesssim\frac{\lambda^2}{(np)^{1+\nu}}$ (see the following)}
    \end{align}
    We now explain $\mathbb{E}\big[\varepsilon_{ji}^2\mathbf{1}(|\varepsilon_{ji}|\ge \sigma\sqrt{2(1+\nu)\log(np)})\big]\lesssim \frac{\lambda^2}{(np)^{1+\nu}}$: 
        \begin{align*}
        &\mathbb{E}\left[\varepsilon_{ji}^2\mathbf{1}(|\varepsilon_{ji}|\ge \sigma\sqrt{2(1+\nu)\log(np)})\right]\\
        &=\int_0^\infty \mathbb{P}\left(\varepsilon_{ji}^2\mathbf{1}(|\varepsilon_{ji}|\ge \sigma\sqrt{2(1+\nu)\log(np)})\ge t\right)dt\\\explain\textrm{by tail integral representation of expectation}\\
    &=\int^\infty_{2\sigma^2(1+\nu)\log(np)}\mathbb{P}\left(| \varepsilon_{ji}|^2\ge t\right)dt\\&\le \int^\infty_{2\sigma^2(1+\nu)\log(np)} 2\exp\Big(-\frac{t}{2\sigma^2}\Big)\,dt\\\explain\textrm{by Lemma \ref{lem:sgtail}}\\
        &  = \frac{4\sigma^2}{(np)^{1+\nu}}\le \frac{4\lambda^2}{(np)^{1+\nu}}. 
    \end{align*}
Substituting (\ref{T1bound111}) into (\ref{hihihi}) yields 
\[\mathbb{E}[\delta_{ji}^2]\le \big(1+\frac{C'}{np}\big)\lambda^2-\theta_j^2,\quad \forall i\in[p]\] and substituting this into (\ref{expnorm}) yields 
    \begin{align}\label{hithis2}
           \mathbb{E}\bigg\|\frac{1}{n-1}\sum_{j=2}^n\delta_j\bigg\|_2^2 \le \frac{1}{n-1}\left[p\lambda^2\Big(1+\frac{C'}{np}\Big)-\|\theta\|_2^2\right].
    \end{align}
In light of $r_{n,\lambda}=\frac{\|\theta\|_2^2/\lambda^2}{\sqrt{\|\theta\|_2^2/\lambda^2+p/n}}\asymp \min\{\frac{\|\theta\|_2}{\lambda},\frac{\|\theta\|_2^2}{\lambda^2}\sqrt{\frac{n}{p}}\}$ and therefore 
    \begin{gather}\label{hithis1}
        \frac{\lambda^2\sqrt{p}\cdot r_{n,\lambda}}{n} =O\Big(\frac{\|\theta\|_2^2}{\sqrt{n}}\Big)\qquad\text{and}\qquad \frac{\lambda^2 r_{n,\lambda}^2}{n} =O\Big(\frac{\|\theta\|_2^2}{n}\Big). 
    \end{gather}
    Substituting (\ref{hithis2}) and (\ref{hithis1}) into (\ref{sumnormbb}) completes the proof. 
\end{proof}
\begin{lem}
    \label{lem:Pstarstar}
   In the proof of Theorem \ref{maintheorem} in Appendix \ref{app:proofthm1}, we have
    \begin{align*}
        P^{**}\le  \mathbb{P}\bigg(\langle\tilde{\delta}_1,\hat{\xi}\rangle < - \frac{(1-\epsilon_{n,\lambda})\|\theta\|_2^2}{\sqrt{\|\theta\|_2^2+\frac{p\lambda^2}{n}}}\bigg) + \frac{1}{n^{1+\nu}p^\nu}
    \end{align*}
    for some $\epsilon_{n,\lambda}=O(r_{n,\lambda}^{-1}+(\frac{\log n}{n})^{1/4})$.
\end{lem}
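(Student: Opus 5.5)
The plan is to compare $\delta_1$ with $\tilde{\delta}_1$ coordinatewise on a high-probability event, and to show that the remaining difference is a deterministic shift of lower order than the threshold. Write $\delta_1=Q_{2\lambda}(\theta+\varepsilon_1+\tau_1)-\theta$. Define the event $G:=\{\|\varepsilon_1\|_\infty\le \sigma\sqrt{2(1+\nu)\log(np)}\}$. By Lemma \ref{lem:sgtail} and a union bound over the $p$ coordinates, $\mathbb{P}(G^c)\le 2p\,(np)^{-(1+\nu)}$, which is of the order $\frac{1}{n^{1+\nu}p^\nu}$; the constant can be absorbed by slightly enlarging the threshold. On $G$, condition (\ref{lambdacon1}) gives $|\theta_i+\varepsilon_{1i}+\tau_{1i}|\le 2\lambda$, so (\ref{equationQ2}) yields $Q_{2\lambda}(\theta+\varepsilon_1+\tau_1)=\lambda\sign(\theta+\varepsilon_1+\tau_1)$. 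Hence on $G$,
\[
\delta_1=\tilde{\delta}_1+\big(\mathbb{E}[\lambda\sign(\theta+\varepsilon_1+\tau_1)]-\theta\big)=:\tilde{\delta}_1+b ,
\]
where $b$ is a deterministic bias vector. Since $\hat{\xi}$ is independent of $\delta_1$, it remains to show that $|\langle b,\hat{\xi}\rangle|\le\|b\|_2$ is negligible compared with the threshold $\|\theta\|_2^2/\sqrt{\|\theta\|_2^2+p\lambda^2/n}$. Once that holds, $P^{**}\le \mathbb{P}(\langle\tilde{\delta}_1,\hat{\xi}\rangle<-(1-\epsilon)\cdots)+\mathbb{P}(G^c)$, and the small relative loss in the threshold is absorbed into $\epsilon_{n,\lambda}$.

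To bound $b$ coordinatewise, use Lemma \ref{lem:expect}: $\mathbb{E}_\tau[\lambda\sign(a+\tau)]=T_\lambda(a)$, so $b_i=\mathbb{E}[T_\lambda(\theta_i+\varepsilon_{1i})]-\theta_i$. Since $\mathbb{E}[\theta_i+\varepsilon_{1i}]=\theta_i$ and $T_\lambda(a)=a$ for $|a|\le\lambda$,
\[
|b_i|\le \mathbb{E}\big[|\theta_i+\varepsilon_{1i}|\,\mathbf{1}(|\theta_i+\varepsilon_{1i}|>\lambda)\big]\le \mathbb{E}\big[(\|\theta\|_\infty+|\varepsilon_{1i}|)\mathbf{1}(|\varepsilon_{1i}|>\sigma\sqrt{2(1+\nu)\log(np)})\big].
\]
This is a sub-Gaussian tail integral of the same type as the one computed in Lemma \ref{lem:boundsumvnorm}, and it gives $|b_i|\lesssim \lambda (np)^{-(1+\nu)}$. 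Therefore $\|b\|_2\lesssim \lambda\sqrt{p}\,(np)^{-(1+\nu)}$.

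The main obstacle is checking that this bias is much smaller than the threshold, $\|\theta\|_2^2/\sqrt{\|\theta\|_2^2+p\lambda^2/n}=\lambda r_{n,\lambda}$. Since $r_{n,\lambda}\gtrsim 1$, it suffices that $\sqrt{p}(np)^{-(1+\nu)}\ll 1$, which is immediate. So $\|b\|_2\le \lambda r_{n,\lambda}\cdot O(r_{n,\lambda}^{-1})$ with room to spare, and the threshold degrades only by a factor $1-O(r_{n,\lambda}^{-1})$, which is absorbed into $\epsilon_{n,\lambda}$. Two points need care. First, the event $G$ and the relation $\delta_1=\tilde{\delta}_1+b$ must be used inside the probability together with the independence of $\hat{\xi}$ from $\delta_1$. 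Second, the union-bound constant in $\mathbb{P}(G^c)$ should be matched to the stated $\frac{1}{n^{1+\nu}p^\nu}$; if needed, slightly strengthen the tail exponent using the slack in $\nu$.
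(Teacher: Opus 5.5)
Your proposal is correct and is essentially the paper's argument. On the event $\|\varepsilon_1\|_\infty\le\sigma\sqrt{2(1+\nu)\log(np)}$, which is also how the paper controls $\hat{B}_2$, you get $Q_{2\lambda}(\theta+\varepsilon_1+\tau_1)=\lambda\,\sign(\theta+\varepsilon_1+\tau_1)$, hence $\delta_1=\tilde{\delta}_1+b$ with a deterministic bias $b$. You bound its coordinates by a sub-Gaussian tail integral (you via $T_\lambda$, the paper via Lemma \ref{lem:quannoi}), and absorb $|\langle b,\hat{\xi}\rangle|\le\|b\|_2$ into $\epsilon_{n,\lambda}$; this step needs only $\|\hat{\xi}\|_2=1$, not independence.

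The only caveat is the constant. The two-sided union bound gives $2/(n^{1+\nu}p^\nu)$, a factor the paper also silently drops. Your suggested repairs (a larger threshold, slack in $\nu$) are not available, because (\ref{lambdacon1}) is assumed only with the given $\nu$. The factor is harmless downstream. If you want it removed, use the larger event $\{|\theta_i+\varepsilon_{1i}+\tau_{1i}|<2\lambda\ \forall i\}$, whose complement has probability at most $\sum_{i}\mathbb{E}\big(|\varepsilon_{1i}|-\sigma\sqrt{2(1+\nu)\log(np)}\big)_+/(2\lambda)\le\frac{1}{n^{1+\nu}p^\nu}$.
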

\begin{proof}
    Recall that $\delta_1=Q_{2\lambda}(\theta+\varepsilon_1+\tau_1)-\theta$ and $\tilde{\delta}_1 = \lambda\sign(\theta+\varepsilon_1+\tau_1)-\mathbb{E}[\lambda\sign(\theta+\varepsilon_1+\tau_1)]$. We define the  event 
    \begin{align}\label{evehatB2}
        \hat{B}_2:=\left\{\max_{i\in[p]}|\theta_i+\varepsilon_{1i}|\le \lambda\right\}.
    \end{align}
     Under $\lambda\ge \|\theta\|_\infty + \sigma\sqrt{2(1+\nu)\log(np)}$, Lemma \ref{lem:sgtail} gives 
    \begin{align}\label{hatB2prob}
        \mathbb{P}(\hat{B}_2)\ge \mathbb{P}\left(\max_{i\in[p]}|\varepsilon_{1i}|\le \sigma\sqrt{2(1+\nu)\log(np)}\right) \ge 1- \frac{p}{(np)^{1+\nu}}=1-\frac{1}{n^{1+\nu}p^\nu}.
    \end{align}
    Moreover, for any $i\in[p]$,  
    \begin{align*}
        &|\mathbb{E}[\lambda\sign(\theta_i+\varepsilon_{1i}+\tau_{1i})]-\theta_i| \\
        &= |\mathbb{E}[\lambda\sign(\theta_i+\varepsilon_{1i}+\tau_{1i})-(\theta_i+\varepsilon_{1i})]|\\
        &= |\mathbb{E}[\lambda\sign(\theta_i+\varepsilon_{1i}+\tau_{1i})-(\theta_i+\varepsilon_{1i})]\mathbf{1}(|\theta_i+\varepsilon_{1i}|\ge  \lambda)|\\ \explain\textrm{by $\mathbb{E}_{\tau_{1i}}[\lambda\sign(\theta_i+\varepsilon_{1i}+\tau_{1i})\mathbf{1}(|\theta_i+\varepsilon_{1i}|<  \lambda)$} \\&\quad\quad~\textrm{$=\mathbb{E}_{\tau_{1i}}[Q_{2\lambda}(\theta_i+\varepsilon_{1i}+\tau_{1i})\mathbf{1}(|\theta_i+\varepsilon_{1i}|<  \lambda)]$}\\&\qquad~\textrm{$=(\theta_i+\varepsilon_{1i})\mathbf{1}(|\theta_i+\varepsilon_{1i}|<\lambda)$; see Lemma \ref{lem:quannoi}}\\
        &\le \mathbb{E}\Big[(\lambda+|\theta_i|+|\varepsilon_{1i}|)\mathbf{1}\Big(|\varepsilon_{1i}|\ge \sigma\sqrt{2\log(np)}\Big)\Big]\\
        \explain\textrm{by triangle inequality and (\ref{lambdacon1})}\\
        &\le \frac{2\lambda}{np} + \int_{\sigma\sqrt{2\log(np)}}^\infty \mathbb{P}(|\varepsilon_{1i}|\ge t)\,dt\\
        &\le \frac{2\lambda}{np}+\int_{\sigma\sqrt{2\log(np)}}^\infty 2\exp\Big(-\frac{t^2}{2\sigma^2}\Big)\,dt\le \frac{3\lambda}{np}.\\
        \explain\textrm{by Lemma \ref{lem:sgtail}}
    \end{align*}
    Combining with $\|\theta\|_2\ge \lambda$ leads to  
    \begin{align}\label{highernormbound}
        \|\mathbb{E}[\lambda \sign(\theta+\varepsilon_1+\tau_1)]-\theta\|_2\le \frac{3\lambda}{n\sqrt{p}}\le \frac{6}{n} \frac{\|\theta\|_2^2}{\sqrt{\|\theta\|_2^2+\frac{p\lambda^2}{n}}}
    \end{align}
    In view of Equation (\ref{Pstarstardefine}), for some positive $\epsilon_{n,\lambda},\epsilon_{n,\lambda}'=O(r_{n,\lambda}^{-1}+(\frac{\log n}{n})^{1/4})$, we have that 
    \begin{align*}
       & P^{**} = \mathbb{P}\left(\langle Q_{2\lambda}(\theta+\varepsilon_1+\tau_1)-\theta,\hat{\xi}\rangle <-(1-\epsilon_{n,\lambda})\frac{\|\theta\|_2^2}{\sqrt{\|\theta\|_2^2+\frac{p\lambda^2}{n}}}\right)\\
        &\le \mathbb{P}\left(\langle Q_{2\lambda}(\theta+\varepsilon_1+\tau_1)-\theta,\hat{\xi}\rangle <-(1-\epsilon_{n,\lambda})\frac{\|\theta\|_2^2}{\sqrt{\|\theta\|_2^2+\frac{p\lambda^2}{n}}},~\hat{B}_2\right) + \frac{1}{n^{1+\nu}p^\nu}\\\explain\textrm{by Equation (\ref{hatB2prob})}\\
        &\le \mathbb{P}\bigg(\langle \lambda\sign(\theta+\varepsilon_1+\tau_1)-\mathbb{E}[\lambda\sign(\theta+\varepsilon_1+\tau_1)],\hat{\xi}\rangle \\&\qquad\quad<-(1-\epsilon_{n,\lambda})\frac{\|\theta\|_2^2}{\sqrt{\|\theta\|_2^2+\frac{p\lambda^2}{n}}}+\|\mathbb{E}[\lambda\sign(\theta+\varepsilon_1+\tau_1)]-\theta\|_2\bigg)+ \frac{1}{n^{1+\nu}p^\nu}\\\explain\textrm{by \textrm{$Q_{2\lambda}(\theta+\varepsilon_1+\tau_1)=\lambda\sign(\theta+\varepsilon_1+\tau_1)$ holds on the event $\hat{B}_2$ in (\ref{evehatB2})}}\\
        &\le \mathbb{P}\left(\langle\delta_1,\hat{\xi}\rangle<-(1-\epsilon'_{n,\lambda})\frac{\|\theta\|_2^2}{\sqrt{\|\theta\|_2^2+\frac{p\lambda^2}{n}}}\right) + \frac{1}{n^{1+\nu}p^\nu}\,.\\\explain\textrm{by Equation (\ref{highernormbound})}
    \end{align*}
    This completes the proof. 
\end{proof}
\begin{lem}
    \label{Rthetamu}
    Let $\theta\sim {\rm Unif}(\Delta S^{p-1})$, then we have
    \begin{align}
        \mathbb{P}\left(\mu(\theta)\ge \sqrt{\frac{3\log p}{p}}\right)\le 1-2p^{-1/4}-2\exp(-cp).
    \end{align}
\end{lem}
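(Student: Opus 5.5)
We will prove the intended form of the statement: $\mu(\theta)<\sqrt{3\log p/p}$ with probability at least $1-2p^{-1/4}-2\exp(-cp)$, i.e. $\mathbb{P}(\mu(\theta)\ge\sqrt{3\log p/p})\le 2p^{-1/4}+2\exp(-cp)$. As displayed, the right-hand side reads $1-2p^{-1/4}-2\exp(-cp)$, which looks like a typo. Our bound implies the displayed one as soon as $4p^{-1/4}+4\exp(-cp)\le 1$, which holds for $p\gtrsim 1$, the regime of Theorem \ref{thm:spifree}.

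The approach is the Gaussian representation of the uniform distribution on the sphere. Write $\theta\stackrel{d}{=}\Delta\, g/\|g\|_2$ with $g\sim N(0,I_p)$. Since $\mu$ is scale invariant,
\[
\mu(\theta)\stackrel{d}{=}\frac{\|g\|_\infty}{\|g\|_2}.
\]
We then control the numerator from above and the denominator from below separately. The constants are chosen so that they combine exactly into $3$.

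\textbf{Numerator.} A standard Gaussian tail bound and a union bound over $p$ coordinates give
\[
\mathbb{P}(\|g\|_\infty\ge t)\le 2p\exp(-t^2/2).
\]
This is also Lemma \ref{lem:sgtail} with $\sigma=1$ applied to each coordinate. Taking $t^2=\frac{5}{2}\log p$ gives $\mathbb{P}\big(\|g\|_\infty\ge\sqrt{2.5\log p}\big)\le 2p^{1-5/4}=2p^{-1/4}$.

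\textbf{Denominator.} Write $\|g\|_2^2-p=\sum_{i=1}^p(g_i^2-1)$, a sum of independent, zero-mean variables. By Lemmas \ref{psi1psi2} and \ref{centering} each summand is sub-exponential with $O(1)$ $\psi_1$-norm. Bernstein's inequality (Lemma \ref{bern}) with $t=p/6$ then gives
\[
\mathbb{P}\big(\|g\|_2^2\le \tfrac{5}{6}p\big)\le 2\exp(-cp)
\]
for an absolute constant $c$.

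\textbf{Combining.} On the intersection of the two good events, whose probability is at least $1-2p^{-1/4}-2\exp(-cp)$, we have
\[
\mu(\theta)<\sqrt{\frac{2.5\log p}{(5/6)\,p}}=\sqrt{\frac{3\log p}{p}}.
\]
This is the claim.

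There is no real obstacle here. The only care needed is matching constants: the split $2.5$ versus $5/6$ is chosen precisely to produce the factor $3$ and the $p^{-1/4}$ term. We will also state explicitly the corrected direction of the inequality, and note that it implies the displayed version once $p$ exceeds an absolute constant.
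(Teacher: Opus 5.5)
Your proof is correct and follows the paper's argument: the Gaussian representation $\theta\stackrel{d}{=}\Delta g/\|g\|_2$, then a union bound giving $\|g\|_\infty<\sqrt{2.5\log p}$ with probability at least $1-2p^{-1/4}$, then a lower bound on $\|g\|_2$ holding with probability at least $1-2\exp(-cp)$. You are also right that the displayed bound should read $2p^{-1/4}+2\exp(-cp)$, and your threshold $\|g\|_2^2>5p/6$ is the right one; the paper's written $\|g\|_2\ge 1.01\sqrt{p}$ is a slip, since what is needed is $\|g\|_2\ge c_0\sqrt{p}$ for some $c_0$ with $\sqrt{5/6}\le c_0<1$.
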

\begin{proof}
    Since $\mu(\cdot)$ is invariant under rescaling, we can treat $\theta\sim {\rm Unif}(S^{p-1})$ only. It is well known that $\theta \stackrel{{\rm d}}{=}\frac{g}{\|g\|_2}$ for $g\sim N(0,I_p)$. By standard Gaussian tail bound, along with a union bound, it is not hard to show $\|g\|_\infty\le \sqrt{2.5\log p}$ with probability at least $1-2p^{-1/4}$. Furthermore, it is standard to show that $\|g\|_2\ge 1.01\sqrt{p}$ with probability at least $1-2\exp(-cp)$ for some absolute constant $c$ (e.g., \cite[Theorem 3.1.1]{vershynin2018high}). Combining both bounds completes the proof. 
\end{proof}
 \begin{lem}[Generalization of Lemma \ref{utubound} for Appendix \ref{app:proofthm2}] \label{lem:genutu}Given $R\in \mathbb{O}(p)$. Suppose that $U_1\in R^{p\times n}$ has independent zero-mean entries of   sub-Gaussian norms bounded by $A_1$; when conditioning on $U_1$, $U_2\in R^{p\times n}$ has independent zero-mean rows of sub-Gaussian norms bounded by $A_2$. Then $U=RU_1+U_2$ satisfies 
 \begin{align*}
     \mathbb{P}\left(\|U^TU-\mathbb{E}(U^TU)\|_{op}\le  C(A^2_1+A_2^2)\cdot n\max\{1,\sqrt{\frac{p}{n}}\}\right)\le 8\exp(-2n). 
 \end{align*} \end{lem}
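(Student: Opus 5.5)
The plan is to expand the Gram matrix and control each block separately. Since $R^TR=I_p$, we have
\begin{align*}
U^TU-\mathbb{E}(U^TU)=\big[U_1^TU_1-\mathbb{E}U_1^TU_1\big]+\big[U_2^TU_2-\mathbb{E}U_2^TU_2\big]+\big[U_1^TR^TU_2+U_2^TRU_1-\mathbb{E}(\cdot)\big].
\end{align*}
We bound the three brackets separately in operator norm. Each bound will hold with probability at least $1-2\exp(-2n)$, up to a constant number of such failure events, which gives the $8\exp(-2n)$. The statement should of course be read with ``$\ge 1-8\exp(-2n)$''. Throughout, the target rate is $K:=n\max\{1,\sqrt{p/n}\}\asymp n+\sqrt{np}$.

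\emph{First bracket.} $U_1$ has independent rows with sub-Gaussian norms at most $A_1$, so Lemma \ref{utubound} applies directly and gives the bound $CA_1^2K$.

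\emph{Cross bracket.} Since the rows of $U_2$ are zero-mean given $U_1$, we have $\mathbb{E}[U_1^TR^TU_2]=\mathbb{E}\big[U_1^TR^T\,\mathbb{E}(U_2\mid U_1)\big]=0$, so this bracket needs no centering. Take a $1/3$-net $N_{1/3}$ of $\mathbb{S}^{n-1}$ as in the proof of Lemma \ref{utubound}; then it suffices to bound $\sup_{v,w\in N_{1/3}}\langle RU_1v,U_2w\rangle$. Condition on $U_1$. The vector $U_2w$ has independent entries with sub-Gaussian norms $O(A_2)$, so $\langle RU_1v,U_2w\rangle$ is sub-Gaussian with norm $O(A_2\|U_1\|_{op})$. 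A union bound over the $81^n$ pairs $(v,w)$ shows the supremum is $O(A_2\|U_1\|_{op}\sqrt n)$ with conditional probability at least $1-2\exp(-2n)$. On the event of Lemma \ref{utubound} for $U_1$, we have $\|U_1\|_{op}^2\le\|\mathbb{E}U_1^TU_1\|_{op}+CA_1^2K\lesssim A_1^2(n+p)$, hence $\|U_1\|_{op}\lesssim A_1(\sqrt n+\sqrt p)$. The cross bracket is therefore $\lesssim A_1A_2(n+\sqrt{np})\le(A_1^2+A_2^2)K$.

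\emph{Second bracket.} Split it as
\begin{align*}
\big[U_2^TU_2-\mathbb{E}(U_2^TU_2\mid U_1)\big]+\big[\mathbb{E}(U_2^TU_2\mid U_1)-\mathbb{E}U_2^TU_2\big].
\end{align*}
For the first piece, conditionally on $U_1$ the matrix $U_2$ has independent rows with sub-Gaussian norms at most $A_2$. Lemma \ref{utubound} applied conditionally gives $CA_2^2K$ with conditional probability at least $1-2\exp(-2n)$, and integrating over $U_1$ gives the same unconditional probability.

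The second piece is where I expect the main obstacle. If the conditional law of $U_2$ does not depend on $U_1$, it vanishes. Otherwise I would argue as follows. Within each row the entries of $U_2$ are independent and zero-mean given $U_1$ (this is the situation in the application, where the entries of $W$, resp. of $U_1\mid W$, are independent). Then the off-diagonal entries of $\mathbb{E}(U_2^TU_2\mid U_1)$ vanish, so this piece is diagonal. Its $j$-th diagonal entry is $\sum_{k=1}^p\big[\mathrm{Var}(U_{2,kj}\mid U_1)-\mathbb{E}\,\mathrm{Var}(U_{2,kj}\mid U_1)\big]$. This is a sum of $p$ independent centered terms, each bounded by $O(A_2^2)$, so Hoeffding's inequality (Lemma \ref{hoeffdinglem}) with a union bound over $j\in[n]$ bounds its maximum by $O(A_2^2\sqrt{p\,n})\le A_2^2K$, with probability at least $1-2\exp(-2n)$. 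If this independence structure is not available, I would instead state the lemma with a conditional law of $U_2$ that is free of $U_1$, which is exactly the case needed in Appendix \ref{app:proofthm2}.

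Summing the three brackets gives $\|U^TU-\mathbb{E}(U^TU)\|_{op}\le C(A_1^2+A_2^2)K$ with probability at least $1-8\exp(-2n)$.
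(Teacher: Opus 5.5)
Your outline matches the paper's proof on two of the three blocks. You use the same expansion of $U^TU$, Lemma~\ref{utubound} for $U_1^TU_1$, and a conditional $\epsilon$-net bound for the cross term. You get $\|U_1\|_{op}\lesssim A_1(\sqrt n+\sqrt p)$ from the Lemma~\ref{utubound} event rather than from a separate random-matrix bound, which is fine. On the $U_2^TU_2$ block, the paper applies Lemma~\ref{utubound} to $U_2$ unconditionally. That is not legitimate, because the rows of $U_2$ are only conditionally independent. You are right to split off $G:=\mathbb{E}(U_2^TU_2\mid U_1)-\mathbb{E}(U_2^TU_2)$.

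However, your bound on $G$ has a genuine gap. Even granting entrywise conditional independence, so that $G$ is diagonal, nothing makes $\sum_{k=1}^p\big[\mathrm{Var}(U_{2,kj}\mid U_1)-\mathbb{E}\,\mathrm{Var}(U_{2,kj}\mid U_1)\big]$ a sum of \emph{independent} terms. The conditional variances may be arbitrary functions of all of $U_1$. This is also not the situation in Appendix~\ref{app:proofthm2}. There the lemma's $U_1$ is $W$, and $\mathrm{Var}(U_{2,kj}\mid W)=\lambda^2-a_{kj}^2$ on $\{|a_{kj}|\le\lambda\}$, with $a_{kj}=\eta_j\theta_{R,k}+R_k^T\varepsilon_j$. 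For fixed $j$ these all depend on the single vector $\varepsilon_j$ through $R\varepsilon_j$, whose coordinates are dependent unless $\varepsilon_j$ is Gaussian (for Rademacher $\varepsilon_j$, $\|R\varepsilon_j\|_2^2=p$ exactly). Your fallback is unavailable too, since this conditional law visibly depends on $W$.

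In fact the step cannot be closed under the stated hypotheses, because the lemma is false as written. Let $U_1$ have i.i.d.\ $N(0,1)$ entries. Given $U_1$, let $U_2$ have i.i.d.\ $N(0,s^2)$ entries with $s=1$ or $s=2$ according to the sign of $(U_1)_{11}$. All hypotheses hold with $A_1,A_2\asymp 1$, yet $G=\pm\frac32\,p\,I_n$ deterministically. The other three pieces are $O(n+\sqrt{np})$ with high probability. Hence $\|U^TU-\mathbb{E}(U^TU)\|_{op}\ge\frac32 p-O(n+\sqrt{np})$, which exceeds $C(A_1^2+A_2^2)\,n\max\{1,\sqrt{p/n}\}$ once $p\gg n$.

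A repair therefore has to change what is proved. For the application there is a clean fix. The lemma is only used through $H(U^TU)$, in place of Equation~(\ref{utuboundeq}), and $H$ annihilates diagonal matrices. Since $\mathbb{E}(W^TW)$ and $\mathbb{E}(U_2^TU_2\mid W)$ are both diagonal, $H(U^TU)=H\big(U^TU-\mathbb{E}(W^TW)-\mathbb{E}(U_2^TU_2\mid W)\big)$. That conditionally centred matrix is exactly what your first-bracket, cross-term and first-piece arguments already control, so $G$ never needs to be bounded.
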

 \begin{proof}
     By $U^TU = U_1^TU_1+U_2^TU_2+U_2^TRU_1+U_1^TR^TU_2$ and $\mathbb{E}(U_2^TRU_1)=0$ (conditioning on $U_1$ and using randomness of $U_2$), triangle inequality yields
     \begin{align*}
         \|U^TU-\mathbb{E}(U^TU)\|_{op} \le \|U_1^TU_1-\mathbb{E}(U_1^TU_1)\|_{op}+\|U_2^TU_2-\mathbb{E}(U_2^TU_2)\|_{op}+ 2\|U_2^TRU_1\|_{op}.
     \end{align*}
     By Lemma \ref{utubound}, with probability at least $1-4\exp(-2n)$, we have 
     \begin{align*}
         \|U_1^TU_1-\mathbb{E}(U_1^TU_1)\|_{op} \lesssim A_1^2n\max\{1,\sqrt{\frac{p}{n}}\}\,,\quad \|U_2^TU_2-\mathbb{E}(U_2^TU_2)\|_{op} \lesssim A_2^2n\max\{1,\sqrt{\frac{p}{n}}\}.
     \end{align*}
     It remains to bound $\|U_2^TRU_1\|_{op}$. By a standard bound on random matrix with independent, zero-mean sub-Gaussian entries (e.g., \cite[Theorem 4.4.5]{vershynin2018high}), we have that
     \begin{align*}
         \mathbb{P}\Big(\|U_1\|_{op}\le C'A_1(\sqrt{n}+\sqrt{p}) \Big) \ge 1-2\exp(-2n). 
     \end{align*}
     To bound $\|U_2^TRU_1\|_{op}$, we assume that $\|U_1\|_{op}\le C'A_1(\sqrt{n}+\sqrt{p})$ holds, and we utilize the randomness of $U_2$ by conditioning on $U_1$. Let $N_{1/4}$ be a $1/4$-net of $S^{n-1}$ of cardinality smaller than $9^n$, then   a simple discretization of the operator norm (e.g.,   \cite[Exercise 4.4.3]{vershynin2018high}) yields
     \begin{align}\label{discretebound}
         \|U_2^TRU_1\|_{op}\le 2\sup_{v_1,v_2\in N_{1/4}} (U_2v_2)^TRU_1v_1.
     \end{align}
     For any pair $(v_1,v_2)\in N_{1/4}\times N_{1/4}$, we have $\|RU_1v_1\|_2\le \|U_1\|_{op}\le C 'A_1(\sqrt{n}+\sqrt{p})$, and by the assumption on $U_2$ (conditioning on $U_1$) we also have that the entries of $U_2v_2$ are independent, zero-mean, and of sub-Gaussian norms bounded by $A_2$. Therefore, $\|U_2v_2\|_{\psi_2} =O(A_2)$ (see Lemma \ref{ver261}). Taken collectively, we reach $\|(U_2v_2)^TRU_1v_1\|_{\psi_2}=O(A_1A_2(\sqrt{n}+\sqrt{p}))$. Thus, the standard sub-Gaussian tail bound, along with a union bound over $(v_1,v_2)\in N_{1/4}\times N_{1/4}$, establishes
     \begin{align*}
         \mathbb{P}\left(\sup_{v_1,v_2\in N_{1/4}}(U_2v_2)^TRU_1v_1\ge C'' A_1A_2(\sqrt{n}+\sqrt{p})t\right) \le 2\exp\Big(2n\log 9-ct^2\Big),\quad \forall t\ge 0.
     \end{align*}
     Setting $t\asymp \sqrt{n}$ and recalling (\ref{discretebound}), we obtain that $\|U_2^TRU_1\|_{op}\lesssim C''A_1A_2 n\max\{1,\sqrt{\frac{p}{n}}\}$ holds with probability at least $1-2\exp(-2n)$. Putting all the pieces together concludes the proof. 
\end{proof}
\begin{lem}[Adaptation of Lemma \ref{lem:boundsumvnorm} for Appendix \ref{app:proofthm2}] \label{lem:sumnormlem}
    In the proof of Theorem \ref{thm:spifree} in Appendix \ref{app:proofthm2} with some given $R\in \mathbb{O}(p)$, where $\delta_j=Q_{2\lambda}(\theta_R+R\varepsilon_j+\tau_j)-\theta_R$ for $j\in[n]$, we have
    \begin{align*}
        \mathbb{P}\bigg(\bigg\|\frac{1}{n-1}\sum_{j=2}^n\delta_j\bigg\|_2^2<\frac{p\lambda^2}{n}\Big(1+\frac{C}{n}\Big) + \frac{C'\|\theta\|_2^2}{\sqrt{n}}\bigg)\ge 1- 6\exp(-10r_{n,\lambda}^2).
    \end{align*}
\end{lem}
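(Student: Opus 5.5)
We follow the proof of Lemma \ref{lem:boundsumvnorm}: a Bernstein-type concentration of $\|\frac{1}{n-1}\sum_{j=2}^n\delta_j\|_2^2$ around its mean, plus a sharp computation of that mean. The new difficulty is that $\delta_j=Q_{2\lambda}(\theta_R+R\varepsilon_j+\tau_j)-\theta_R$ no longer has independent entries. As in Remark \ref{rem:proofthm2}, we therefore split $\delta_j=\delta_j'+R\varepsilon_j$, where $\delta_j'=Q_{2\lambda}(\theta_R+R\varepsilon_j+\tau_j)-(\theta_R+R\varepsilon_j)$. Conditionally on $\varepsilon_j$, the vector $\delta_j'$ has independent, zero-mean entries (Lemma \ref{lem:quannoi}) bounded by $2\lambda$.

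Write $S'=\frac{1}{n-1}\sum_{j\ge2}\delta_j'$ and $S''=\frac{1}{n-1}\sum_{j\ge 2}\varepsilon_j$, so that
\[
\Big\|\frac{1}{n-1}\sum_{j=2}^n\delta_j\Big\|_2^2=\|S'\|_2^2+\|S''\|_2^2+2\langle S',RS''\rangle .
\]
We handle the three terms as follows.

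\textbf{The term $\|S'\|_2^2$.} Condition on $W$. The entries of $S'$ are then independent with $\psi_2$-norm $O(\lambda/\sqrt n)$. Bernstein's inequality (Lemma \ref{bern}) with the same choice of $t$ as in \eqref{sumnormbb} gives deviation $O(\|\theta\|_2^2/\sqrt n)$ with probability at least $1-2\exp(-10r_{n,\lambda}^2)$.

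\textbf{The term $\|S''\|_2^2$.} This term does not involve $R$. It is handled by Lemma \ref{lem:sgvbound}-style Bernstein concentration around $\frac{1}{n-1}\mathbb{E}\|\varepsilon_2\|_2^2$, with the same deviation level and failure probability $2\exp(-10r_{n,\lambda}^2)$.

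\textbf{The cross term.} Conditionally on $W$, $\langle S',RS''\rangle$ is sub-Gaussian with norm $O(\frac{\lambda}{\sqrt n}\|S''\|_2)=O(\frac{\lambda}{\sqrt n}\sqrt{p\sigma^2/n})$. Its tail at level $O(\|\theta\|_2^2/\sqrt n)$ is again at most $2\exp(-10r_{n,\lambda}^2)$, using $r_{n,\lambda}\lesssim\min\{\|\theta\|_2/\lambda,\ \|\theta\|_2^2\lambda^{-2}\sqrt{n/p}\}$ as in \eqref{hithis1}. Summing the three failure probabilities gives the stated $6\exp(-10r_{n,\lambda}^2)$.

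\textbf{The expectation, which is the main obstacle.} The mean must be obtained with the sharp constant $p\lambda^2/n$; bounding the two pieces separately would double-count $\mathbb{E}\|\varepsilon\|^2$. We therefore do not split at this step. Since the $\delta_j$ are i.i.d. and zero-mean (by Lemma \ref{lem:quannoi}, conditioning on $\varepsilon_j$), we have
\[
\mathbb{E}\Big\|\frac{1}{n-1}\sum_{j=2}^n\delta_j\Big\|_2^2=\frac{1}{n-1}\mathbb{E}\|\delta_2\|_2^2=\frac{1}{n-1}\sum_i\Big(\mathbb{E}\,Q_{2\lambda}(\cdot)_i^2-(\theta_R)_i^2\Big).
\]
On the event $\|R\varepsilon_2\|_\infty\le\sigma\sqrt{2(1+\nu)\log(np)}$, the bound $\|\theta_R\|_\infty\le\sqrt{3\log p/p}\,\|\theta\|_2$ and \eqref{lambdahaar} give $Q_{2\lambda}=\pm\lambda$, so each squared entry equals $\lambda^2$. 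Each $(R\varepsilon_2)_i$ is $\sigma$-sub-Gaussian by Lemma \ref{lem:sgtail}, since the rows of $R$ are unit vectors. The complementary tail contributes at most $C'\lambda^2/(np)^{1+\nu}$, exactly as in \eqref{T1bound111}. This yields $\mathbb{E}\|\cdot\|_2^2\le\frac{p\lambda^2}{n}(1+C/n)$, and combining it with the three concentration bounds finishes the proof.
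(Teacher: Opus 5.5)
Your plan matches the paper's proof step for step. It uses the same split $\delta_j=\delta_j'+R\varepsilon_j$, the same three-term expansion of $S=\frac{1}{n-1}\sum_{j=2}^n\delta_j=S'+RS''$, Bernstein on the two squared norms, a conditional sub-Gaussian bound on the cross term, and the same $3\times 2e^{-10r_{n,\lambda}^2}$ budget. Computing $\mathbb{E}\|S\|_2^2=\frac{1}{n-1}\mathbb{E}\|\delta_2\|_2^2$ in one go is equivalent to the paper's identity $\mathbb{E}(\delta'_{ji})^2=\mathbb{E}Q_{2\lambda}(\theta_{R,i}+R_i^T\varepsilon_j+\tau_{ji})^2-\theta_{R,i}^2-\mathbb{E}(R_i^T\varepsilon_j)^2$, whose last term cancels $\mathbb{E}\|S''\|_2^2$.

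\textbf{The gap.} The final ``combining'' step does not go through as written. You condition on $W$ to get independent coordinates, so Bernstein controls $\|S'\|_2^2-\mathbb{E}[\|S'\|_2^2\mid W]$, not the deviation from $\mathbb{E}\|S'\|_2^2$. What you must bound with probability $1-O(e^{-10r_{n,\lambda}^2})$ is therefore the random quantity $\mathbb{E}[\|S\|_2^2\mid W]=\mathbb{E}[\|S'\|_2^2\mid W]+\|S''\|_2^2$. Your mean computation only controls its expectation.

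This is exactly where the double counting you flagged returns. The uniform bound $\mathrm{Var}_\tau\, Q_{2\lambda}(a+\tau)\le\lambda^2$ gives $\mathbb{E}[\|S'\|_2^2\mid W]\le\frac{p\lambda^2}{n-1}$. Adding $\|S''\|_2^2\approx\frac{1}{n-1}\mathbb{E}\|\varepsilon_2\|_2^2$ then leaves an excess of order $\frac{p\sigma^2}{n}$. That excess is not absorbed by $\frac{Cp\lambda^2}{n^2}+\frac{C'\|\theta\|_2^2}{\sqrt n}$ in high dimensions, e.g.\ $\sigma^2\asymp\lambda^2/\log(np)$, $n\ll p\le n^2$, $\|\theta\|_2^2\asymp\lambda^2\sqrt{p/n}$.

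\textbf{What a fix needs.} The cancellation has to happen pathwise. Set $a_{ji}=(\theta_R+R\varepsilon_j)_i$. Then $\mathbb{E}[(\delta'_{ji})^2\mid\varepsilon_j]=\mathbb{E}_\tau Q_{2\lambda}(a_{ji}+\tau)^2-a_{ji}^2$, which equals $\lambda^2-a_{ji}^2$ when $|a_{ji}|\le\lambda$. Summing, the term $-\frac{1}{(n-1)^2}\sum_j\|\varepsilon_j\|_2^2$ cancels the diagonal part of $\|S''\|_2^2$. Three pieces remain:
\begin{itemize}
\item the centered $U$-statistic $\frac{1}{(n-1)^2}\sum_{j\ne k}\langle\varepsilon_j,\varepsilon_k\rangle$;
\item the linear term $-\frac{2}{(n-1)^2}\sum_j\langle\theta,\varepsilon_j\rangle$;
\item a nonnegative remainder supported on $\{|a_{ji}|>\lambda\}$.
\end{itemize}
Each must be shown to be lower order at the $e^{-10r_{n,\lambda}^2}$ level. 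A union bound putting all $|a_{ji}|\le\lambda$ only fails with probability about $(np)^{-\nu}$, which is not enough.

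\textbf{The paper's proof has the same hole.} It asserts that $S'$ has independent coordinates unconditionally and applies Bernstein around $\mathbb{E}\|S'\|_2^2$. That holds when $R\varepsilon_j$ has independent coordinates, e.g.\ Gaussian noise. In that case $R\varepsilon_j\stackrel{d}{=}\varepsilon_j$ and Lemma~\ref{lem:boundsumvnorm} applies to $\delta_j$ directly, with no split needed. For general sub-Gaussian noise, the coordinates of $\delta_j'$ are only conditionally independent given $\varepsilon_j$. Your explicit conditioning exposes a step the paper also skips.

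\textbf{A weaker statement that does close.} If you only need what Theorem~\ref{thm:spifree} uses, the crude bound suffices. Since $\mathbb{E}\varepsilon_{ij}^2\le\sigma^2\le\frac{\lambda^2}{2(1+\nu)\log(np)}$ by \eqref{lambdahaar}, you get $\|S\|_2^2\le\big(1+\frac{1}{2(1+\nu)\log(np)}\big)\frac{p\lambda^2}{n-1}+O\big(\frac{\|\theta\|_2^2}{\sqrt n}\big)$ with probability $1-6e^{-10r_{n,\lambda}^2}$. The extra $O(1/\log(np))$ is absorbed by the $\log^{-1/4}(np)$ term already in $\epsilon_{n,\lambda}$. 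This is, however, weaker than the lemma as stated.

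\textbf{A smaller correction.} In your mean computation, use the per-coordinate event $\{|R_i^T\varepsilon_2|\le\sigma\sqrt{2(1+\nu)\log(np)}\}$, as the paper does. The sup-norm event has failure probability $p$ times larger and does not give the per-coordinate tail $C'\lambda^2/(np)^{1+\nu}$.
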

\begin{proof} 
    We let $\delta_j':=Q_{2\lambda}(\theta_R+R\varepsilon_j+\tau_j)-(\theta_R+R\varepsilon_j)$ and thus $\delta_j=\delta_j'+R\varepsilon_j$, yielding 
    \begin{align}\label{inisumnorm11}
        \bigg\|\frac{1}{n-1}\sum_{j=2}^n\delta_j\bigg\|_2^2 =\bigg\|\frac{1}{n-1}\sum_{j=2}^n\delta_j'\bigg\|_2^2 + \bigg\|\frac{1}{n-1}\sum_{j=2}^n\varepsilon_j\bigg\|_2^2 + 2\bigg\langle \frac{1}{n-1}\sum_{j=2}^n\delta_j', \frac{1}{n-1}\sum_{j=2}^nR\varepsilon_j\bigg\rangle.
    \end{align}
    Note that $\frac{1}{n-1}\sum_{j=2}^n\delta_j'$ and $\frac{1}{n-1}\sum_{j=2}^n\varepsilon_j$ have independent, zero-mean entries of $O(\frac{\lambda}{\sqrt{n}})$ sand $O(\frac{\sigma}{\sqrt{n}})$ sub-Gaussian norms, respectively. By Bernstein's inequality as in Equations (\ref{bernsteinss})--(\ref{sumnormbb}),   with probability at least $1-4\exp(-10r_{n,\lambda}^2)$ we have 
    \begin{gather}\label{sumbound11}
        \bigg\|\frac{1}{n-1}\sum_{j=2}^n\delta_j'\bigg\|_2^2\le \mathbb{E}\bigg\|\frac{1}{n-1}\sum_{j=2}^n\delta_j'\bigg\|_2^2 + C\max\bigg\{\frac{\lambda^2\sqrt{p}\cdot r_{n,\lambda}}{n},\frac{\lambda^2r_{n,\lambda}^2}{n}\bigg\},
        \\ \label{sumbound22}\bigg\|\frac{1}{n-1}\sum_{j=2}^n\varepsilon_j\bigg\|_2^2\le \mathbb{E}\bigg\|\frac{1}{n-1}\sum_{j=2}^n\varepsilon_j\bigg\|_2^2 + C\max\bigg\{\frac{\sigma^2\sqrt{p}\cdot r_{n,\lambda}}{n},\frac{\sigma^2r_{n,\lambda}^2}{n}\bigg\}.
    \end{gather}
    We assume  these hold. Combining with $\mathbb{E}\|\frac{1}{n-1}\sum_{j=2}^n\varepsilon_j\|_2^2=O(\frac{\sigma^2p}{n})$, we have
    \begin{align*}
         \bigg\|\frac{1}{n-1}\sum_{j=2}^n\varepsilon_j\bigg\|_2^2 \lesssim \frac{\sigma^2(p+r_{n,\lambda}^2)}{n}.
    \end{align*}  
    Recalling that $\frac{1}{n-1}\sum_{j=2}^n\delta_j'$ has independent zero-mean entries of $O(\frac{\lambda}{\sqrt{n}})$ sub-Gaussian norms, we have
    \begin{align*}
        \bigg\|\bigg\langle \frac{1}{n-1}\sum_{j=2}^n\delta_j', \frac{1}{n-1}\sum_{j=2}^nR\varepsilon_j\bigg\rangle\bigg\|_{\psi_2} \lesssim \frac{\lambda}{\sqrt{n}}\bigg\|\frac{1}{n-1}\sum_{j=2}^n\varepsilon_j\bigg\|_2 \lesssim \frac{\sigma\lambda (\sqrt{p}+r_{n,\lambda})}{n},
    \end{align*}
    which implies that
    \begin{align}\label{crossbound}
        \bigg|\bigg\langle \frac{1}{n-1}\sum_{j=2}^n\delta_j', \frac{1}{n-1}\sum_{j=2}^nR\varepsilon_j\bigg\rangle\bigg| \lesssim\frac{\sigma\lambda(\sqrt{p}r_{n,\lambda}+r_{n,\lambda}^2)}{n} 
    \end{align}
    holds with probability at least $1-2\exp(-10r_{n,\lambda}^2)$.

    We now bound \begin{align}
        \mathbb{E}\bigg\|\frac{1}{n-1}\sum_{j=2}^n\delta_j'\bigg\|_2^2=\sum_{i=1}^p\mathbb{E}\bigg|\frac{1}{n-1}\sum_{j=2}^n \delta'_{ji}\bigg|^2=\sum_{i=1}^p\sum_{j=2}^n\frac{\mathbb{E}[\delta'_{ji}]^2}{(n-1)^2}=\sum_{i=1}^p\frac{\mathbb{E}[\delta'_{ji}]^2}{n-1},\label{exptobound}
    \end{align}
    where the last two inequalities hold because $\delta'_j$s are independent and zero-mean.
    Letting $R_i^T$ be the $i$-th row of $R$, we expect over $\tau_{ji}$ and $\varepsilon_j$ in sequence to reach
    \begin{align}\nn
        &\mathbb{E}[\delta'_{ji}]^2 = \mathbb{E}\Big[Q_{2\lambda}(\theta_{R,i}+R_i^T\varepsilon_j+\tau_{ji})-(\theta_{R,i}+R_i^T\varepsilon_j)\Big]^2\\
        \nn& = \mathbb{E}_{\varepsilon_j}\mathbb{E}_{\tau_{ji}}\left[Q_{2\lambda}(\theta_{R,i}+R_i^T\varepsilon_j+\tau_{ji})^2+(\theta_{R,i}+R_i^T\varepsilon_j)^2-2(\theta_{R,i}+R_i^T\varepsilon_j)Q_{2\lambda}(\theta_{R,i}+R_i^T\varepsilon_j+\tau_{ji})\right] \\
        \nn& = \mathbb{E}_{\varepsilon_j}\left[\mathbb{E}_{\tau_{ji}}\big[Q_{2\lambda}(\theta_{R,i}+R_i^T\varepsilon_j+\tau_{ji})^2\big]-(\theta_{R,i}+R_i^T\varepsilon_j)^2\right]\\\explain\textrm{by Lemma \ref{lem:quannoi}}\\
        & = \mathbb{E}\big[Q_{2\lambda}(\theta_{R,i}+R_i^T\varepsilon_j+\tau_{ji})^2\big]-\theta_{R,i}^2 - \mathbb{E}_{\varepsilon_j}(R_i^T\varepsilon_j)^2  \label{singleentry}
    \end{align}
    Moreover, by $|Q_{2\lambda}(a)-a|\le \lambda$ and  , we have
    \begin{align*}
        &\mathbb{E}[Q_{2\lambda}(\theta_{R,i}+R_i^T\varepsilon_j+\tau_{ji})^2]  \le \lambda^2 + \mathbb{E}\big[Q_{2\lambda}(\theta_{R,i}+R_i^T\varepsilon_j+\tau_{ji})^2\mathbf{1}(|\theta_{R,i}+R_i^T\varepsilon_j|>\lambda)\big]\\\explain\textrm{by $|Q_{2\lambda}(a+\tau_{ji})\mathbf{1}(|a|\le \lambda)|=\lambda$ for any $a\in \mathbb{R}$}\\
        & \le \lambda^2  + \mathbb{E}\left[\Big(2\lambda+|\theta_{R,i}|+|R_i^T\varepsilon_j|\Big)^2\cdot\mathbf{1}\Big(|R_i^T\varepsilon_j|\ge \sigma\sqrt{2(1+\nu)\log(np)}\Big)\right]\\\explain\textrm{by Equation (\ref{lambdahaar})}\\
        & \le \lambda^2 + 18\lambda^2 \mathbb{P}\Big(|R_i^T\varepsilon_j|\ge \sigma\sqrt{2(1+\nu)\log(np)}\Big) \\&\qquad\qquad + 2 \mathbb{E}\left[|R_i^T\varepsilon_j|^2\mathbf{1}\Big(|R_i^T\varepsilon_j|\ge \sigma\sqrt{2(1+\nu)\log(np)}\Big)\right]\\\explain\textrm{by $|\theta_{R,i}|\le\lambda$ and $(A+B)^2\le 2A^2+2B^2$}\\
        & \le \left(1+\frac{18}{(np)^{1+\nu}}\right)\lambda^2 + 2 \mathbb{E}\left[|R_i^T\varepsilon_j|^2\mathbf{1}\Big(|R_i^T\varepsilon_j|\ge \sigma\sqrt{2(1+\nu)\log(np)}\Big)\right].\\\explain\textrm{by Lemma \ref{lem:sgtail}}
    \end{align*}
    By the tail bound of $|R_i^T\varepsilon_j|$ (cf. Lemma \ref{lem:sgtail}), we have
    \begin{align*}
        &\mathbb{E}\left[|R_i^T\varepsilon_j|^2\mathbf{1}\Big(|R_i^T\varepsilon_j|\ge \sigma\sqrt{2(1+\nu)\log(np)}\Big)\right]\\
        &=\int_0^\infty \mathbb{P}\left(|R_i^T\varepsilon_j|^2\mathbf{1}\Big(|R_i^T\varepsilon_j|\ge \sigma\sqrt{2(1+\nu)\log(np)}\Big)\ge t\right)dt\\\explain\textrm{by tail integral representation of expectation}\\
    &=\int^\infty_{2\sigma^2(1+\nu)\log(np)}\mathbb{P}\left(|R_i^T\varepsilon_j|^2\ge t\right)dt\\
        & \le 2\int^\infty_{2\sigma^2(1+\nu)\log(np)}\exp\Big(-\frac{t}{2\sigma^2}\Big)\,dt = \frac{4\sigma^2}{(np)^{1+\nu}}.\\
        \explain\textrm{by Lemma \ref{lem:sgtail}}
    \end{align*}
    Combining the preceding two displays establishes $
        \mathbb{E}\,[Q_{2\lambda}(\theta_{R,i}+R_i^T\varepsilon_j+\tau_{ji})^2] \le \left(1+\frac{20}{(np)^{1+\nu}}\right) \lambda^2$, which together with (\ref{singleentry}) and (\ref{exptobound}) yields
        \begin{align}
              \mathbb{E}\bigg\|\frac{1}{n-1}\sum_{j=2}^n\delta_j'\bigg\|_2^2 \le \frac{1}{n-1}\left[(1+\frac{1}{n})p\lambda^2-\|\theta\|_2^2-\mathbb{E}\|\varepsilon_j\|_2^2\right]. \label{expnormrr}
        \end{align}
        We now substitute Equations (\ref{expnormrr}), (\ref{crossbound}), (\ref{sumbound11})--(\ref{sumbound22}) into Equation (\ref{inisumnorm11}), along with noticing $\mathbb{E}\|\frac{1}{n-1}\sum_{j=2}^n \varepsilon_j\|_2^2 = \frac{1}{n-1}\mathbb{E}\|\varepsilon_j\|_2^2$ in Equation (\ref{sumbound22}) and $r_{n,\lambda}\lesssim \min\{\frac{\|\theta\|_2}{\lambda},\frac{\|\theta\|_2^2}{\lambda^2}\sqrt{\frac{n}{p}}\}$, to 
        establish
        \begin{align}\nn
               \bigg\|\frac{1}{n-1}\sum_{j=2}^n\delta_j\bigg\|_2^2 \le \frac{p\lambda^2}{n}\Big(1+\frac{C}{n}\Big) + \frac{C'\|\theta\|_2^2}{\sqrt{n}}
        \end{align} 
        with the promised probability. 
\end{proof}
\begin{lem}
    [Adaptation of Lemma \ref{lem:Pstarstar} for    Appendix \ref{app:proofthm2}] \label{lem:deltatotilehaar}In the proof of Theorem \ref{thm:spifree} in   Appendix \ref{app:proofthm2}, we have
    \begin{align*}
        P^{**}\le \mathbb{P}\left(\frac{\langle\tilde{\delta}_1',\hat{\xi}\rangle}{\lambda}\le -(1-\epsilon_{n,\lambda})r_{n,\lambda}\right) + \frac{1}{n^{1+\nu}p^\nu}
    \end{align*}
    for some $\epsilon_{n,\lambda}=O(r_{n,\lambda}^{-1}+(\frac{\log n}{n})^{1/4})+\frac{1}{\log^{1/4}(np)}$. 
\end{lem}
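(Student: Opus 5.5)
We follow the proof of Lemma \ref{lem:Pstarstar} almost verbatim, working conditionally on $R$ and on the event $\mu(\theta_R)\le\sqrt{3\log p/p}$ from Lemma \ref{Rthetamu}. Under \eqref{lambdahaar} this gives $\|\theta_R\|_\infty+\sigma\sqrt{2(1+\nu)\log(np)}\le\lambda$. Recall that
\[
\delta_1'=Q_{2\lambda}(\theta_R+R\varepsilon_1+\tau_1)-(\theta_R+R\varepsilon_1),\qquad
\tilde{\delta}_1'=\lambda\sign(\theta_R+R\varepsilon_1+\tau_1)-\mathbb{E}_{\tau_1}[\lambda\sign(\theta_R+R\varepsilon_1+\tau_1)],
\]
and that $\hat{\xi}\in\mathbb{S}^{p-1}$ is independent of $(\varepsilon_1,\tau_1)$.

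\textbf{Step 1 (good event).} Define $\hat{B}_2':=\{\max_{i\in[p]}|\theta_{R,i}+R_i^T\varepsilon_1|\le\lambda\}$. Each $R_i^T\varepsilon_1$ is $\sigma$-sub-Gaussian, since $\|R_i\|_2=1$ and Lemma \ref{lem:sgtail} applies. A union bound over $i\in[p]$, exactly as in \eqref{hatB2prob}, gives $\mathbb{P}((\hat{B}_2')^c)\le n^{-1-\nu}p^{-\nu}$.

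\textbf{Step 2 (sign identity on the good event).} On $\hat{B}_2'$ we have $|\theta_{R,i}+R_i^T\varepsilon_1+\tau_{1i}|\le 2\lambda$, so \eqref{equationQ2} gives $Q_{2\lambda}(\cdot)=\lambda\sign(\cdot)$ entrywise. Moreover, conditionally on $\varepsilon_1$, Lemma \ref{lem:expect} shows that $\mathbb{E}_{\tau_1}[\lambda\sign(\theta_{R,i}+R_i^T\varepsilon_1+\tau_{1i})]=T_\lambda(\theta_{R,i}+R_i^T\varepsilon_1)=\theta_{R,i}+R_i^T\varepsilon_1$. Hence on $\hat{B}_2'$ the two vectors agree \emph{exactly}:
\[
\delta_1'=\lambda\sign(\theta_R+R\varepsilon_1+\tau_1)-(\theta_R+R\varepsilon_1)=\tilde{\delta}_1'.
\]
This is cleaner than in Lemma \ref{lem:Pstarstar}: because centering is conditional on $\varepsilon_1$, no bias term like \eqref{highernormbound} appears.

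\textbf{Step 3 (conclude).} Let $\epsilon_{n,\lambda}$ denote the quantity in the definition of $P^{**}$ in \eqref{Psssdefine}, which already includes the $\log^{-1/4}(np)$ term. Then
\[
P^{**}\le \mathbb{P}\Big(\lambda^{-1}\langle\delta_1',\hat{\xi}\rangle\le-(1-\epsilon_{n,\lambda})r_{n,\lambda},\ \hat{B}_2'\Big)+\mathbb{P}\big((\hat{B}_2')^c\big).
\]
On $\hat{B}_2'$ we replace $\delta_1'$ by $\tilde{\delta}_1'$ and drop the intersection, which yields the claim.

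\textbf{Main obstacle.} The only delicate point is the measure-theoretic bookkeeping. One must ensure that $\hat{\xi}$, built from $\delta_2,\dots,\delta_n$, is independent of $(\varepsilon_1,\tau_1)$ given $R$. The conditional centering in $\tilde{\delta}_1'$ must also be taken with respect to $\tau_1$ only, so that the identity in Step 2 holds pointwise on $\hat{B}_2'$. We then average over $R$, absorbing the failure probability of Lemma \ref{Rthetamu} into the $3p^{-1/4}$ term already accounted for in Theorem \ref{thm:spifree}.
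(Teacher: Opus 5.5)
Your proposal is correct and matches the paper's proof. On $\hat B_2=\{\|\theta_R+R\varepsilon_1\|_\infty\le\lambda\}$ one has $\delta_1'=\tilde\delta_1'$ exactly, because the centering is conditional on $\varepsilon_1$. The complement has probability at most $n^{-1-\nu}p^{-\nu}$ by Lemma \ref{lem:sgtail} and a union bound, and splitting $P^{**}$ on this event gives the claim. One small correction: independence of $\hat\xi$ from $(\varepsilon_1,\tau_1)$ is not actually needed for this lemma, since the step is a pointwise event inclusion; it only matters for the subsequent Hoeffding step.
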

\begin{proof}
    Recall that $P^{**}$ is defined in (\ref{Psssdefine}) with $\delta_1':=Q_{2\lambda}(\theta_R+R\varepsilon_1+\tau_1)-(\theta_R+R\varepsilon_1)$, and as a surrogate we introduce $\tilde{\delta}_1'=\lambda\sign(\theta_R+R\varepsilon_1+\tau_1)-\mathbb{E}_{\tau_1}[\lambda\sign(\theta_R+R\varepsilon_1+\tau_1)]$. Notice that on the event $\hat{B}_2=\{\|\theta_R+R\varepsilon_1\|_\infty\le \lambda\}$, we have $\delta_1'=\tilde{\delta}_1'$. Also, by $\lambda\ge (\frac{3\log p}{p})^{1/2}+ \sigma\sqrt{2(1+\nu)\log(np)}\ge \|\theta_R\|_\infty + \sigma\sqrt{2(1+\nu)\log(np)}$, Lemma \ref{lem:sgtail} along with a union bound gives
    \begin{align*}
        \mathbb{P}(\hat{B}_2) \ge \mathbb{P}\Big(\|R\varepsilon_1\|_\infty \le \sigma\sqrt{2(1+\nu)\log(np)}\Big)\ge 1-\frac{1}{n^{1+\nu}p^\nu}.
    \end{align*}
    Therefore, for some $\epsilon_{n,\lambda}=O(r_{n,\lambda}^{-1}+(\frac{\log n}{n})^{1/4})+\frac{1}{\log^{1/4}(np)}$, we have 
    \begin{align}\nn
        &P^{**}= \mathbb{P}\left(\frac{\langle\tilde{\delta}_1',\hat{\xi}\rangle}{\lambda}\le-(1-\epsilon_{n,\lambda})r_{n,\lambda}\right) \le \mathbb{P}\left(\frac{\langle \delta_1',\hat{\xi}\rangle}{\lambda}\le-(1-\epsilon_{n,\lambda}r_{n,\lambda}),\hat{B}_2\right)+\mathbb{P}(\hat{B}_2^c),
    \end{align}
    which then yields the claim immediately. 
\end{proof}

\section{Supplemental for Section \ref{sec:experi}}\label{app:detailexp}

\subsection{Further Details (Figures \ref{fig:sub1}, \ref{fig:sub2})} \label{app:details}
The main aim of Figures \ref{fig:sub1}, \ref{fig:sub2} is to corroborate Corollary \ref{cor:exact}. Note that separation condition (\ref{sepaexact}),   when ignoring the factor of $1+\epsilon$, reads
\begin{align}
    \Delta^2 \ge \lambda^2 \bigg(1+\sqrt{1+\frac{2p}{n\log n}}\bigg)\log n.\label{separepr}
\end{align}
Also, the minimum of $\lambda$ in Equation (\ref{lambdacon1}) is approximately $\sigma\sqrt{2\log (np)}=\sqrt{2\log (np)}$, by ignoring the factor of $1+\nu$ and observing that $\|\theta\|_\infty$ is typically dominated by $\sqrt{\log(np)}$. We therefore  set $\lambda$ approximately as $\sqrt{2\log(np)}$. 

\subsubsection{Low-Dimensional Setting (Figure \ref{fig:sub1})}
Recall that we set $a=\log n$, $b=\Delta$, and fix \[p=5.\] In our experiment, given $a$ we set $n=\lceil \exp(a)\rceil$.

Since we test $a\in[3.4,6.9]$, the range of $n$ is roughly $[30,1000]$. Hence, under $p=5$, we make further simplifications:
\begin{itemize}[leftmargin=5ex,topsep=0.25ex]
    \setlength\itemsep{-0.3em}
    \item Since $\sqrt{2\log(np)}\approx\sqrt{2\log n}$,   in this experiment we simply set $\lambda=\sqrt{2\log n}$;
    \item Since $\frac{p}{n\log n}\ll 1$, approximately, the separation condition (\ref{separepr}) further reduces to 
    \[\Delta^2\ge 2\lambda^2\log n = (2\log n)^2,\]
    that is,
    \[\Delta=2\log n.\]
    As such, the separation condition is roughly $b=2a.$ 
\end{itemize}

\subsubsection{High-Dimensional Setting (Figure \ref{fig:sub2})}
Recall that we set $n=100$, $p=bn\log n$, $\Delta^2=\lambda^2(1+\sqrt{a})\log n$, and $\lambda=\sqrt{2\log(np)}$. For this experiment, the setting provided in the main text is already clear. Hence, we only derive the separation condition (on $(a,b)$) here. In fact, substituting $\Delta^2=\lambda^2(1+\sqrt{a})\log n$ and $p=bn\log n$ into (\ref{separepr}) gives
\[\lambda^2(1+\sqrt{a})\ge \lambda^2\bigg(1+\sqrt{1+2b}\bigg)\log n,\]
which then reduces to 
\[a=1+2b.\]
We note that \cite{ndaoud2022sharp} adopted the same experimental design. 
\subsection{Shrinking the Dithering Level (Figure \ref{fig:sub3})}
\label{app:shrink}
Although (\ref{lambdacon1}) is required in our theoretical results, in practice we suggest $\lambda =s\sqrt{2\log (np)} $ for some shrinkage parameter $0<s<1$ to draw a better bias-and-variance tradeoff (cf. Remark \ref{rem:bias-variance}), in a spirit similar to \cite[Remark 1]{chen2025parameter} for two-bit covariance estimation.  Indeed, in light of  
\[r_{n,\lambda} = \frac{\|\theta\|_2^2 /\lambda^2}{\sqrt{\frac{\|\theta\|_2^2}{\lambda^2}+\frac{p}{n}}},\]
an   larger $\lambda$ leads to smaller $c_{n,\lambda}$. Hence, using overly large $\lambda$ leads to performance degradation. While (\ref{lambdacon1}) in Theorem \ref{maintheorem} reads  
\[\lambda\ge \|\theta\|_\infty + \sigma\sqrt{2(1+\nu)\log(np)}\approx \sigma\sqrt{2\log(np)},\]
and ensures that  $\lambda\ge \max_{i\in[n]}\|X_i\|_\infty$ with high probability,  in practice, a smaller $\lambda$ could be sufficient for rendering small bias, and in this case using a smaller $\lambda$ typically leads to better estimation performance. 

 
\end{document}